\documentclass[%
 superscriptaddress,
 nofootinbib,
 amsmath,amssymb,
 aps,
 twocolumn, 
 prl,
 floatfix,
]{revtex4-2}

\setcounter{secnumdepth}{2} 

\usepackage{mathtools}
\usepackage{MnSymbol}
\usepackage{bm}
\usepackage[utf8]{inputenc}
\usepackage[T1]{fontenc}
\usepackage{physics}
\usepackage{dsfont}
\usepackage{amsthm}
\usepackage{hyperref}
\usepackage{floatrow}
\usepackage{graphicx}
\usepackage[caption=false]{subfig}
\floatsetup[figure]{style=plain,subcapbesideposition=top}
\usepackage{algorithm}
\usepackage{algpseudocode}
\usepackage{mathrsfs}
\usepackage{amssymb}
\usepackage{qcircuit}
\usepackage{enumerate}


\def\<{\langle}\def\>{\rangle}
 
\def\rank{\mathsf{rank}}

\def\Choi{\mathsf{Choi}}
\def\dim{\mathsf{dim}}

\def\bbra#1{\llangle#1\rvert}
\def\kett#1{\lvert#1\rrangle}
\def\Conv{\mathsf{Conv}}
\def\Supp{\mathsf{Supp}}
\DeclareMathOperator*{\argmax}{arg\,max}
\DeclareMathOperator*{\argmin}{arg\,min}

\def\Range{\mathsf{Range}}
\newcommand{\RNum}[1]{\uppercase\expandafter{\romannumeral #1\relax}}

\theoremstyle{plain}
\newtheorem{theorem}{Theorem}
\newtheorem{lemma}{Lemma}

\theoremstyle{definition}
\newtheorem{definition}{Definition}

\theoremstyle{remark}

\makeatletter 
    
\renewcommand\onecolumngrid{
\do@columngrid{one}{\@ne}%
\def\set@footnotewidth{\onecolumngrid}
\def\footnoterule{\kern-6pt\hrule width 1.5in\kern6pt}%
}

\renewcommand\twocolumngrid{
        \def\footnoterule{
        \dimen@\skip\footins\divide\dimen@\thr@@
        \kern-\dimen@\hrule width.5in\kern\dimen@}
        \do@columngrid{mlt}{\tw@}
}%

\makeatother    

\begin{document}

\preprint{APS/123-QED}

\title{Optimal Strategies of Quantum Metrology with a Strict Hierarchy}
\author{Qiushi Liu}
\email{qsliu@cs.hku.hk}
\affiliation{
 QICI Quantum Information and Computation Initiative, Department of Computer Science, The University of Hong Kong, Pokfulam Road, Hong Kong, China
}
\author{Zihao Hu}
\email{zhhu@mae.cuhk.edu.hk}
\affiliation{
Department of Mechanical and Automation Engineering, The Chinese University of Hong Kong, Shatin, Hong Kong, China
}
\author{Haidong Yuan}
\email{hdyuan@mae.cuhk.edu.hk}
\affiliation{
Department of Mechanical and Automation Engineering, The Chinese University of Hong Kong, Shatin, Hong Kong, China
}
\author{Yuxiang Yang}
\email{yuxiang@cs.hku.hk}
\affiliation{
 QICI Quantum Information and Computation Initiative, Department of Computer Science, The University of Hong Kong, Pokfulam Road, Hong Kong, China
}
\date{\today}

\begin{abstract}
One of the main quests in quantum metrology is to attain the ultimate precision limit with given resources, where the resources are not only of the number of queries, but more importantly of the allowed strategies. With the same number of queries, the restrictions on the strategies constrain the achievable precision. In this work, we establish a systematic framework to identify the ultimate precision limit of different families of strategies, including the parallel, the sequential, and the indefinite-causal-order strategies, and provide an efficient algorithm that determines an optimal strategy within the family of strategies under consideration. With our framework, we show there exists a strict hierarchy of the precision limits for different families of strategies.
\end{abstract}

\maketitle

\emph{Introduction}.—Quantum metrology \cite{Giovannetti2011,Degen2017RMP} features a series of promising applications in the near future \cite{Martinis2015}. In the prototypical setting of quantum metrology, the goal is to estimate an unknown parameter carried by a quantum channel, given $N$ queries to it. A pivotal task is to design a \emph{strategy} that utilizes these $N$ queries to generate a quantum state with as much information about the unknown parameter as possible. This often involves, for example, preparing a suitable input probe state \cite{Lee2002JMP,Buzek1999PRL,Kitagawa1993PRA} and applying intermediate quantum control \cite{Demkowicz-Dobrzanski14PRL,Yuan2015PRL,Yuan2016PRL,Pang2017} as well as quantum error correction \cite{Duer2014PRL,Kessler2014PRL,Demkowicz-Dobrza2017PRX,Zhou2018}.   

In reality, the implementation of strategies is subject to physical restrictions. In particular, within the noisy and intermediate-scale quantum (NISQ) era \cite{Preskill2018quantumcomputingin}, we have to adjust the strategy to accommodate the limitations on the system. For example, for systems with short coherence time it might be favorable to adopt the parallel strategy [Fig.~\ref{fig:all strategies}\protect\subref{subfig:parallel strategy}], where multiple queries of the unknown channel are applied simultaneously on a multipartite entangled state \cite{Lee2002JMP}. When the system has longer coherence time and can be better controlled, one could choose
to query the channel sequentially [Fig.~\ref{fig:all strategies}\protect\subref{subfig:sequential strategy}], which may potentially enhance the precision. In addition to the parallel and sequential strategies, it was recently discovered that the quantum SWITCH \cite{Chiribella2013PRA}, a primitive where the order of making queries to the unknown channel is in a quantum superposition [Fig.~\ref{fig:all strategies}\protect\subref{subfig:quantum switch strategy}], can be employed to generate new strategies of quantum metrology \cite{Chapeau-Blondeau2021PRA,mukhopadhyay2018superposition,Zhao2020PRL} that may even break the Heisenberg limit \cite{Zhao2020PRL}. Moreover, indefinite causal structures beyond the quantum SWITCH \cite{Chiribella2013PRA,Oreshkov2012,Araujo2015} [Figs. \ref{fig:all strategies}\protect\subref{subfig:causal superposition strategy} and \protect\subref{subfig:general strategy}] have recently been shown to further boost the performance of certain information processing tasks \cite{Quintino2019PRL,Bavaresco2021PRL}. The ultimate performance of these strategies in quantum metrology, however, remains unknown. This is mainly due to the lack of a systematic method that optimizes the probe state, the control and other degrees of freedom in a strategy in a unified fashion which leads to the ultimate precision limit.

\begin{figure}[!htbp]
    \captionsetup{position=t}
    \centering
    \sidesubfloat[]{\includegraphics[height=0.235\linewidth]{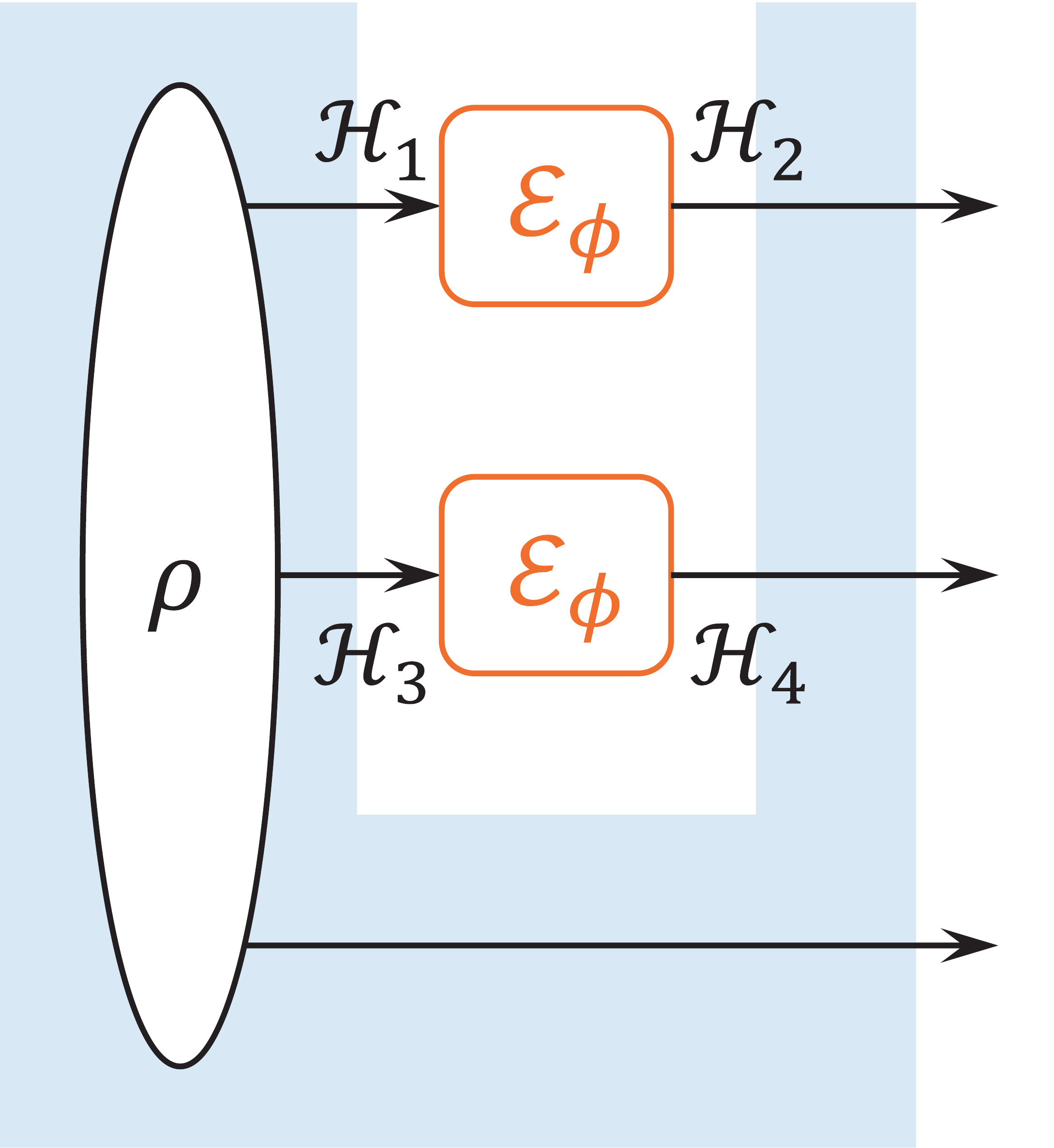}\label{subfig:parallel strategy}} 
    \sidesubfloat[]{\includegraphics[height=0.235\linewidth]{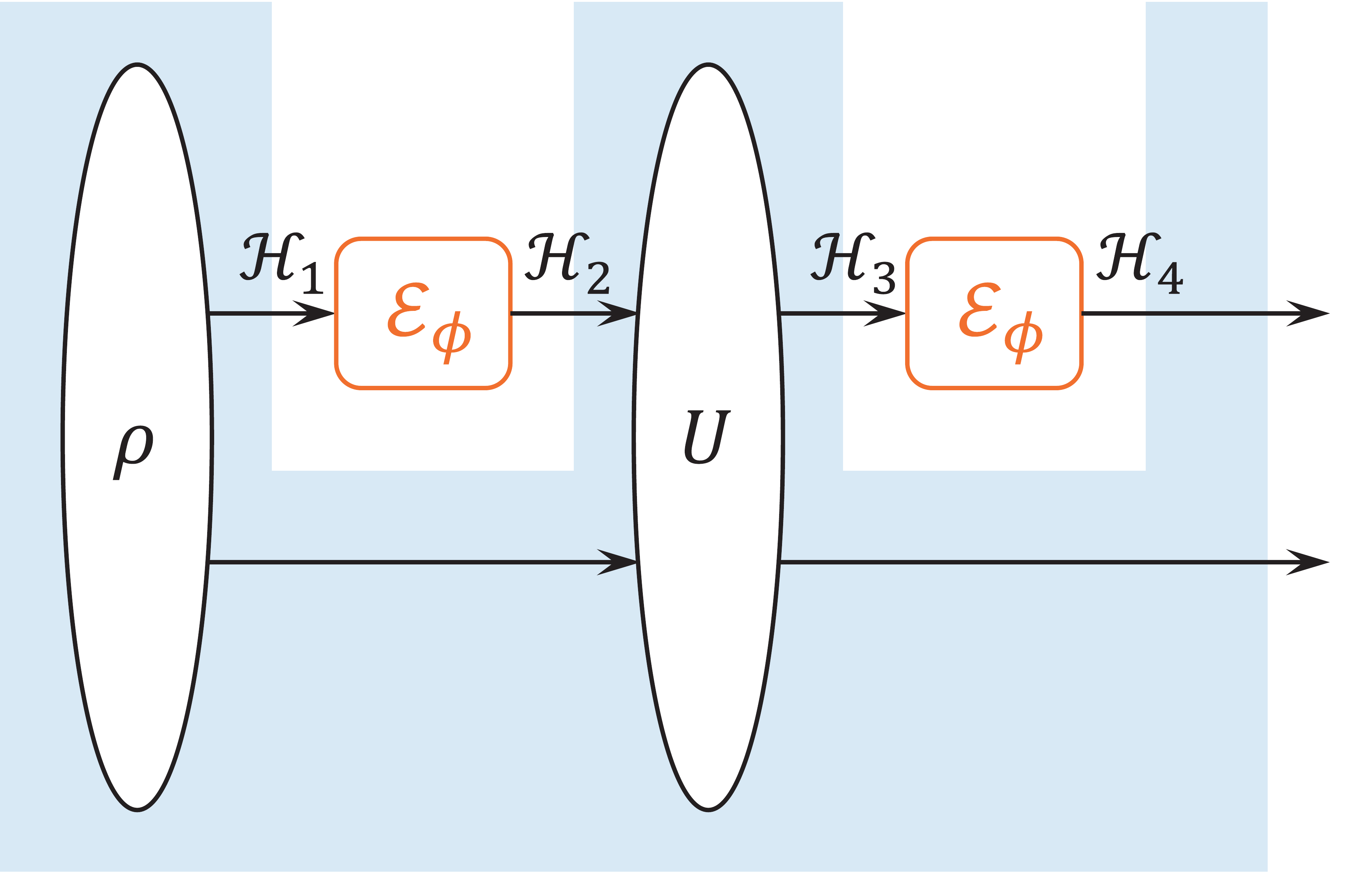}\label{subfig:sequential strategy}} \\ \vspace{0.5cm}
    \sidesubfloat[]{\includegraphics[height=0.235\linewidth]{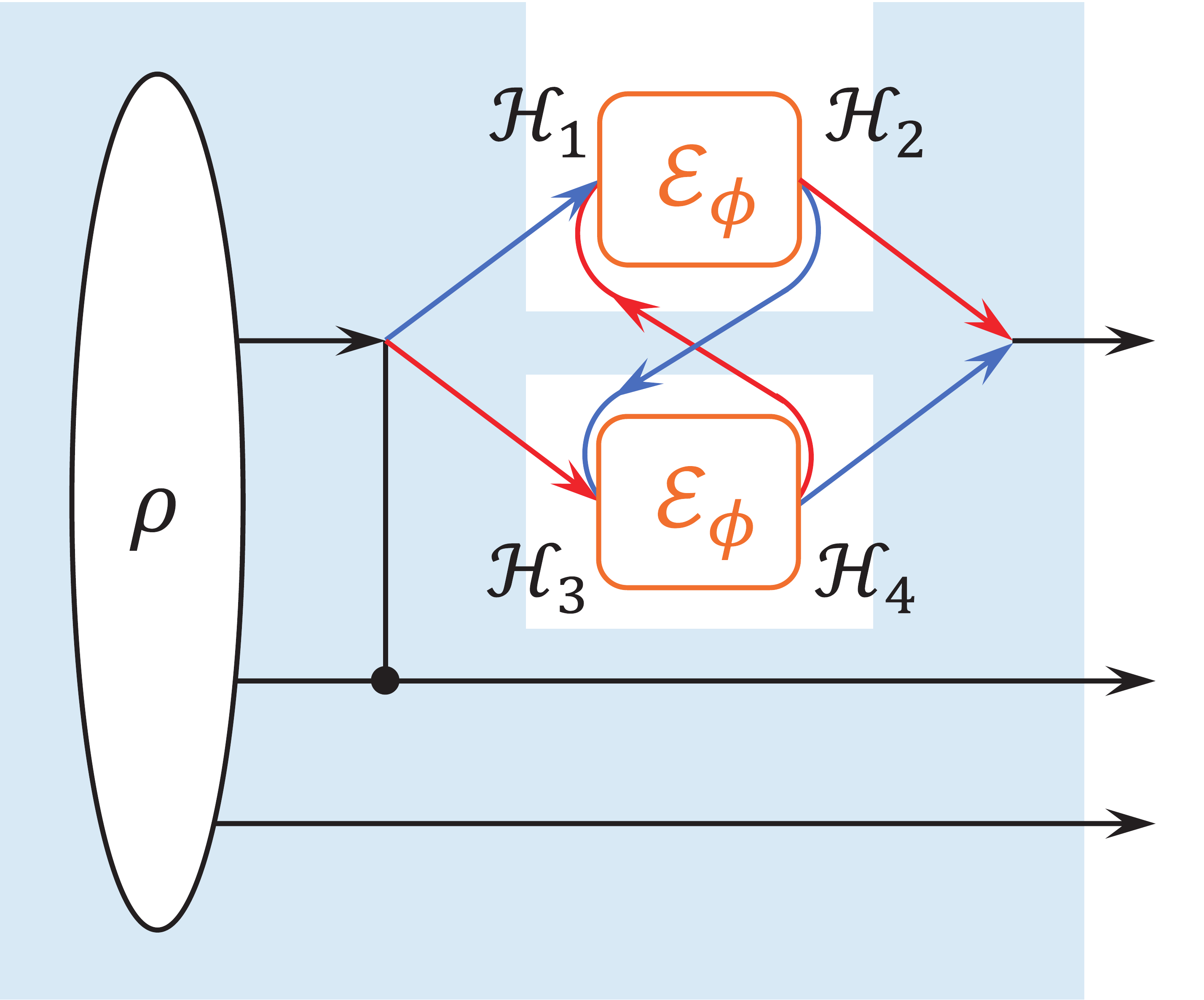} \label{subfig:quantum switch strategy}}
    \sidesubfloat[]{\includegraphics[height=0.235\linewidth]{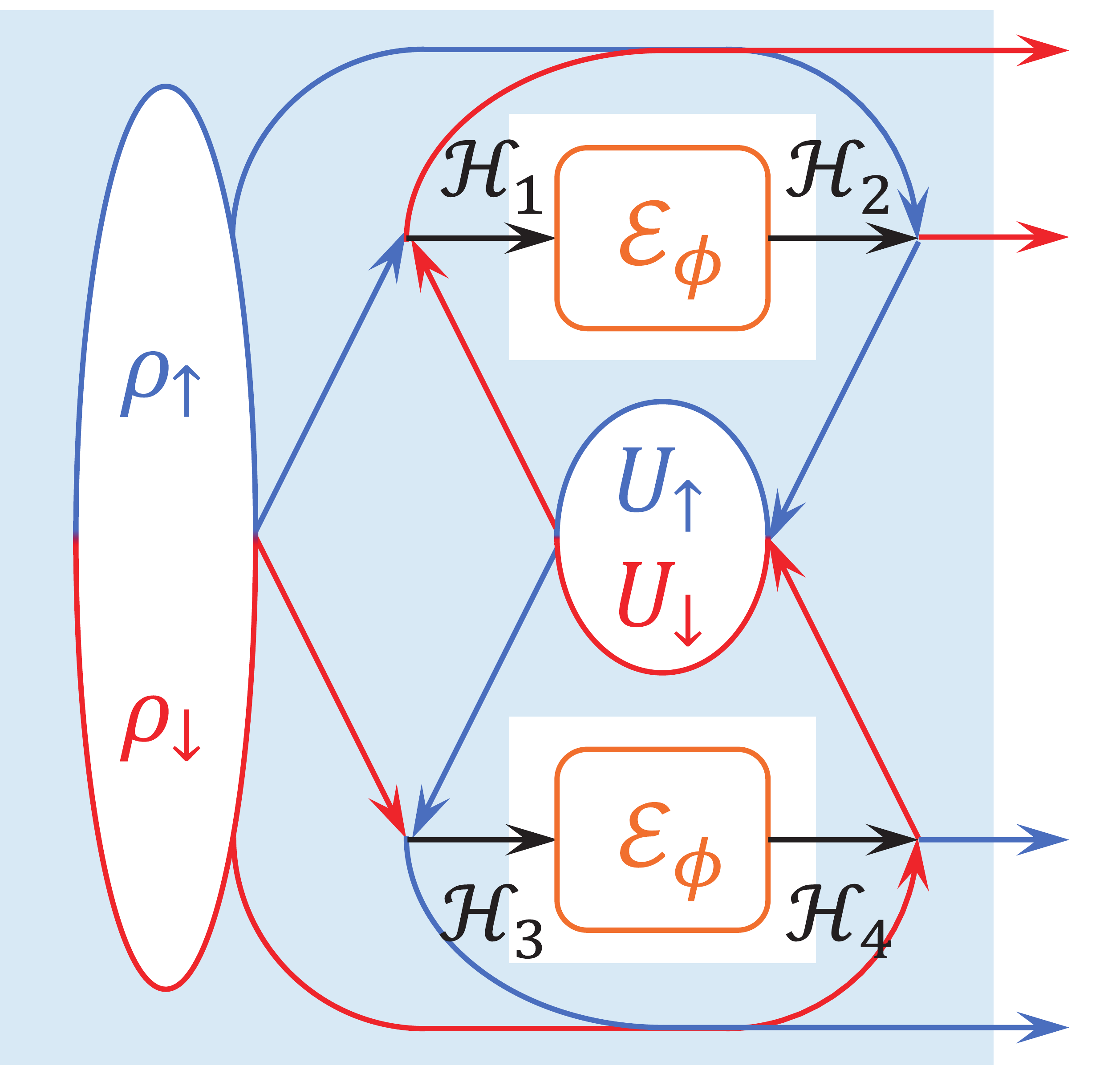}\label{subfig:causal superposition strategy}} 
    \sidesubfloat[]{\includegraphics[height=0.235\linewidth]{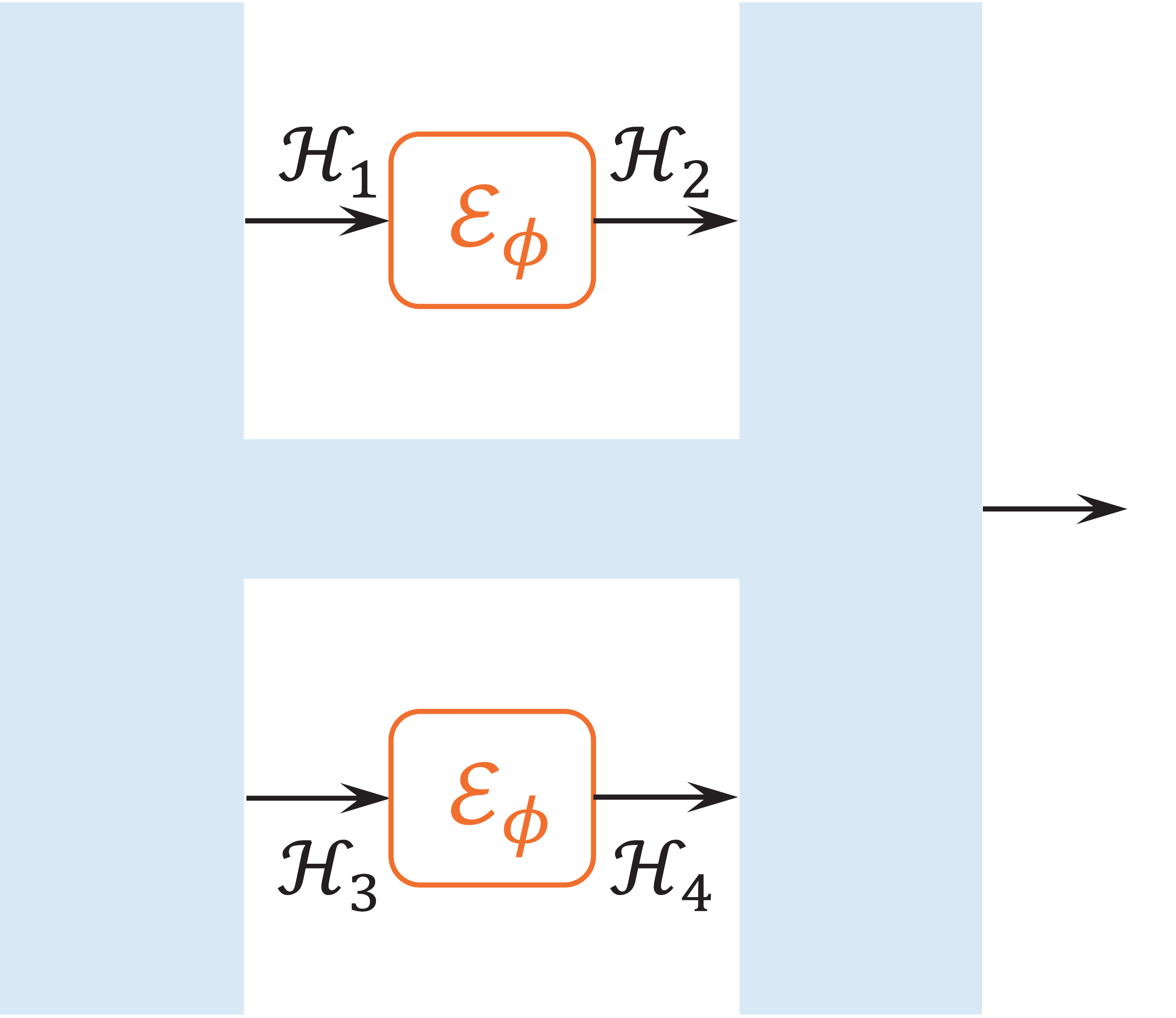}\label{subfig:general strategy}}
    \caption{\label{fig:all strategies}\textbf{Prototypical strategies of quantum metrology} (for the
    $N=2$ case). $\mathcal E_\phi$ is a quantum channel carrying an unknown parameter $\phi$, and the blue shaded area represents a \emph{strategy}. \protect\subref{subfig:parallel strategy} A parallel strategy. \protect\subref{subfig:sequential strategy} A sequential strategy, where $U$ is a control operation. \protect\subref{subfig:quantum switch strategy} A quantum SWITCH strategy. The blue and red lines, respectively, correspond to two different execution orders entangled with a control qubit. \protect\subref{subfig:causal superposition strategy} A causal superposition strategy. Two sequential strategies, plotted in blue and in red, respectively, are entangled with a control qubit (not shown in the figure) and the output will be measured with the control qubit collectively. \protect\subref{subfig:general strategy} A general indefinite-causal-order strategy.}
\end{figure} 

In this work, we develop a semidefinite programming (SDP) method of evaluating the optimal precision of single-parameter quantum metrology for finite $N$ (which we call \emph{the nonasymptotic regime}) over a family of admissible strategies. With this method, we show a strict hierarchy (see Fig. \ref{fig:qfi all strategies}) of the optimal performances under different families of strategies, which include the parallel, the sequential, and the indefinite-causal-order \cite{Chiribella2013PRA,Oreshkov2012,Araujo2015} ones (see Fig.~\ref{fig:all strategies}). In conjunction, we design an algorithm to obtain an optimal strategy achieving the highest precision. For the strategy set that admits a symmetric structure, we develop a method of reducing the complexity of our algorithms by an exponential factor.

\emph{Quantum Fisher information}.—The uncertainty $\delta \hat{\phi}$ of estimating an unknown parameter $\phi$ encoded in a quantum state $\rho_\phi$, for any unbiased estimator $\hat{\phi}$, can be determined via the quantum Cramér-Rao bound (QCRB) as  $\delta \hat{\phi} \ge 1/\sqrt{\nu J_Q(\rho_\phi)}$\cite{helstrom1976quantum,holevo2011probabilistic,Braunstein1994PRL}, where $J_Q(\rho_\phi)$ is the quantum Fisher information (QFI) of the state $\rho_\phi$ and $\nu$ is the number of repeated measurements \cite{Asymptotic_note}. For single-parameter estimation, the QCRB is achievable, and the QFI thus quantifies the amount of information that can be extracted from the quantum state. 
One way to compute the QFI is \cite{Fujiwara2008,Escher2011} 
\begin{equation} \label{eq:qfi_state}
    J_Q(\rho_\phi) = 4\min_{\Tr_A{(\dyad{\Psi_\phi})} = \rho_\phi}\braket{\dot{\Psi}_\phi},
\end{equation}
where $\ket{\Psi_\phi}$ is the purification of $\rho_\phi$ with an ancillary space $\mathcal H_A$, $\Tr_A$ denotes the partial trace over $\mathcal H_A$, and $\dot{\Psi}:=\partial \Psi /\partial \phi$. 
When the parameter is carried by a quantum channel $\mathcal E_\phi$, i.e., a completely positive trace-preserving (CPTP) map, the channel QFI can be defined as the maximal QFI of output states using the optimal input assisted by arbitrary ancillae \cite{Fujiwara2008,Demkowicz-Dobrzanski2012,yuan2017fidelity,yuan2017quantum}: $
    J_Q^{(\mathrm{chan})}(\mathcal E_\phi) = \max_{\rho_{\mathrm{in}} \in \mathcal S(\mathcal H_S \otimes \mathcal H_A)}J_Q[(\mathcal E_\phi \otimes \mathcal I_A)(\rho_{\mathrm{in}})]$, 
where $\mathcal S(\mathcal H)$ denotes the space of density operators on the Hilbert space $\mathcal H$, $\mathcal H_{S/A}$ denotes the Hilbert space of the system or ancillae, and $\mathcal I_A$ is the identity on $\mathcal H_A$.

We denote by $\mathcal L(\mathcal H)$ the set of linear operators on the finite-dimensional Hilbert space $\mathcal H$, and $\mathcal L[\mathcal L(\mathcal H_1), \mathcal L(\mathcal H_2)]$ denotes the set of linear maps from $\mathcal L(\mathcal H_1)$ to $\mathcal L(\mathcal H_2)$. By the Choi-Jamiołkowski (CJ) isomorphism, a parametrized quantum channel $\mathcal E_\phi \in \mathcal L[\mathcal L(\mathcal H_{2i-1}), \mathcal L(\mathcal H_{2i})]$ (for $1 \le i \le N$) can be represented by a positive semidefinite operator (called the CJ operator) $E_\phi = \Choi(\mathcal E_\phi) = \mathcal E_\phi \otimes \mathcal I(\kett I \bbra I)$, where $\kett I=\sum_i \ket{i} \ket{i}$. The CJ operator of $N$ identical quantum channels is $N_\phi = E_\phi^{\otimes N} \in \mathcal L\left(\otimes_{i=1}^{2N} \mathcal H_i\right)$.

\emph{Strategy set in quantum metrology}.—A \emph{strategy} is an arrangement of physical processes (the blue shaded area in Fig.~\ref{fig:all strategies}) which, when concatenated with given queries to $\mathcal{E}_{\phi}$, generates an output quantum state carrying the information about $\phi$. A strategy can be described by a CJ operator on $\mathcal L\left(\mathcal H_F\otimes_{i=1}^{2N} \mathcal H_i\right)$, where $\mathcal H_F$ denotes the output Hilbert space of the concatenation, referred to as the \emph{global future} space. The concatenation of two processes is characterized by the link product \cite{Chiribella2008PRL,Chiribella2009PRA} of two corresponding CJ operators $A\in\mathcal L\left(\otimes_{a\in \mathcal A}\mathcal H_a\right)$ and $B\in\mathcal L\left(\otimes_{b\in \mathcal B}\mathcal H_b\right)$ as
\begin{equation}
    A*B := \Tr_{\mathcal A\cap\mathcal B}\left[\left(\mathds{1}_{\mathcal B\backslash\mathcal A}\otimes A^{T_{\mathcal A\cap\mathcal B}}\right)\left(B\otimes\mathds{1}_{\mathcal A\backslash\mathcal B}\right)\right],
\end{equation}
where $T_i$ denotes the partial transpose on $\mathcal H_i$, and $\mathcal H_{\mathcal A/\mathcal B}$ denotes $\otimes_{i\in \mathcal A/\mathcal B} \mathcal H_{i}$. The output state lies in the global future $F$, which should not affect any state in the past. 

Following the above formalism, given a sufficiently large ancillary Hilbert space, a \emph{strategy set} determined by the relevant causal constraints is described by a subset $\mathsf{P}$ of
\begin{equation}
    \mathsf{Strat}:=\left\{ P \in \mathcal L\left(\mathcal H_F\otimes_{i=1}^{2N} \mathcal H_i\right)\middle|P\ge0,\rank(P)=1 \right\}.
\end{equation}
Here, without loss of generality \cite{purification_note}, we have restricted $P$ to pure processes (rank-1 operators) due to the monotonicity of QFI \cite{petz2008quantum}. Our goal is to identify the ultimate precision limit of parameter estimation characterized by the QFI within such constraints: 
\begin{definition} \label{def:qfi}
The QFI of $N$ quantum channels $\mathcal E_\phi$ \cite{Yang2019PRL} given a strategy set $\mathsf P$ is 
\begin{equation} 
    J^{(\mathsf P)}(N_\phi) := \max_{P \in \mathsf P} J_Q(P * N_\phi),
\end{equation}
where $J_Q(\rho)$ is the QFI of the state $\rho$, and $N_\phi$ is the CJ operator of $N$ channels.
\end{definition}

In general we can write the ensemble decomposition \cite{Fujiwara2008} of the CJ operator $N_\phi$ as $N_\phi = \sum_{i=1}^{r}\dyad{N_{\phi,i}} = \mathbf N_\phi\mathbf N_\phi^\dagger$, where $\mathbf N_\phi := \left(\ket{N_{\phi,1}},\dots,\ket{N_{\phi,r}}\right)$ and $r := \max_\phi\rank({N_\phi})$. We also define $\dot{\tilde{\mathbf N}}_\phi:=\dot{\mathbf N}_\phi-\mathrm i \mathbf N_\phi h$ for $h \in \mathbb H_r$, where $\mathbb H_r$ denotes the set of $r\times r$ Hermitian matrices, and the performance operator \cite{Altherr2021PRL}
\begin{equation}
    \Omega_\phi(h) := 4\left(\dot{\tilde{\mathbf N}}_\phi\dot{\tilde{\mathbf N}}_\phi^\dagger\right)^T.
\end{equation}

With these notions, we can show (see Appendix \ref{app:lemma qfi}, which is analogous to the approach in \cite{Altherr2021PRL}) that the QFI admits the form
\begin{equation} \label{eq:qfi_primal}
    J^{(\mathsf P)}(N_\phi)  
    = \max_{\tilde{P} \in \tilde{\mathsf P}} \min_{h \in \mathbb H_r} \Tr\left[\tilde{P}\Omega_\phi(h)\right],
\end{equation}
with 
\begin{equation} \label{eq:marginal strategy set}
    \tilde{\mathsf P} := \left\{\tilde{P} = \Tr_FP \middle| P \in \mathsf P\right\}.
\end{equation}

To evaluate the QFI, we first exchange $\max_{\tilde{P}}$ and $\min_h$ without changing the optimal QFI, assured by the minimax theorem \cite{Fan1953, rockafellar1970convex} since the objective function is concave on $\tilde{P}$ and convex on $h$ \cite{minimax_note}. Hence, the problem is cast into
\begin{equation} \label{eq:primal qfi min max}
    J^{(\mathsf P)}(N_\phi) = \min_h\max_{\tilde P\in\tilde{\mathsf P}}\Tr\left[\tilde P\Omega_\phi(h)\right].
\end{equation}
Then we fix $h$ and formulate the dual problem of maximization over the set $\tilde{\mathsf P}$. Finally we further optimize the value of $h$. To simplify the calculation of QFI we require that
\begin{equation} \label{eq:condition of thm}
    \tilde{\mathsf P} = \Conv\left\{\bigcup_{i=1}^{K}\left\{S^i\ge0\middle|S^i\in\mathsf S^i\right\}\right\}, 
\end{equation}
where $\Conv\{\cdot\}$ denotes the convex hull, and each $\mathsf S^i$ for $i=1,\dots,K$ is an affine space of Hermitian operators. Adopting the above-mentioned method, we get
\begin{theorem} \label{thm:qfi}
Given an arbitrary strategy set $\mathsf{P}$ such that $\tilde{\mathsf P}$ given by Eq.~(\ref{eq:marginal strategy set}) satisfies the condition Eq.~(\ref{eq:condition of thm}), the QFI of $N$ quantum channels $\mathcal E_\phi$ can be expressed as the following optimization problem:
\begin{equation} \label{eq:thm}
    \begin{aligned}
    J^{(\mathsf P)}(N_\phi) =& \min_{\lambda,Q^i,h} \lambda, \\
    \mathrm{s.t.}\ &\,\lambda Q^i \ge \Omega_\phi(h),\ Q^i\in \overline{\mathsf S}^i,\ i=1,\dots,K,
    \end{aligned} 
\end{equation}
where $\overline{\mathsf {S}}^i:=\left\{Q\mathrm{\ is\ Hermitian}\middle|\Tr(QS)=1,S\in\mathsf S^i\right\}$ is the dual affine space of $\mathsf S^i$.
\end{theorem}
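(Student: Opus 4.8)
\emph{Proof idea}.—The plan is to follow the route sketched above Eq.~(\ref{eq:thm}): start from the minimax form Eq.~(\ref{eq:primal qfi min max}), $J^{(\mathsf P)}(N_\phi)=\min_h\max_{\tilde P\in\tilde{\mathsf P}}\Tr[\tilde P\,\Omega_\phi(h)]$, dualize the inner maximization over $\tilde{\mathsf P}$ for each fixed $h$, and only at the end reinstate $\min_h$. First I would record that $\Omega_\phi(h)=4(\dot{\tilde{\mathbf N}}_\phi\dot{\tilde{\mathbf N}}_\phi^\dagger)^T\ge0$ and that, for fixed $h$, the objective is linear in $\tilde P$. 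Since Eq.~(\ref{eq:condition of thm}) presents $\tilde{\mathsf P}$ as the convex hull of the union of the convex sets $\mathsf S^i\cap\{S\ge0\}$, the maximum of a linear functional over it equals the largest value over the generating union, so
\begin{equation}
  \max_{\tilde P\in\tilde{\mathsf P}}\Tr\left[\tilde P\,\Omega_\phi(h)\right] = \max_{i=1,\dots,K}\ \max_{S^i\in\mathsf S^i,\,S^i\ge0}\Tr\left[S^i\,\Omega_\phi(h)\right].
\end{equation}

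Next I would dualize each inner semidefinite program. Writing $\mathsf S^i=S^i_0+\mathcal V_i$ with $\mathcal V_i$ its direction subspace, the Lagrange dual of $\max\{\Tr[S\,\Omega_\phi(h)]:S\ge0,\,S\in\mathsf S^i\}$ is $\min\{\Tr[Q\,S^i_0]:Q\in\mathcal V_i^\perp,\ Q\ge\Omega_\phi(h)\}$. The key observation is that for $Q\in\mathcal V_i^\perp$ the number $\Tr[Q\,S]$ is independent of the choice of $S\in\mathsf S^i$, so a dual-feasible $Q$ with objective value $\lambda:=\Tr[Q\,S^i_0]>0$ rescales to $Q^i:=Q/\lambda\in\overline{\mathsf S}^i$ obeying $\lambda Q^i\ge\Omega_\phi(h)$, and conversely; hence the dual equals $\min\{\lambda:\lambda Q^i\ge\Omega_\phi(h),\ Q^i\in\overline{\mathsf S}^i\}$ (the case $\Omega_\phi(h)=0$ being trivial). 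Strong duality, with attainment on the primal side, would follow from Slater's condition on the dual: every admissible strategy marginal in $\mathsf S^i$ carries the same fixed trace $d$, so $\mathds{1}\in\mathcal V_i^\perp$ and $c\,\mathds{1}$ is strictly dual-feasible for large $c$; the same remark gives $\mathds{1}/d\in\overline{\mathsf S}^i$, which I use below.

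Having arrived at $J^{(\mathsf P)}(N_\phi)=\min_h\max_i\min_{\lambda_i,\,Q^i\in\overline{\mathsf S}^i:\ \lambda_i Q^i\ge\Omega_\phi(h)}\lambda_i$, the last step merges the nested optimizations into the single program Eq.~(\ref{eq:thm}). That Eq.~(\ref{eq:thm}) is an upper bound is immediate: any feasible triple $(\lambda,\{Q^i\},h)$ there is feasible for all $K$ inner problems with the same $\lambda$. For the matching lower bound, at the optimal $h$ let $(\lambda_i^*,Q_i^*)$ attain the $i$-th inner minimum and set $\lambda^*:=\max_i\lambda_i^*=J^{(\mathsf P)}(N_\phi)$; then $Q^i:=(\lambda_i^*/\lambda^*)Q_i^*+(1-\lambda_i^*/\lambda^*)\mathds{1}/d$ lies in the affine space $\overline{\mathsf S}^i$ and satisfies $\lambda^*Q^i=\lambda_i^*Q_i^*+(\lambda^*-\lambda_i^*)\mathds{1}/d\ge\Omega_\phi(h)$, so $(\lambda^*,\{Q^i\},h)$ is feasible for Eq.~(\ref{eq:thm}); pulling $\min_h$ back outside yields Eq.~(\ref{eq:thm}) exactly.

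I expect the dualization to be the main obstacle: one must form the SDP dual correctly, notice that the dual operator divided by its ($h$-independent, affine) objective value lands precisely in the dual affine space $\overline{\mathsf S}^i$, and verify the constraint qualification. The only structural ingredient needed, namely that all marginals in a given $\mathsf S^i$ share a fixed normalization (equivalently $\mathds{1}\in\mathcal V_i^\perp$), is what powers both the strong-duality step and the final convex-combination argument, and it holds for each strategy family treated here (parallel, sequential, causal-superposition, and general indefinite-causal-order).
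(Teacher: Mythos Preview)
Your proposal is correct and reaches the same conclusion as the paper, but the dualization is organized differently. The paper keeps the convex-combination weights $q^i$ as primal variables, writes one joint Lagrangian over $\bigl(P^i=q^iS^i,\,q^i\bigr)$ subject to $\sum_i q^i=1$, and reads off directly that the dual constraints are $\sum_j\lambda^{i,j}Q^{i,j}\ge\Omega_\phi(h)$ with $\sum_j\lambda^{i,j}=\lambda$; defining $Q^i:=\sum_j\lambda^{i,j}Q^{i,j}/\lambda\in\overline{\mathsf S}^i$ then gives Eq.~(\ref{eq:thm}) in one stroke. You instead first use linearity to reduce the max over $\Conv\{\cdot\}$ to $\max_i$, dualize each of the $K$ SDPs separately, and then merge the $K$ duals via the affine-combination trick $Q^i=(\lambda_i^*/\lambda^*)Q_i^*+(1-\lambda_i^*/\lambda^*)\mathds{1}/d$. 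Both routes rely on the same structural fact---that $\Tr S$ is constant on each $\mathsf S^i$, so $\mathds{1}/d\in\overline{\mathsf S}^i$---which you identify explicitly and the paper invokes when dropping $q^i\ge0$ and when checking Slater. Your version is a bit more modular and makes the role of $\overline{\mathsf S}^i$ as the SDP dual feasible set transparent; the paper's joint dual avoids the extra merge step. One small point: the Slater condition you check (strict feasibility of the dual via $c\,\mathds{1}$) guarantees strong duality and \emph{primal} attainment, not automatically attainment of each inner $(\lambda_i^*,Q_i^*)$; the paper is equally informal here, and in any case an $\epsilon$-approximation argument closes the gap.
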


The proof can be found in Appendix \ref{app:thm qfi}. We remark that similar optimization ideas have been applied to other tasks, such as quantum Bayesian estimation \cite{Chiribella_2012}, quantum network optimization \cite{Chiribella_2016}, non-Markovian quantum metrology \cite{Altherr2021PRL}, and quantum channel discrimination \cite{Bavaresco2021PRL}. The minimization problem in Theorem \ref{thm:qfi} can be further written in the form of SDP and solved efficiently, with detailed numerically solvable forms given in Appendix \ref{app:qfi all strategies}, where the constraints in Eq. (\ref{eq:thm}) can be further simplified in some cases.

\emph{Optimal strategies}.—By itself, the QFI does not reveal how to implement the optimal strategy achieving the highest precision. Here, in addition to Theorem \ref{thm:qfi}, we design an algorithm that yields a strategy attaining the optimal QFI for any strategy set satisfying Eq.~(\ref{eq:condition of thm}). The method, which generalizes the method of finding an optimal probe state for a single channel \cite{PhysRevResearch.2.013235,PRXQuantum.2.010343}, is summarized as Algorithm \ref{alg:optimal probe} (see Appendix \ref{app:alg optimal probe} for its derivation).

\begin{algorithm} [!htbp]
\caption{Find an optimal strategy in the set $\mathsf P$.}\label{alg:optimal probe}
\begin{enumerate}[{(i)}]
    \item Given $N_\phi$ the CJ operator of $N$ channels, solve for an optimal value $h=h^{(\mathrm{opt})}$ in Eq. (\ref{eq:thm}) of Theorem \ref{thm:qfi} via SDP.
    \item Fixing $h=h^{(\mathrm{opt})}$, solve for an optimal value $\tilde P^{(\mathrm{opt})}$ of $\tilde P\in\mathsf{\tilde P}$ in Eq. (\ref{eq:qfi_primal}) via SDP such that
    \begin{equation} \label{eq:optimal condition}
    \begin{aligned}
        \Re \left\{\Tr \left\{\tilde P^{(\mathrm{opt})}\left[-\mathrm i \mathbf N_\phi \mathscr H\left(\dot{\mathbf N}_\phi-\mathrm i \mathbf N_\phi h^{(\mathrm{opt})}\right)^{\dagger}\right]^T\right\} \right\} \\
        =0\ \mathrm{for}\ \mathrm{all}\ \mathscr H \in \mathbb H_r,
    \end{aligned}
    \end{equation}
    where $\mathbf N_\phi := \left(\ket{N_{\phi,1}},\dots,\ket{N_{\phi,r}}\right)$. An optimal strategy $P^{(\mathrm{opt})} \in \mathsf P$ can be taken as a purification of $\tilde P^{(\mathrm{opt})}$.
\end{enumerate}
\end{algorithm}

By Algorithm \ref{alg:optimal probe} we obtain the CJ operator of a strategy that attains the optimal QFI. For strategies following definite causal order, there exists an operational method of mapping the CJ operator of the strategy to a probe state and a sequence of in-between control operations with minimal memory space \cite{Bisio2011PRA}. For causal order superposition strategies (see the strategy set $\mathsf{Sup}$), we show that they can always be implemented by controlling the order of operations in a circuit with a quantum SWITCH (see Appendix \ref{app:implementation of strategies}). In this way, we obtain a systematic method to identify optimal sequential and causal superposition strategies, one of the key problems in quantum metrology.    

\emph{Strategy sets}.—We consider the evaluation of QFI for five different families of strategies. In all the following definitions the subscript $i$ of an operator denotes the Hilbert space $\mathcal H_i$ it acts on. 
 
The family of parallel strategies [see Fig.~\ref{fig:all strategies}\protect\subref{subfig:parallel strategy}] is the first and one of the most successful examples of quantum-enhanced metrology, featuring the usage of entanglement to achieve precision beyond the classical limit \cite{Giovannetti2006PRL}. By making parallel use of $N$ quantum channels together with ancillae, we can regard these $N$ channels as one single channel from $\mathcal L\left(\otimes_{i=1}^{N}\mathcal H_{2i-1}\right)$ to $\mathcal L\left(\otimes_{i=1}^{N}\mathcal H_{2i}\right)$. A parallel strategy set $\mathsf{Par}$ is defined as the collection of $P \in \mathsf{Strat}$ such that \cite{Chiribella2009PRA}
\begin{equation}
     \Tr_F{P}= \mathds{1}_{2,4,\dots,2N} \otimes P^{(1)},\ \Tr P^{(1)}=1.
\end{equation}
Note that the optimal QFI of parallel strategies can also be evaluated using the method in \cite{Fujiwara2008,Demkowicz-Dobrzanski2012}.

A more general protocol is to allow for sequential use of $N$ channels assisted by ancillae, where only the output of the former channel can affect the input of the latter channel, and any control gates can be inserted between channels [see Fig.~\ref{fig:all strategies}\protect\subref{subfig:sequential strategy}]. A sequential strategy set $\mathsf{Seq}$ is defined as the collection of $P \in \mathsf{Strat}$ such that \cite{Chiribella2009PRA}
\begin{equation}
    \begin{aligned}
    &\Tr_F P= \mathds{1}_{2N} \otimes P^{(N)},\ \Tr P^{(1)}=1,\\
    &\Tr_{2k-1} P^{(k)} =  \mathds{1}_{2k-2} \otimes P^{(k-1)},\ k=2,\dots, N.
    \end{aligned}
\end{equation}
Unlike the case of parallel strategies, there is no existing way of evaluating the exact QFI using sequential strategies.

We also consider families of strategies involving indefinite causal order. The first one, denoted by $\mathsf{SWI}$, takes advantage of the (generalized) quantum SWITCH \cite{COLNAGHI20122940,Araujo2014PRL}, where the execution order of $N$ channels is entangled with the state of an $N!$-dimensional control system  [see Fig.~\ref{fig:all strategies}\protect\subref{subfig:quantum switch strategy}]. See Appendix \ref{app:qfi all strategies} for the formal definition.

More generally, we consider the quantum superposition of multiple sequential orders, each with a unique order of querying the $N$ channels [see Fig.~\ref{fig:all strategies}\protect\subref{subfig:causal superposition strategy}]. This can be implemented by entangling $N!$ definite causal orders with a quantum control system \cite{Wechs2021PRXQuantum}. If $N=2$ and the control system is traced out, this notion is equivalent to causal separability \cite{Oreshkov2012,Araujo2015}. A causal superposition strategy set $\mathsf{Sup}$ is defined as the collection of $P \in \mathsf{Strat}$ such that
\begin{equation}
    \begin{aligned}
    &\Tr_FP= \sum_\pi q^\pi P^{\pi},\ \sum_{\pi\in S_N} q^\pi=1, \\
    &\,P^{\pi} \in \mathsf{Seq}^{\pi},\ q^\pi \ge 0,\ \pi\in S_N, 
    \end{aligned}
\end{equation}
where each permutation $\pi$ is an element of the symmetric group $S_N$ of degree $N$, and each $\mathsf{Seq}^{\pi}$ denotes a sequential strategy set whose execution order of $N$ channels is $\mathcal E_\phi^{\pi(1)}\rightarrow\mathcal E_\phi^{\pi(2)}\rightarrow\cdots\rightarrow\mathcal E_\phi^{\pi(N)}$, having denoted by $\mathcal E_\phi^k$ the channel from $\mathcal L(\mathcal H_{2k-1})$ to $\mathcal L(\mathcal H_{2k})$. Note that $\mathsf{SWI}$  is a subset of $\mathsf{Sup}$, where the intermediate control is trivial. There are other strategies, such as quantum circuits with quantum controlled casual order (QC-QCs) and probabilistic QC-QCs \cite{Wechs2021PRXQuantum,Purves2021}, which we will not discuss here.

Finally, we introduce the family of general indefinite-causal-order strategy [see Fig.~\ref{fig:all strategies}\protect\subref{subfig:general strategy}], which is the most general strategy set considered in this work. Here the only requirement is that the concatenation of the strategy $P$ with $N$ arbitrary channels results in a legitimate quantum state. The causal relations in this case  \cite{Araujo2015} are a bit cumbersome, but for our purpose what matters is the dual affine space (see Theorem \ref{thm:qfi}), which is simply the space of no-signaling channels \cite{Chiribella2013PRA,Chiribella_2016}. A general indefinite-causal-order strategy set $\mathsf{ICO}$ is defined as the collection of $P \in \mathsf{Strat}$ such that
\begin{equation}
    \rho_F = P*\left(\otimes_{j=1}^N E^j\right),\ \rho_F \ge 0,\ \Tr\rho_F = 1,
\end{equation}
for any $E^j \in \mathcal L(\mathcal H_{2j-1}\otimes\mathcal H_{2j}\otimes\mathcal H_{A_j})$ that denotes the CJ operator of an arbitrary quantum channel with an arbitrary ancillary space $\mathcal H_{A_j}$. 

We note that, unlike the previous strategies that can always be physically realized, the physical realization of the general $\mathsf{ICO}$ is untraceable \cite{Purves2021,Wechs2021PRXQuantum}. The optimal value obtained with general $\mathsf{ICO}$ nevertheless serves as a useful tool that can gauge the performances of different strategies. For example, as we will show, in some cases the optimal QFI $J^{(\mathsf{Sup})}$ and $J^{(\mathsf{ICO})}$ are equal or nearly equal. This then shows that the physically realizable strategy obtained from the set $\mathsf{Sup}$ is already optimal or nearly optimal among all possible strategies, which we will not be able to tell without $J^{(\mathsf{ICO})}$.

\emph{Symmetry reduced programs for optimal metrology}.—The complexity of the original optimization problems in Theorem \ref{thm:qfi} and Algorithm \ref{alg:optimal probe} can be reduced by exploiting the permutation symmetry. In Appendix \ref{app:symmetry optimization}, we prove that we can choose a permutation-invariant matrix $h$ for Theorem \ref{thm:qfi} and solve for a permutation-invariant optimal strategy \cite{invariant_strategy_note} by Algorithm \ref{alg:optimal probe} based on this choice, if any permutation $\pi\in S_N$ bijectively maps each affine space $\mathsf S^i$ [in Eq.~(\ref{eq:condition of thm})] to some affine space $\mathsf S^j$. That is, for any $\pi\in S_N$ and any $i$, there exists a $j$ such that the mapping $S\mapsto G_\pi SG_\pi^\dagger$ on $\mathsf S^i$ is a bijective function from  $\mathsf S^i$ to  $\mathsf S^j$, where $G_\pi$ is a unitary representation of $\pi$. Furthermore, if each space $\mathsf S^i$ itself is permutation invariant, we can restrict each $Q^i\in \overline{\mathsf{S}}^i$ to be permutation invariant, further reducing the complexity of optimization. For both optimization problems we can apply the technique of group-invariant SDP to reduce the size as there exists an isomorphism which preserves positive semidefiniteness, from the permutation-invariant subspace to the space of block-diagonal matrices \cite[Theorem 9.1]{Bachoc2012}. Table \ref{tab:complexity} compares the number of variables involved in QFI evaluation with and without exploiting the symmetry (see Appendix \ref{app:qfi all strategies} for its derivation as well as Appendix \ref{app:complexity} for the complexity of Algorithm \ref{alg:optimal probe}, where by group-invariant SDP we also numerically evaluate the growth of QFI $J^{(\mathsf{ICO})}$ up to $N=5$).

\begin{table}[!htbp]
\caption{\label{tab:complexity} \textbf{Complexity of QFI evaluation for each family of strategies} (with repect to $N$). The asymptotic numbers of variables in optimization are compared between the original (Ori.) and group-invariant (Inv.) SDP. We denote $d:=\dim(\mathcal H_1)\dim(\mathcal H_2)$ and $s:=\max_\phi\rank(E_\phi)\le d$.}
\begin{ruledtabular}
\begin{tabular}{cccccc}
SDP & $\mathsf{Par}$ & $\mathsf{Seq}$ & $\mathsf{SWI}$ & $\mathsf{Sup}$ & $\mathsf{ICO}$ \\
\colrule
Ori. & $O\left(s^N\right)$ & $O\left(d^N\right)$ & $O\left(s^N\right)$ & $O\left(N!\,d^N\right)$ & $O\left(d^N\right)$ \\
Inv. & $O\left(N^{d^2-1}\right)$ & $O\left(d^N\right)$ & $O\left(N^{s^2-1}\right)$ & $O\left(d^N\right)$ & $O\left(N^{d^2-1}\right)$\\
\end{tabular}
\end{ruledtabular}
\end{table}

\emph{Hierarchy of strategies}.—By substituting the definitions of different strategy sets into Theorem \ref{thm:qfi}, we obtain the \emph{exact} values of the optimal QFI. We find that a strict hierarchy of QFI exists quite prevalently. For demonstration purposes, here we show only the result for the amplitude damping channel for $N=2$ and supplement our findings with bountiful numerical results in Appendix \ref{app:supplementary numerical results}. In this case, the process encoding $\phi$ is a $z$ rotation $U_z(\phi) = e^{-\mathrm i\phi t\sigma_z/2}$, where $t$ is the evolution time, followed by an amplitude damping channel described by two Kraus operators: $K_1^{(\mathrm{AD})} = \dyad{0} + \sqrt{1-p} \dyad{1}$ and $K_2^{(\mathrm{AD})} = \sqrt{p}\ketbra{0}{1}$, with the decay parameter $p$. 

\begin{figure} [!bhtp]
    \centering
    \includegraphics[width=\linewidth]{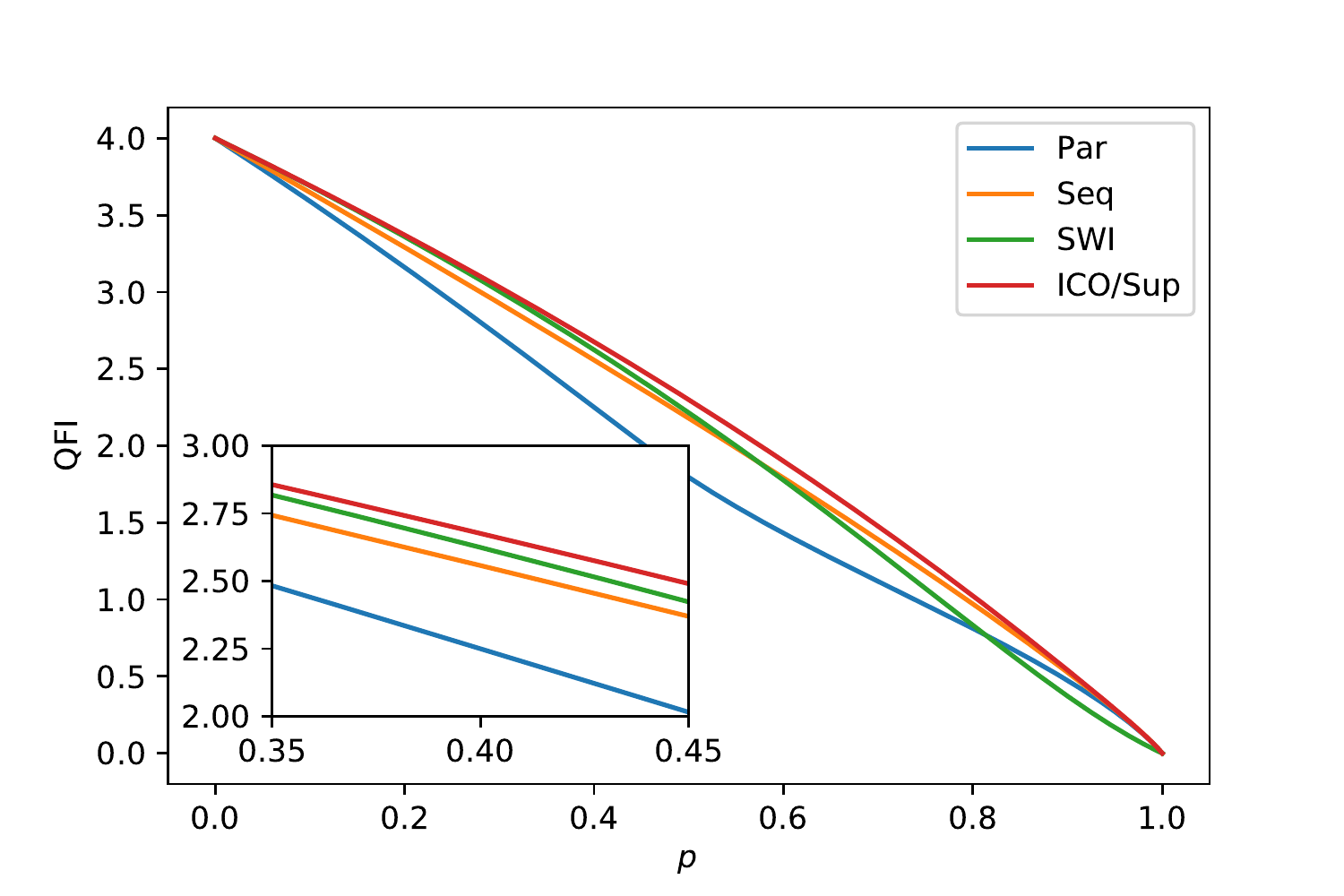}
    \caption{\textbf{Hierarchy of QFI using parallel, sequential, and indefinite-causal-order strategies.} We take $N=2$ and $\phi=1.0$, fix $t=1.0$ and vary the decay parameter $p$. The gaps can be seen more clearly by zooming in on the interval $[0.35,0.45]$ of the value of $p$. $J^{(\mathsf{Sup})}=J^{(\mathsf{ICO})}$ with an error tolerance of no more than $10^{-8}$ in this case, but the gap between $J^{(\mathsf{Sup})}$ and $J^{(\mathsf{ICO})}$ could exist  for larger $N$ or for other types of noise, which can be observed by randomly sampling noise channels.}
    \label{fig:qfi all strategies}
\end{figure}

In Fig.~\ref{fig:qfi all strategies} we plot the QFI versus $p$ for the amplitude damping noise with all 5 strategy sets for $N=2$.  A strict hierarchy of $\mathsf{Par}$, $\mathsf{Seq}$ and $\mathsf{ICO}$ holds if $p$ is neither 1 nor 0, i.e., $J^{(\mathsf{Par})}<J^{(\mathsf{Seq})}<J^{(\mathsf{ICO})}$. This is in contrast to the asymptotic regime of $N\to\infty$, where the relative difference between $J^{(\mathsf{Seq})}$ and $J^{(\mathsf{Par})}$ vanishes for this channel \cite{PRXQuantum.2.010343}. Besides, in this case general $\mathsf{ICO}$ cannot strictly outperform $\mathsf{Sup}$, implying that causally superposing two sequential strategies is sufficient to achieve the general optimality in this particular scenario. The gap between $J^{(\mathsf{Sup})}$ and $J^{(\mathsf{ICO})}$, however, could be observed for the same channel with larger $N$ or for other channels when $N=2$ (see Appendix \ref{app:supplementary numerical results}). In fact, by randomly sampling noise channels from CPTP channel ensembles, we find that for 984 of 1000 random channels, a strict hierarchy $J^{(\mathsf{Par})}<J^{(\mathsf{Seq})}<J^{(\mathsf{Sup})}<J^{(\mathsf{ICO})}$ holds for $N=2$, implying that there exist more powerful strategies than causal superposition strategies in these cases. We note that a strict hierarchy of strategies has been found for channel discrimination in  \cite{Bavaresco2021PRL}, but much less is known in quantum metrology until our work.

Our method can also test the tightness of existing QFI bounds in the nonasymptotic regime, which has seldom been done until this work. Here we take the commonly used, asymptotically tight \cite{PRXQuantum.2.010343} upper bound for parallel strategies [see \cite[Theorem 4 and Eq.~(17)]{Fujiwara2008} or \cite[Eq.~(16)]{Demkowicz-Dobrzanski2012}]. For $p=0.5$, our result shows that the exact parallel QFI $J^{(\mathsf{Par})} = 1.795$ is $32.7\%$ lower than the asymptotically tight parallel upper bound $2.667$, and even the exact sequential QFI $J^{(\mathsf{Seq})} = 2.179$ is $18.3\%$ lower than this parallel upper bound \cite{seq_bound_note}. Similar phenomena are observed in other noise models and for different $N$ (see Appendix \ref{app:compar asymp}). 

With Algorithm \ref{alg:optimal probe} we can also construct strategies to achieve the optimal QFI. Remarkably, we find that a simple strategy of applying a quantum SWITCH using a control qubit $\ket{\plus}_C:=\left(\ket{0}_C+\ket{1}_C\right)/\sqrt{2}$ (without any additional control operations on the probe) beats any sequential strategies (which can involve complex control) in certain cases (e.g. $p<0.5$). To our best knowledge, this is the only instance of noisy quantum metrology so far, where the advantage of indefinite causal orders is established rigorously. In Appendix \ref{app:implementation of strategies} we also present two explicit examples of implementing optimal sequential and causal superposition strategies, obtained by first applying Algorithm \ref{alg:optimal probe} and converting the CJ operators into quantum circuit consisting of single-qubit rotations and CNOT gates (as well as a quantum SWITCH for the case of $\mathsf{Sup}$). For optimal causal superposition strategies, the permutation symmetry allows us to only control the execution order of channels while fixing state preparation and intermediate control [$\rho_\uparrow = \rho_\downarrow$ and $U_\uparrow = U_\downarrow$ in Fig.~\ref{fig:all strategies}\protect\subref{subfig:causal superposition strategy}], which can be implemented by a $(2N-1)$-quantum SWITCH of $N$ channels $\mathcal E_\phi$ and $N-1$ intermediate operations.

Our result serves as a versatile tool for the demonstration of optimal quantum metrology and the design of optimal quantum sensors, especially in the context of control optimization \cite{Hou2019PRL,Hou2021PRL} and indefinite causal orders \cite{Oreshkov2012,Chiribella2013PRA,Araujo2015,mukhopadhyay2018superposition,Quintino2019PRL,Zhao2020PRL,Chapeau-Blondeau2021PRA,Bavaresco2021PRL}.

The code accompanying the paper is openly available \cite{Code_note}.

We thank Cyril Branciard, Alastair A. Abbott, and Raphael Mothe for helpful discussions and comments on our first manuscript. This work is supported  by Guangdong Natural Science Fund—General Programme via Project 2022A1515010340, by HKU Seed Fund for Basic Research for New Staff via Project 202107185045, and by the Research Grants Council of Hong Kong through the Grant No. 14307420.

\emph{Note added}.—Recently, it has been shown that neither sequential nor causal superposition strategies provide any advantage over parallel strategies asymptotically \cite{Kurdzialek22}.

%


\onecolumngrid

\appendix

\section{Proof of Eq.~(\ref{eq:qfi_primal}) of the main text} \label{app:lemma qfi}
The formalism of this proof has been developed for strategies following a definite causal order in \cite{Altherr2021PRL}, and the generalization to indefinite-causal-order strategies considered here is straightforward.

In \cite{Fujiwara2008}, the QFI of a quantum state is expressed as a minimization problem
\begin{equation} 
    J_Q(\rho_\phi) = 4\min_{\ket{\psi_{\phi,i}}}\sum_{i=1}^q\Tr\left(\dyad{\dot{\psi}_{\phi,i}}\right),
\end{equation}
for any integer $q \ge \max_\phi \rank(\rho_\phi)$, where $\left\{\ket{\psi_{\phi,i}}\right\}$ is a set of unnormalized vectors such that $\rho_\phi = \sum_i\dyad{\psi_{\phi,i}}$\footnote{We assume $\left\{\ket{\psi_{\phi,i}}\right\}$ is continuously differentiable with respect to $\phi$.}. In the main text, the QFI of $N$ quantum channels $\mathcal E_\phi$ is defined as the QFI of the output state obtained from the concatenation of the CJ operator $N_\phi$ of $N$ quantum channels and an optimal strategy $P$ in a given strategy set $\mathsf P$:
\begin{equation} 
    J^{(\mathsf P)}(N_\phi) := \max_{P \in \mathsf P} J_Q(P * N_\phi).
\end{equation}
Due to the monotonicity \cite{petz2008quantum} of the QFI under CPTP maps (e.g., the partial trace operation over ancillary space in this case), by choosing a proper global future space $\mathcal H_F$ an optimal $P$ can be taken as a pure process (rank-1 operator) denoted by $\dyad{P}$. For a fixed $P=\dyad{P}$, minimization over decompositions of $P*N_\phi$ is equivalent to minimization over decompositions of $N_\phi$, as $\rank(P*N_\phi) \le \rank(N_\phi)$.

As a positive semidefinite operator, $N_\phi$ has a decomposition as:
\begin{equation}
    N_\phi = \sum_{i=1}^{r}\dyad{N_{\phi,i}},
\end{equation}
where $r := \max_\phi\rank(N_\phi)$\footnote{We also assume $\left\{\ket{N_{\phi,i}}\right\}$ is continuously differentiable with respect to $\phi$.}. Note that the decomposition is nonunique. Defining $\mathbf N_\phi := \left(\ket{N_{\phi,1}},\dots,\ket{N_{\phi,r}}\right)$, an arbitrary alternative decomposition $\tilde{\mathbf N}_\phi := \left(\ket{\tilde N_{\phi,1}},\dots,\ket{\tilde N_{\phi,r}}\right)$ can be related to $\mathbf N_\phi$ by $\tilde{\mathbf N}_\phi = \mathbf N_\phi V_\phi$, where $V_\phi$ is an $r\times r$ unitary matrix. Then the QFI of $N$ channels can be expressed as
\begin{equation}
    J^{(\mathsf P)}(N_\phi) = \max_{P \in \mathsf P} \min_{V_\phi}\Tr\left[P\left(\mathds{1}_F\otimes \Omega_\phi\right)\right],
\end{equation}
having defined the performance operator $\Omega_\phi := 4\left(\dot{\tilde{\mathbf N}}_\phi\dot{\tilde{\mathbf N}}_\phi^\dagger\right)^T = 4\sum_{i=1}^r \left(\dyad{\dot{\tilde{N}}_{\phi,i}} \right)^T$, where the superscript $T$ denotes the transpose. We can further define an $r\times r$ Hermitian matrix $h:=\mathrm i \dot V_\phi V_\phi^{\dagger}$ to take care of the freedom of choice for the decomposition $\tilde{\mathbf N}_\phi$, by noting that
$\dot{\tilde{\mathbf N}}_\phi \dot{\tilde{\mathbf N}}_\phi^\dagger = \left(\dot{\mathbf N}_\phi-\mathrm i \mathbf N_\phi h\right)\left(\dot{\mathbf N}_\phi-\mathrm i \mathbf N_\phi h\right)^\dagger$. Now by redefining $\dot{\tilde{\mathbf N}}_\phi := \left(\dot{\mathbf N}_\phi-\mathrm i \mathbf N_\phi h\right)$, we rewrite $\Omega_\phi$ as $\Omega_\phi(h)$ to explicitly manifest its dependence on $h$. Hence, we finally arrive at Eq.~(\ref{eq:qfi_primal}) of the main text:
\begin{equation}
    \begin{aligned}
    J^{(\mathsf P)}(N_\phi) &= \max_{P \in \mathsf P} \min_{h \in \mathbb{H}_r} \Tr\left\{P\left[\mathds{1}_F\otimes\Omega_\phi(h)\right]\right\} \\
    &= \max_{\tilde{P} \in \tilde{\mathsf P}} \min_{h \in \mathbb H_r} \Tr\left[\tilde{P}\Omega_\phi(h)\right],
    \end{aligned}
\end{equation}
where $\tilde{\mathsf P} := \left\{\tilde{P} = \Tr_FP \middle|P \in \mathsf P\right\}$. \qed

\section{Proof of Theorem \ref{thm:qfi}} \label{app:thm qfi}
Starting from Eq.~(\ref{eq:qfi_primal}) of the main text, we exchange the order of minimization and maximization thanks to Fan's minimax theorem \cite{Fan1953}, since $\Tr \left[\tilde P \Omega_\phi(h) \right]$ is concave on $\tilde P$ and convex on $h$, and $\tilde{\mathsf{P}}$ is assumed to be a compact set. Then the problem of QFI evaluation can be rewritten as 
\begin{equation} \label{app:eq min max}
    J^{(\mathsf P)}(N_\phi) = \min_h\max_{\tilde P \in \tilde{\mathsf P}}\Tr\left[\tilde P\Omega_\phi(h)\right].
\end{equation}
Reformulating the condition of Theorem \ref{thm:qfi}, we require that each operator $\tilde P\in\tilde{\mathsf P}$ can be written as a convex combination of positive semidefinite operators $S^i$, $i=1,\dots,K$:
\begin{equation}
    \tilde P = \sum_{i=1}^Kq^iS^i,\ \mathrm{for}\ \sum_{i=1}^Kq^i=1,\ q^i\ge0,\ S^i\ge0,\ S^i\in\mathsf S^i,\ i=1,\dots,K,
\end{equation}
where each $\mathsf S^i$ is an affine space of Hermitian operators. Thus Eq. (\ref{app:eq min max}) can be reformulated as
\begin{equation}
    \begin{aligned}
    J^{(\mathsf P)}(N_\phi) =& \min_h\max_{\tilde P}\Tr\left[\tilde P\Omega_\phi(h)\right], \\
    \mathrm{s.t.}\ &\,\tilde P=\sum_{i=1}^Kq^iS^i, \\
    &\sum_{i=1}^Kq^i=1, \\
    &\,q^i\ge0,\ S^i\ge0,\ S^i\in\mathsf S^i,\ i=1,\dots,K.
    \end{aligned}
\end{equation}

For now we fix $h$ and consider the dual problem of maximization over $\tilde{\mathsf P}$.  For each affine space $\mathsf S^i$ we have defined its dual affine space $\overline{\mathsf S}^i$, whose dual affine space in turn is exactly $\mathsf S^i$ \cite{Chiribella_2016}. Choose an affine basis $\{Q^{i,j}\}_{j=1}^{L_i}$ for $\overline{\mathsf S}^i$, and the maximization problem is further expressed as
\begin{equation}
    \begin{aligned}
    \max_{\tilde P}&\Tr\left[\tilde P\Omega_\phi(h)\right], \\
    \mathrm{s.t.}\ &\,\tilde P=\sum_{i=1}^Kq^iS^i, \\
    &\sum_{i=1}^Kq^i=1, \\
    &\,q^i\ge0,\ S^i\ge0,\ \Tr\left(S^iQ^{i,j}\right)=1,\ i=1,\dots,K,\ j=1,\dots,L_i.
    \end{aligned}
\end{equation}
Defining $P^i:=q^iS^i$ to avoid the product of variables in optimization, we have
\begin{equation}
    \begin{aligned}
    \max&\Tr\left[\tilde P\Omega_\phi(h)\right], \\
    \mathrm{s.t.}\ &\,\tilde P=\sum_{i=1}^KP^i, \\
    &\sum_{i=1}^Kq^i=1, \\
    &\,P^i\ge0,\ \Tr\left(P^iQ^{i,j}\right)=q^i,\ i=1,\dots,K,\ j=1,\dots,L_i,
    \end{aligned}
\end{equation}
where the constraints $q^i\ge0$ can be safely removed, since $\Tr S^i=\prod_{j=1}^{N}d_{2j}$, implying that $\overline{\mathsf S}^i$ includes a positive operator proportional to identity for any $i=1,\dots,K$, having denoted $d_j := \dim(\mathcal H_j)$ for simplicity. The Lagrangian of the problem is given by 
\begin{equation}
    \begin{aligned}
    L &= \sum_i\Tr\left[P^i\Omega_\phi(h)\right] + \left(1-\sum_iq^i\right)\lambda + \sum_i\Tr\left(P^i\tilde Q^i\right) + \sum_{i,j}\left[q^i-\Tr\left(P^iQ^{i,j}\right)\right]\lambda^{i,j} \\
    &= \lambda + \sum_i\Tr\left\{P^i\left[\Omega_\phi(h)+\tilde Q^i-\sum_j\lambda^{i,j}Q^{i,j}\right]\right\} + \sum_i\left[q^i\left(\sum_j\lambda^{i,j}-\lambda\right)\right],
    \end{aligned}
\end{equation}
for $\tilde Q^i\ge0$. Hence, by removing $\tilde Q^i$ the dual problem is written as
\begin{equation}
    \begin{aligned}
    \min&\,\lambda, \\
    \mathrm{s.t.}\ &\sum_j\lambda^{i,j}Q^{i,j}\ge\Omega_\phi(h),\ \lambda=\sum_{j}\lambda^{i,j},\ i=1,\dots,K,\ j=1,\dots,L_i.
    \end{aligned}
\end{equation}
We define $Q^i := \sum_j\lambda^{i,j}Q^{i,j}/\lambda$ if $\lambda\neq0$ ($\lambda=0$ corresponds to a trivial case where the QFI is zero), and clearly $Q^i$ is an arbitrary operator in the set $\overline{\mathsf S}^i$. Therefore, we cast the dual problem into
\begin{equation}
    \begin{aligned}
    \min&\,\lambda, \\
    \mathrm{s.t.}\ &\,\lambda Q^i \ge \Omega_\phi(h),\ Q^i\in \overline{\mathsf S}^i,\ i=1,\dots,K.
    \end{aligned}
\end{equation}
Slater's theorem \cite{watrous2018theory} implies that the strong duality holds, since the QFI is finite and the inequality constraints can be strictly satisfied for a positive semidefinite operator $\Omega_\phi(h)$, by choosing $\lambda Q^i=\mu\lVert\Omega_\phi(h)\rVert\mathds{1}_{1,2,\dots,2N}$ for $\mu>1$ and any $i=1,\dots,K$, having denoted the operator norm by $\lVert \cdot \rVert$. Finally, by optimizing the choice of $h$ we derive the result of Theorem \ref{thm:qfi}. \qed

\section{Proof of the validity of Algorithm \ref{alg:optimal probe}} \label{app:alg optimal probe}
We first recall the minimax theorem:
\begin{equation} \label{eq:minimax}
   \min_x \max_y f(x,y) = \max_y \min_x f(x,y) 
\end{equation}
for a function $f(x,y)$ convex on $x$ and concave on $y$. Assume $(x_0, y_1)$ is a solution for the L.H.S. of Eq. (\ref{eq:minimax}) and $(x_1,y_0)$ is a solution for the R.H.S. of Eq. (\ref{eq:minimax}). It is easy to see that
\begin{equation}
    f(x_0,y_1) \ge f(x_0,y_0) \ge f(x_1,y_0).
\end{equation}
In view of Eq. (\ref{eq:minimax}) both equalities hold. Therefore, $(x_0,y_0)$ is a saddle point of $f(x,y)$, i.e., $x_0=\argmin_x{f(x,y_0)}$ and $y_0 = \argmax_y{f(x_0,y)}$. Since the objective function $\Tr\left[\tilde P\Omega_\phi(h)\right]$ in the primal problem of estimating QFI is convex on $h$ and concave on $\tilde P$, we can substitute $x=h$ and $y=\tilde P$. Obviously, $h^{(\mathrm{opt})}$ is an optimal solution for $\min_h \max_{\tilde P} \Tr\left[\tilde P\Omega_\phi(h)\right]$ and thus corresponds to a saddle point. Then $x_0=\argmin_x{f(x,y_0)}$ can be satisfied by requiring $\partial_h f\left(h,\tilde P^{(\mathrm{opt})}\right)|_{h=h^{(\mathrm{opt})}} = 0$, resulting in Eq. (\ref{eq:optimal condition}) in the main text:
\begin{equation} \label{app:eq:optimal strategy constraint}
    \Re \left\{\Tr \left\{\tilde P^{(\mathrm{opt})}\left[-\mathrm i \mathbf N_\phi \mathscr H\left(\dot{\mathbf N}_\phi-\mathrm i \mathbf N_\phi h^{(\mathrm{opt})}\right)^{\dagger}\right]^T\right\} \right\}
        =0\ \mathrm{for}\ \mathrm{all}\ \mathscr H \in \mathbb H_r.
\end{equation}
Meanwhile $y_0 = \argmax_y{f(x_0,y)}$ corresponds to the requirement that $\tilde P^{(\mathrm{opt})}$ is an optimal solution for fixed $h=h^{(\mathrm{opt})}$. Therefore, $\left(h^{(\mathrm{opt})},\tilde P^{(\mathrm{opt})}\right)$ is a saddle point and an optimal solution for $\max_{\tilde P} \min_h \Tr\left[\tilde P\Omega_\phi(h)\right]$. By definition a purification of $\tilde P^{(\mathrm{opt})}$ is an optimal physically implemented strategy, i.e., we can choose a strategy $P^{(\mathrm{opt})}$ such that $\Tr_F P^{(\mathrm{opt})} = \tilde P^{(\mathrm{opt})}$. \qed 

We remark that Eq. (\ref{app:eq:optimal strategy constraint}) can be reformulated as a set of linear constraints by choosing a basis $\{\mathscr H^i\}_{i=1}^{r^2}$ for the space $\mathbb H_r$ of Hemitian matrices. For example, denoting by $E_{ij}$ the $r\times r$ matrix of which only the $(i,j)$-th element is $1$ and all other elements are $0$, we can choose $\mathscr H =E_{jj}$ for $j=1,\dots,r$, $\mathscr H =E_{jk}+E_{kj}$ and $\mathscr H =\mathrm i\left(E_{jk}-E_{kj}\right)$ for $k=1,\dots,r$ and $k\neq j$, and obtain a series of constraints, which are equivalent to Eq.(\ref{app:eq:optimal strategy constraint}).

\section{Symmetry reduced optimization} \label{app:symmetry optimization}
This section demonstrates how to reduce the size of the optimization problems concerned in Theorem \ref{thm:qfi} and Algorithm \ref{alg:optimal probe}, by exploiting the permutation symmetry as applicable. 

Let us begin with some notations. We consider the action of $S_N$, the symmetric group of degree $N$, on the finite-dimensional representation space $\otimes_{i=1}^{N}\mathcal W_i$. $G_\pi$ is a unitary (and orthogonal) operator on $\otimes_{i=1}^{N}\mathcal W_i$ corresponding to the permutation $\pi \in S_N$: $G_\pi=\sum_{\mathbf{i}=(i_1,\dots,i_N)}\left(\otimes_{j=1}^N\ket{i_{\pi(j)}}\right)\otimes_{k=1}^N \bra{i_k}$, where $\{\ket{i_j}\}_i$ denotes an orthonormal basis of $\mathcal W_j$. Then an operator $X$ on $\otimes_{i=1}^{N}\mathcal W_i$ is said to be permutation invariant iff $G_\pi X G_\pi^\dagger=X$ for all $\pi \in S_N$. Analogously, a space $\mathcal X$ is permutation invariant iff $G_\pi X G_\pi^\dagger \in \mathcal X$ for any $X\in\mathcal X$ and any $\pi\in S_N$.   

We now explicitly express the components of the performance operator $\Omega_\phi(h)$. Given $N$ identical quantum channels $\mathcal E_\phi(\rho) = \sum_i K_{\phi,i}^{\dagger} \rho K_{\phi,i}$, we decompose the CJ operator $E_\phi = \sum_{i}\dyad{E_{\phi,i}}$ corresponding to each channel. Note that $\ket{E_{\phi,i}}$ is the vectorization of the Kraus operator $K_{\phi,i}$, i.e., $\ket{E_{\phi,i}} = \sum_{m,n} \matrixel{m}{K_{\phi,i}}{n}\ket{m}\ket{n}$. The CJ operator of $N$ identical quantum channels is $N_\phi = E_\phi^{\otimes N} = \sum_{\mathbf i} \dyad{N_{\phi,\mathbf i}}$, where we use the notation $\ket{N_{\phi, \mathbf i=(i_1,\dots,i_N)}} = \otimes_{n=1}^N \ket{E _{\phi,i_n}}$. Taking the derivative results in (we omit the subscript $\phi$ in $\ket{E_{\phi,i}}$) $\ket{\dot N_{\phi, \mathbf i}} = \sum_{j=1}^N\ket{E_{i_1}} \cdots \ket{E_{i_{j-1}}} \ket{\dot E_{i_j}} \ket{E_{i_{j+1}}} \cdots \ket{E_{i_N}}$, from which we can obtain the performance operator $\Omega_\phi(h) = 4\left(\dot{\tilde{\mathbf N}}_\phi\dot{\tilde{\mathbf N}}_\phi^\dagger\right)^T = 4\sum_{\mathbf i} \left(\dyad{\dot{\tilde{N}}_{\phi,\mathbf i}} \right)^T$, where $\ket{\dot{\tilde{N}}_{\phi,\mathbf j}} = \ket{\dot{N}_{\phi,\mathbf j}}-\mathrm i \sum_{\mathbf k}\ket{N_{\phi,\mathbf k}}h_{\mathbf k \mathbf j}$.

\subsection{Symmetry reduced QFI evaluation}
With these notations, we first consider the optimization in QFI evaluation and have the following:
\begin{lemma} \label{app:lem:qfi h symmetry}
In the optimization problem of Theorem \ref{thm:qfi}, if, for any $\pi\in S_N$ and any $i$, there exists a $j$ such that the mapping $S\mapsto G_\pi SG_\pi^\dagger$ on $\mathsf S^i$ is a bijective function from $\mathsf S^i$ to $\mathsf S^j$, then there must exist a permutation-invariant $h$ as a feasible solution.
\end{lemma}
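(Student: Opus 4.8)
The plan is to take an arbitrary feasible point $(\lambda,\{Q^i\}_{i=1}^K,h)$ of the program~(\ref{eq:thm}) in Theorem~\ref{thm:qfi}—in particular an optimal one—and \emph{symmetrize} it into a feasible point with the same $\lambda$ whose matrix is permutation invariant. The construction rests on two facts. First, a covariance property of the performance operator: since $N_\phi=E_\phi^{\otimes N}$ has a product structure, the decomposition $\mathbf N_\phi=(\ket{N_{\phi,\mathbf i}})$ with $\ket{N_{\phi,\mathbf i=(i_1,\dots,i_N)}}=\otimes_{n=1}^N\ket{E_{\phi,i_n}}$ satisfies $G_\pi\mathbf N_\phi=\mathbf N_\phi R_\pi$ for every $\pi\in S_N$, where $G_\pi$ is the (real, orthogonal) operator permuting the $N$ channel blocks $\mathcal H_{2k-1}\otimes\mathcal H_{2k}$ on the Choi space and $R_\pi$ is the induced permutation of the $r$ decomposition labels (acting on multi-indices by $(i_1,\dots,i_N)\mapsto(i_{\pi(1)},\dots,i_{\pi(N)})$); differentiating gives $G_\pi\dot{\mathbf N}_\phi=\dot{\mathbf N}_\phi R_\pi$. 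Substituting into $\dot{\tilde{\mathbf N}}_\phi=\dot{\mathbf N}_\phi-\mathrm i\mathbf N_\phi h$, using $R_\pi R_\pi^\dagger=\mathds{1}$, and noting that transposition commutes with conjugation by the real operator $G_\pi$, one obtains
\begin{equation} \label{eq:cov}
    \Omega_\phi\!\left(R_\pi h R_\pi^\dagger\right)=G_\pi\,\Omega_\phi(h)\,G_\pi^\dagger ,\qquad\pi\in S_N ,
\end{equation}
with $R_\pi h R_\pi^\dagger\in\mathbb H_r$ since $R_\pi$ is unitary.

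Second, the map $h\mapsto\Omega_\phi(h)$ is operator convex. Writing $\Omega_\phi(h)=4(X_hX_h^\dagger)^T$ with $X_h:=\dot{\mathbf N}_\phi-\mathrm i\mathbf N_\phi h$ affine in $h$, any convex combination $h=\sum_k p_k h_k$ gives $X_h=\sum_k p_k X_{h_k}$, and expanding $\sum_{k,l}p_kp_l\,(X_{h_k}-X_{h_l})(X_{h_k}-X_{h_l})^\dagger\ge0$ yields $(\sum_k p_kX_{h_k})(\sum_k p_kX_{h_k})^\dagger\le\sum_k p_k X_{h_k}X_{h_k}^\dagger$; applying the transpose, which preserves the positive-semidefinite order, gives $\Omega_\phi(\sum_k p_k h_k)\le\sum_k p_k\Omega_\phi(h_k)$.

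With these in hand, let $(\lambda,\{Q^i\},h)$ be a feasible point. For each $\pi$ and each target index $i$, applying the hypothesis to $\pi^{-1}$ furnishes an index $j=j(\pi,i)$ with $G_\pi\,\mathsf S^{j}\,G_\pi^\dagger=\mathsf S^{i}$; from $\Tr(QS)=\Tr((G_\pi QG_\pi^\dagger)(G_\pi SG_\pi^\dagger))$ it follows that $Q\mapsto G_\pi QG_\pi^\dagger$ carries $\overline{\mathsf S}^{j}$ into $\overline{\mathsf S}^{i}$. Conjugating the constraint $\lambda Q^{j}\ge\Omega_\phi(h)$ by $G_\pi$ and using~(\ref{eq:cov}), $\lambda\,G_\pi Q^{j}G_\pi^\dagger\ge\Omega_\phi(R_\pi h R_\pi^\dagger)$ with $G_\pi Q^{j}G_\pi^\dagger\in\overline{\mathsf S}^{i}$. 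Now set
\begin{equation}
    \bar h:=\frac{1}{N!}\sum_{\pi\in S_N}R_\pi h R_\pi^\dagger ,\qquad \bar Q^{\,i}:=\frac{1}{N!}\sum_{\pi\in S_N}G_\pi Q^{j(\pi,i)}G_\pi^\dagger .
\end{equation}
Each $\bar Q^{\,i}$ lies in $\overline{\mathsf S}^{i}$ because an affine space is closed under uniform (affine) combinations, and $\bar h$ is permutation invariant since $R_\tau\bar h R_\tau^\dagger=\tfrac{1}{N!}\sum_\pi R_{\pi\tau}hR_{\pi\tau}^\dagger=\bar h$. Averaging the conjugated constraints over $\pi$ and invoking operator convexity,
\begin{equation}
    \lambda\,\bar Q^{\,i}\ \ge\ \frac{1}{N!}\sum_{\pi\in S_N}\Omega_\phi\!\left(R_\pi h R_\pi^\dagger\right)\ \ge\ \Omega_\phi(\bar h),\qquad i=1,\dots,K ,
\end{equation}
so $(\lambda,\{\bar Q^{\,i}\},\bar h)$ is feasible with the same $\lambda$. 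Choosing $(\lambda,\{Q^i\},h)$ optimal proves the lemma and shows that one may restrict to permutation-invariant $h$ at no loss.

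The step I expect to demand the most care is the covariance identity~(\ref{eq:cov}): one must work with the tensor-structured decomposition of $N_\phi=E_\phi^{\otimes N}$, verify that $G_\pi$ merely permutes its columns according to the induced index permutation $R_\pi$, check that conjugation by this $R_\pi$ keeps $h$ Hermitian, and confirm that the transpose in the definition of $\Omega_\phi$ does not interfere with the covariance (it does not, precisely because $G_\pi$ is a real permutation matrix). The remaining pieces—the operator-convexity estimate and the bookkeeping with the dual affine spaces $\overline{\mathsf S}^i$—are routine.
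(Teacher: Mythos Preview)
Your proof is correct and follows the same blueprint as the paper's: establish the covariance identity $G_\pi\Omega_\phi(h)G_\pi^\dagger=\Omega_\phi(R_\pi h R_\pi^\dagger)$ (your $R_\pi$ is the paper's $G_\pi'$), use the hypothesis to show that conjugation by $G_\pi$ permutes the dual affine spaces $\overline{\mathsf S}^i$, and then symmetrize a feasible point over $S_N$.

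The one substantive difference is your explicit appeal to operator convexity of $h\mapsto\Omega_\phi(h)$ to obtain
\[
\Omega_\phi(\bar h)\ \le\ \frac{1}{N!}\sum_{\pi\in S_N}\Omega_\phi(R_\pi h R_\pi^\dagger).
\]
The paper instead asserts the \emph{equality} $\Omega_\phi(\bar h)=\frac{1}{N!}\sum_\pi G_\pi\Omega_\phi(h^{(\mathrm{opt})})G_\pi^\dagger$ directly from the covariance identity. That equality does not follow: $\Omega_\phi$ is quadratic in $h$ (through the term $\mathbf N_\phi h^2\mathbf N_\phi^\dagger$), so averaging the arguments is not the same as averaging the values. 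Only the inequality---in precisely the direction you state---holds in general, and it is exactly what is needed to conclude that $(\lambda,\{\bar Q^{\,i}\},\bar h)$ is feasible with the same $\lambda$. In this sense your argument is not merely a variant of the paper's; the convexity step you supply actually closes a gap in the paper's presentation.
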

\begin{proof}
Without loss of generality we assume all $\mathsf S^i$ are distinct spaces; otherwise, we just remove the duplicate ones. We first prove that, under any permutation operation $Q\mapsto G_\pi^\dagger Q G_\pi$, for any $i\in\{1,\dots,K\}$ there exists a unique $j\in\{1,\dots,K\}$ such that the dual affine space $\overline{\mathsf S}^j$ is bijectively mapped to $\overline{\mathsf S}^i$. The condition of the lemma implies that, for any $G_\pi$ and any $i$ we can find $j_i$ such that $G_\pi S^i G_\pi^\dagger \in \mathsf S^{j_i}$ for all $S^i\in \mathsf S^i$. Due to the bijectivity, $\mathsf S^i$ and the corresponding $\mathsf S^{j_i}$ are isomorphic, with different $j_i$ for different $i$. Apparently $\overline{\mathsf S}^i$ and $\overline{\mathsf S}^{j_i}$ are also isomorphic. If we choose any $Q^{j_i} \in \overline{\mathsf S}^{j_i}$, i.e., $\Tr(Q^{j_i}S^{j_i}) = 1$ for all $S^{j_i}\in \mathsf S^{j_i}$, then we have $\Tr(G_\pi^\dagger Q^{j_i}G_\pi S^i)=\Tr(Q^{j_i}G_\pi S^i G_\pi^\dagger)=1$ for all $S^i\in \mathsf S^i$. Therefore, $G_\pi^\dagger Q^{j_i} G_\pi \in \overline{\mathsf S}^i$ for all $Q^{j_i} \in \overline{\mathsf S}^{j_i}$. Furthermore, the permutation operation $Q\mapsto G_\pi^\dagger Q G_\pi$ from $\overline{\mathsf S}^{j_i}$ to $\overline{\mathsf S}^i$ is bijective as $\overline{\mathsf S}^{j_i}$ and $\overline{\mathsf S}^i$ are isomorphic. In particular, any set of $K$ operators $\{Q^i\}_{i=1}^K$ for $Q^i \in \overline{\mathsf {S}}^i$ is mapped to a set $\{\tilde Q^i\}_{i=1}^K$ such that $\tilde Q^i \in \overline{\mathsf {S}}^i$.

We then prove that there exists a permutation-invariant performance operator $\Omega_\phi(h)$ as a feasible solution. The group action is characterized by the permutation operator $G_\pi$ on the representation space $\otimes_i\mathcal W_i = \otimes_i\left(\mathcal H_{2i-1} \otimes \mathcal H_{2i}\right)$. Suppose $h^{(\mathrm{opt})}$ is an optimal solution, i.e., for any $i\in\{1,\dots,K\}$, there exist optimal values of $\lambda$ and $Q^i$ such that $\lambda Q^i \ge \Omega_\phi\left(h^{(\mathrm{opt})}\right)$ for $Q^i\in \overline{\mathsf S}^i$. Then for any permutation $\pi\in S_N$ we have
\begin{equation} \label{app:eq:dual constraint permutation}
    \lambda G_\pi Q^i G_\pi^\dagger \ge G_\pi \Omega_\phi\left(h^{(\mathrm{opt})}\right) G_\pi^\dagger,\ Q^i\in \overline{\mathsf S}^i,\ i=1,\dots,K.
\end{equation} 
Since $G_\pi$ maps $\{Q^i\}_i$ to a set $\{\tilde Q^i\}_i$ such that $\tilde Q^i \in \overline{\mathsf {S}}^i$, the constraint Eq. (\ref{app:eq:dual constraint permutation}) becomes
\begin{equation}
    \lambda \tilde Q^i \ge G_\pi \Omega_\phi\left(h^{(\mathrm{opt})}\right) G_\pi^\dagger,\ \tilde Q^i\in \overline{\mathsf S}^i,\ i=1,\dots,K,
\end{equation}
which implies that $G_\pi \Omega_\phi(h^{(\mathrm{opt})}) G_\pi^\dagger$ is also a feasible solution for any $\pi$. Furthermore, the permutation-invariant solution of the performance operator $\frac{1}{N!}\sum_{\pi\in S_N} G_\pi \Omega_\phi(h^{(\mathrm{opt})}) G_\pi^\dagger$ is feasible.

Next, we show that we can choose a permutation-invariant $h$ such that the performance operator $\Omega_\phi(h) = \frac{1}{N!}\sum_{\pi\in S_N} G_\pi \Omega_\phi\left(h^{(\mathrm{opt})}\right) G_\pi^\dagger$. $h$ is an operator on $\otimes_{i=1}^N \mathbb C^s$, where $s:=\max_\phi \rank(E_\phi)\le d$. For distinction we denote the group action on $h$ by $G_\pi' h G_\pi^{\prime\dagger}$ and the group action on $\Omega_\phi(h)$ by $G_\pi \Omega_\phi(h) G_\pi^\dagger$.
Writing $h$ in components, we have 
\begin{equation} \label{app:eq:h component transform}
    \begin{aligned}
    \sum_{\mathbf i,\mathbf j} G_\pi' h_{\mathbf i\mathbf j}\ketbra{\mathbf i}{\mathbf j} G_\pi^{\prime\dagger} &= \sum_{\mathbf i,\mathbf j} h_{\mathbf i\mathbf j}\ketbra{\pi(\mathbf i)}{\pi(\mathbf j)} \\
    &= \sum_{\pi^{-1}(\mathbf i),\pi^{-1}(\mathbf j)} h_{\pi^{-1}(\mathbf i) \pi^{-1}(\mathbf j)}\ketbra{\mathbf i}{\mathbf j} \\
    &= \sum_{\mathbf i,\mathbf j} h_{\pi^{-1}(\mathbf i) \pi^{-1}(\mathbf j)}\ketbra{\mathbf i}{\mathbf j},
    \end{aligned}
\end{equation}
where $\pi(\mathbf i):=(i_{\pi(1)},\dots,i_{\pi(N)})$. Note that $G_\pi \ket{N_{\phi,\mathbf i}} = \ket{N_{\phi,\pi(\mathbf i)}}$ and
\begin{equation} 
    \begin{aligned}
    G_\pi \ket{\dot{N}_{\phi,\mathbf i}} &= \sum_{j=1}^N\ket{E_{i_1}} \cdots \ket{E_{i_{j-1}}} \ket{\dot E_{i_j}} \ket{E_{i_{j+1}}} \cdots \ket{E_{i_N}} \\
    &= \sum_{j=1}^N \ket{E_{i_{\pi(1)}}} \cdots \ket{E_{i_{\pi[\pi^{-1}(j)-1]}}} \ket{\dot E_{i_j}} \ket{E_{i_{\pi[\pi^{-1}(j)+1]}}} \cdots \ket{E_{i_{\pi(N)}}} \\
    &= \sum_{j=1}^N \ket{E_{i_{\pi(1)}}} \cdots \ket{E_{i_{\pi(j-1)}}} \ket{\dot E_{i_{\pi(j)}}} \ket{E_{i_{\pi(j+1)}}} \cdots \ket{E_{i_{\pi(N)}}} \\
    &= \ket{\dot{N}_{\phi,\pi(\mathbf i)}},
    \end{aligned}
\end{equation}
which results in
\begin{equation} \label{app:eq:group action on performance operator and h}
    \begin{aligned}
    G_\pi \Omega_\phi(h) G_\pi^\dagger &= 4 \sum_{\mathbf j} G_\pi \left(\dyad{\dot{\tilde{N}}_{\phi,\mathbf j}} \right)^T G_\pi^\dagger \\ 
    &= 4 \sum_{\mathbf j} G_\pi \dyad{\dot{\tilde{N}}_{\phi,\mathbf j}^*} G_\pi^\dagger \\
    &= 4 \sum_{\mathbf j} G_\pi \left(\ket{\dot N_{\phi,\mathbf j}^*} + \mathrm i \sum_{\mathbf k}\ket{N_{\phi,\mathbf k}^*}h_{\mathbf k \mathbf j}^*\right) \left(\bra{\dot N_{\phi,\mathbf j}^*} - \mathrm i \sum_{\mathbf l}h_{\mathbf l \mathbf j} \bra{N_{\phi,\mathbf l}^*}\right) G_\pi^\dagger \\
    &= 4 \sum_{\mathbf j} \left(\ket{\dot N_{\phi,\pi(\mathbf j)}^*} + \mathrm i \sum_{\mathbf k}\ket{N_{\phi,\pi(\mathbf k)}^*}h_{\mathbf k \mathbf j}^*\right) \left(\bra{\dot N_{\phi,\pi(\mathbf j)}^*} - \mathrm i \sum_{\mathbf l}h_{\mathbf l \mathbf j} \bra{N_{\phi,\pi(\mathbf l)}^*}\right) \\
    &= 4 \sum_{\mathbf j} \left(\ket{\dot N_{\phi,\mathbf j}^*} + \mathrm i \sum_{\mathbf k}\ket{N_{\phi,\mathbf k}^*}h_{\pi^{-1}(\mathbf k) \pi^{-1}(\mathbf j)}^*\right) \left(\bra{\dot N_{\phi,\mathbf j}^*} - \mathrm i \sum_{\mathbf l}h_{\pi^{-1}(\mathbf l) \pi^{-1}(\mathbf j)} \bra{N_{\phi,\mathbf l}^*}\right) \\
    &= \Omega_\phi\left(G_\pi' h G_\pi^{\prime\dagger}\right),
    \end{aligned}
\end{equation}
where in the last equation we have used Eq. (\ref{app:eq:h component transform}). Therefore, by choosing the permutation-invariant solution $h=\frac{1}{N!}\sum_{\pi\in S_N} G_\pi' h^{(\mathrm{opt})} G_\pi^{\prime\dagger}$ we obtain $\Omega_\phi(h) = \frac{1}{N!}\sum_{\pi\in S_N} G_\pi \Omega_\phi(h^{(\mathrm{opt})}) G_\pi^\dagger$, and this permutation-invariant choice of $h$ is also a feasible solution.
\end{proof}
If a stronger condition holds, then not only can we choose a permutation-invariant $h$, but we can also restrict $Q^i\in\overline{\mathsf {S}}^i$ to be permutation invariant. In this case all matrix variables concerned in the optimization are permutation invariant.
\begin{lemma} \label{app:lem:qfi Q symmetry}
In the optimization problem of Theorem \ref{thm:qfi}, if each affine space $\mathsf S^i$ is permutation invariant for any $i=1,\dots,K$, then there must exist a permutation-invariant $Q^i\in\overline{\mathsf {S}}^i$ as a feasible solution for each $i$.
\end{lemma}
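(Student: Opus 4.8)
The plan is to fix a permutation-invariant $h$ and then \emph{twirl} the dual certificate $\{Q^i\}$ over $S_N$. First I would invoke Lemma~\ref{app:lem:qfi h symmetry}: the present hypothesis---that every $\mathsf S^i$ is permutation invariant---is exactly the special case $j=i$ of that lemma's hypothesis, so there is a permutation-invariant $h$ that is a feasible (indeed optimal) choice in the program of Theorem~\ref{thm:qfi}. For such an $h$, Eq.~(\ref{app:eq:group action on performance operator and h}) gives $G_\pi\Omega_\phi(h)G_\pi^\dagger=\Omega_\phi\!\left(G_\pi'hG_\pi^{\prime\dagger}\right)=\Omega_\phi(h)$, so the performance operator itself is permutation invariant; this is the structural input that makes the averaging below compatible with the operator inequality $\lambda Q^i\ge\Omega_\phi(h)$.

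Next I would record that $\overline{\mathsf S}^i$ inherits permutation invariance from $\mathsf S^i$. This is already implicit in the first part of the proof of Lemma~\ref{app:lem:qfi h symmetry}: if $S\mapsto G_\pi S G_\pi^\dagger$ maps $\mathsf S^i$ onto itself, then for any Hermitian $Q$ with $\Tr(QS)=1$ for all $S\in\mathsf S^i$ we have $\Tr\!\left(G_\pi^\dagger Q G_\pi\,S\right)=\Tr\!\left(Q\,G_\pi S G_\pi^\dagger\right)=1$ for all such $S$, hence $G_\pi^\dagger Q G_\pi\in\overline{\mathsf S}^i$; letting $\pi$ range over $S_N$ gives closure under $Q\mapsto G_\pi Q G_\pi^\dagger$ as well. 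I would stress that $\overline{\mathsf S}^i$ is an \emph{affine} space and hence convex, which is what lets us average members of it.

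Finally, take any feasible $(\lambda,\{Q^i\})$ for the program with the permutation-invariant $h$ fixed, so $\lambda Q^i\ge\Omega_\phi(h)$ and $Q^i\in\overline{\mathsf S}^i$. Conjugating by $G_\pi$ and using permutation invariance of $\Omega_\phi(h)$ yields $\lambda\,G_\pi Q^i G_\pi^\dagger\ge\Omega_\phi(h)$ with $G_\pi Q^i G_\pi^\dagger\in\overline{\mathsf S}^i$, so every conjugate is again feasible at the same value $\lambda$. Setting $\bar Q^i:=\frac{1}{N!}\sum_{\pi\in S_N}G_\pi Q^i G_\pi^\dagger$, the $\bar Q^i$ are permutation invariant, lie in $\overline{\mathsf S}^i$ by convexity, and satisfy $\lambda\bar Q^i=\frac{1}{N!}\sum_{\pi}\lambda\,G_\pi Q^i G_\pi^\dagger\ge\Omega_\phi(h)$; thus $(\lambda,\{\bar Q^i\})$ is feasible with the same objective, and applying this to an optimal solution produces an optimal permutation-invariant one.

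I do not expect a genuine obstacle here---this is the standard ``average the dual witness over the symmetry group'' argument---but the two points that need care are: (i) verifying that $\Omega_\phi(h)$ is truly permutation invariant once $h$ is, which is where Eq.~(\ref{app:eq:group action on performance operator and h}) (and the assumption that the $N$ channels are identical) enters; and (ii) checking that $\overline{\mathsf S}^i$ is stable both under the group action and under convex combination, so that the twirled operator $\bar Q^i$ remains an admissible dual variable. Everything else is linearity of the trace and of the operator ordering under conjugation by the unitaries $G_\pi$.
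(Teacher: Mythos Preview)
Your proposal is correct and follows essentially the same approach as the paper: invoke Lemma~\ref{app:lem:qfi h symmetry} to fix a permutation-invariant $h$ (hence a permutation-invariant $\Omega_\phi(h)$), observe that $\overline{\mathsf S}^i$ inherits permutation invariance, and twirl a feasible $Q^i$ over $S_N$. If anything, you spell out the convexity/affine closure step for $\overline{\mathsf S}^i$ more carefully than the paper, which simply states ``it is easy to see that each dual affine space $\overline{\mathsf S}^i$ is also permutation invariant.''
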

\begin{proof}
By Lemma \ref{app:lem:qfi h symmetry} we can restrict $h$ to be permutation invariant in optimization, and therefore the performance operator $\Omega_\phi(h)$ is permutation invariant. It is easy to see that each dual affine space $\overline{\mathsf S}^i$ is also permutation invariant. Then for any $i=1,\dots,K$, for $Q^{i,(\mathrm{opt})}\in \overline{\mathsf {S}}^i$ satisfying the constraint $\lambda Q^i \ge \Omega_\phi(h)$, we have the permutation-invariant solution $Q^i = \frac{1}{N!}\sum_{\pi\in S_N} G_\pi Q^{i,(\mathrm{opt})} G_\pi^\dagger \in \overline{\mathsf {S}}^i$ which also satisfies the same constraint. Hence, this permutation-invariant choice of $Q^i$ is also a feasible solution.
\end{proof}

Now since the optimization is restricted to the invariant subspace, we can reduce the matrix sizes by \emph{block diagonalization}. Generally, for a group-invariant space of complex matrices $\mathbb C^{n\times n,(\mathrm{inv})}$, there exists an isomorphism $\varphi$ preserving positive semidefiniteness between $\mathbb C^{n\times n,(\mathrm{inv})}$ and a direct sum of complex matrix spaces \cite[Theorem 9.1]{Bachoc2012}:
\begin{equation}
    \varphi:\mathbb C^{n\times n,(\mathrm{inv})} \rightarrow \bigoplus_{k=1}^I \mathbb C^{m_k\times m_k},
\end{equation}
where $m_k$ is the multiplicity of the $k$-th inequivalent irreducible representation of the group, and $I$ is the number of inequivalent irreducible representations. To be more specific, for the symmetric group $S_N$, the representation space $\mathcal W=(\mathbb C^d)^{\otimes N}$ can be decomposed into $\bigoplus_{|\mu|=N} \mathcal W^\mu$, where each partition $\mu:=(\mu_1,\dots,\mu_d)$ (with nonnegative integers $\mu_i\ge \mu_j$ for any $i<j$) corresponds to a \emph{Young diagram}, and $|\mu|:=\sum_i \mu_i$. Each \emph{isotypic component} $\mathcal W^\mu$ can be further decomposed into a direct sum of equivalent irreducible subspaces: $\mathcal W^\mu = \bigoplus_{i=1}^{m_\mu} \mathcal W^{\mu,i}$. We define the dimension of the irreducible representation $d_\mu:=\dim\left(\mathcal W^{\mu,i}\right)$. It is worth mentioning that the first decomposition into isotypic components is unique while the second decomposition into equivalent irreducible representation spaces is not (see, e.g., \cite{fulton2013representation}). In terms of the unitary group action $G_\pi$ on $\mathcal W$, there exists a unitary transformation of basis $U$ such that for any $\pi$ we have
\begin{equation} \label{app:eq:block diagonalization of group action}
    U^\dagger G_\pi U = \bigoplus_{|\mu|=N} G_\pi^\mu \otimes \mathds{1}\left(m_\mu\right),
\end{equation}
where $G_\pi^\mu$ is a unitary operator on the irreducible representation space associated with the Young diagram label $\mu$, $m_\mu$ is the corresponding multiplicity, and $\mathds{1}\left(m_\mu\right)$ is an $m_\mu\times m_\mu$ identity matrix acting on the multiplicity subspace. By Schur's lemma, for any group-invariant operator $X$ on $\mathcal W$, i.e., $X$ commuting with all $G_\pi$, we have
\begin{equation} \label{app:eq:block diagonalization}
    U^\dagger X U = \bigoplus_{|\mu|=N} \mathds{1}\left(d_\mu\right) \otimes X^\mu,
\end{equation}
for any $\pi$, where $X^\mu$ is an $m_\mu \times m_\mu$ matrix. With such block diagonalization of the permutation-invariant operator, we reduce the dimension from $d^{2N}$ to \cite[Eq. (57)]{schur1901ueber}
\begin{equation}
    \sum_{|\mu|=N} m_\mu^2 = \binom{N+d^2-1}{d^2-1} \le (N+1)^{d^2-1},
\end{equation}
where the upper bound is obtained straightforwardly from the definition of the binomial coefficient. Specifically, if $X$ is further restricted to be a Hermitian matrix variable, then the number of \emph{real} scalar variables contained in all elements of $X$ is reduced from $d^{2N}$ to $\binom{N+d^2-1}{d^2-1}$. 

Now let us turn to the optimization problem of QFI evaluation. If the Hermitian matrix $h$ is taken to be permutation invariant by Lemma \ref{app:lem:qfi h symmetry}, the number of variables concerned in $h$ is reduced from $s^{2N}$ to $\binom{N+s^2-1}{s^2-1}$. Similarly, if further by Lemma \ref{app:lem:qfi Q symmetry} each Hermitian matrix $Q^i$ is permutation invariant, the number of variables in $Q^i$ is also reduced from $d^{2N}$ to $\binom{N+d^2-1}{d^2-1}$, where we redefine $d:=\dim(\mathcal H_1)\dim(\mathcal H_2)$. By this reduction the complexity gets polynomial rather than exponential, with respect to $N$. 

We can then reformulate the optimization in Theorem \ref{thm:qfi} with the reduced form. We consider two cases relevant to the strategy sets mentioned in the main text. First, under certain circumstances we can reduce the number of constraints in optimization as follows:
\begin{theorem}[Symmetry reduced Theorem \ref{thm:qfi}, first case] \label{app:thm:qfi h symmetry}
If, for any $\pi\in S_N$ and any $i$, there exists a $j$ such that the mapping $S\mapsto G_\pi SG_\pi^\dagger$ on $\mathsf S^i$ is a bijective function from $\mathsf S^i$ to $\mathsf S^j$, and meanwhile for any $i,j$ there exists some permutation operation such that $\mathsf S^i$ is bijectively mapped to $\mathsf S^j$, then the QFI of $N$ quantum channels $\mathcal E_\phi$ can be expressed as:
\begin{equation} \label{eq:thm:qfi h symmetry}
    \begin{aligned}
    J^{(\mathsf P)}(N_\phi) =& \min_{\lambda,Q^1,h^\mu} \lambda, \\
    \mathrm{s.t.}\ &\,\lambda Q^1 \ge \Omega_\phi(h),\ Q^1\in \overline{\mathsf S}^1,
    \end{aligned} 
\end{equation}
where $h=U' \left[\bigoplus_{|\mu|=N} \mathds{1}\left(d_\mu\right) \otimes h^\mu \right] U^{\prime\dagger}$ with each $h^\mu$ as an $m_\mu'\times m_\mu'$ matrix variable and $U'$ as a unitary transformation of basis.
\end{theorem}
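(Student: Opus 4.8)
The plan is to start from the SDP of Theorem~\ref{thm:qfi} and perform two reductions that can be carried out independently: shrink the variable $h$ first to a permutation-invariant operator and then to its block-diagonal parametrization, and separately collapse the $K$ constraints indexed by $i=1,\dots,K$ into a single one. First I would invoke Lemma~\ref{app:lem:qfi h symmetry}: the first hypothesis of the present theorem is exactly its hypothesis, so the minimization in Theorem~\ref{thm:qfi} attains its optimum at a permutation-invariant $h$, and we may therefore restrict the optimization to such $h$ without changing $J^{(\mathsf P)}(N_\phi)$. For a permutation-invariant $h$, the identity $G_\pi\Omega_\phi(h)G_\pi^\dagger=\Omega_\phi(G_\pi' h G_\pi^{\prime\dagger})$ from Eq.~(\ref{app:eq:group action on performance operator and h}) immediately gives that $\Omega_\phi(h)$ is permutation invariant, i.e. $G_\pi\Omega_\phi(h)G_\pi^\dagger=\Omega_\phi(h)$ for all $\pi\in S_N$.

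Next I would show the $K$ constraints $\lambda Q^i\ge\Omega_\phi(h)$, $Q^i\in\overline{\mathsf S}^i$ are all equivalent to the single one for $i=1$, using the second (transitivity) hypothesis. For each $i$ choose $\sigma_i\in S_N$ with $G_{\sigma_i}\mathsf S^1 G_{\sigma_i}^\dagger=\mathsf S^i$ bijectively. As in the proof of Lemma~\ref{app:lem:qfi h symmetry}, this makes the conjugation $Q\mapsto G_{\sigma_i}QG_{\sigma_i}^\dagger$ a bijection from $\overline{\mathsf S}^1$ onto $\overline{\mathsf S}^i$, since for $Q^1\in\overline{\mathsf S}^1$ and $S^i=G_{\sigma_i}S^1 G_{\sigma_i}^\dagger\in\mathsf S^i$ one has $\Tr[(G_{\sigma_i}Q^1 G_{\sigma_i}^\dagger)S^i]=\Tr(Q^1 S^1)=1$. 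Now given a feasible triple $(\lambda,Q^1,h)$ with permutation-invariant $h$ and $\lambda Q^1\ge\Omega_\phi(h)$, set $Q^i:=G_{\sigma_i}Q^1 G_{\sigma_i}^\dagger$; conjugating the inequality and using the invariance of $\Omega_\phi(h)$ gives $\lambda Q^i\ge G_{\sigma_i}\Omega_\phi(h)G_{\sigma_i}^\dagger=\Omega_\phi(h)$, so $(\lambda,\{Q^i\}_{i=1}^K,h)$ is feasible for Theorem~\ref{thm:qfi}. Conversely, any feasible point of Theorem~\ref{thm:qfi} with permutation-invariant $h$ restricts, by retaining only $Q^1$, to a feasible point of Eq.~(\ref{eq:thm:qfi h symmetry}); together with the previous paragraph this shows the two optimization problems have the same optimal value $J^{(\mathsf P)}(N_\phi)$. (Well-posedness of the reduced SDP—feasibility and boundedness—is inherited from the corresponding facts established in the proof of Theorem~\ref{thm:qfi}, in particular that $\overline{\mathsf S}^1$ contains a positive multiple of the identity.)

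Finally I would block-diagonalize $h$. Here $h$ is a permutation-invariant Hermitian operator on $\bigotimes_{i=1}^N\mathbb C^s$ with $s=\max_\phi\rank(E_\phi)$, on which $S_N$ acts by permuting the tensor factors; applying Eq.~(\ref{app:eq:block diagonalization}) to this representation space (replacing $d$ by $s$ and writing $m_\mu'$ for the multiplicity of the $S_N$-irrep labelled by the Young diagram $\mu$ in $(\mathbb C^s)^{\otimes N}$) yields a unitary change of basis $U'$ with $U^{\prime\dagger}h U'=\bigoplus_{|\mu|=N}\mathds{1}(d_\mu)\otimes h^\mu$, where the $h^\mu$ are arbitrary $m_\mu'\times m_\mu'$ Hermitian matrices, and conversely every such direct sum defines a permutation-invariant $h$ via $h=U'\left[\bigoplus_{|\mu|=N}\mathds{1}(d_\mu)\otimes h^\mu\right]U^{\prime\dagger}$. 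Substituting this parametrization into the one-constraint problem of the previous step produces exactly Eq.~(\ref{eq:thm:qfi h symmetry}). I expect the main obstacle—though a modest one—to be the careful bookkeeping of how the dual affine spaces $\overline{\mathsf S}^i$ transform covariantly under the group action and of how the two hypotheses play distinct roles: the first forces $\Omega_\phi(h)$ to be invariant (so that a single feasibility inequality can be transported between the $\overline{\mathsf S}^i$), while the second supplies the transitivity of the $S_N$-action on $\{\mathsf S^1,\dots,\mathsf S^K\}$ that makes one constraint sufficient. Both of these ingredients are already available from Lemma~\ref{app:lem:qfi h symmetry} and its proof, so no new estimate is required.
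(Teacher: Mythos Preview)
Your proposal is correct and follows essentially the same approach as the paper: invoke Lemma~\ref{app:lem:qfi h symmetry} to restrict to permutation-invariant $h$ (hence permutation-invariant $\Omega_\phi(h)$), use the transitivity hypothesis to transport the single constraint $\lambda Q^1\ge\Omega_\phi(h)$ to all $\overline{\mathsf S}^i$ by conjugation, and finally parametrize $h$ via the block diagonalization of Eq.~(\ref{app:eq:block diagonalization}). Your explicit mention of the trivial converse direction (dropping the extra $Q^i$) and your careful accounting of the distinct roles of the two hypotheses are slightly more detailed than the paper's proof but add nothing substantively new.
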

\begin{proof}
We first prove that, for any $i,j$ there exists some permutation operation $Q\mapsto G_\pi Q G_\pi^\dagger$ such that $\overline{\mathsf S}^i$ is bijectively mapped to $\overline{\mathsf S}^j$. Given any $i$ and $j$, if we choose an arbitrary $Q^i \in \overline{\mathsf S}^i$, i.e., $\Tr(Q^i S^i) = 1$ for all $S^i\in \mathsf S^i$, then the condition of the theorem implies that there exists $G_\pi$ such that $G_\pi S^i G_\pi^\dagger \in \mathsf S^j$. As the map is bijective, in fact $G_\pi^\dagger S^j G_\pi \in \mathsf S^i$ for all $S^j\in \mathsf S^j$. Then we have $\Tr(G_\pi Q^i G_\pi^\dagger S^j)=\Tr(Q^i G_\pi^\dagger S^j G_\pi)=1$ for all $S^j\in \mathsf S^j$. Therefore, following the same argument in the proof of Lemma \ref{app:lem:qfi h symmetry}, under $Q \mapsto G_\pi Q G_\pi^\dagger$, $\overline{\mathsf S}^i$ is bijectively mapped to $\overline{\mathsf S}^j$.

By Lemma \ref{app:lem:qfi h symmetry} we can take $h$ to be permutation invariant and apply the block diagonalization to $h$ given by Eq. (\ref{app:eq:block diagonalization}). Then for any $Q^1\in \overline{\mathsf S}^1$ satisfying the constraint $\lambda Q^1 \ge \Omega_\phi(h)$, there exists $G_\pi$ such that $Q^i = G_\pi Q^1 G_\pi^\dagger \in \overline{\mathsf S}^i$ satisfying $\lambda Q^i \ge  G_\pi \Omega_\phi(h)  G_\pi^\dagger = \Omega_\phi(h)$ for any $i=2,\dots,K$. Therefore, all the constraints except for $\lambda Q^1 \ge \Omega_\phi(h)$ are redundant and can be removed.
\end{proof}
Now we consider the second case. If by Lemmas \ref{app:lem:qfi h symmetry} and \ref{app:lem:qfi Q symmetry} $h$ and each $Q^i$ are permutation invariant, we can then reformulate the constraints in optimization with reduced matrix dimensions. To relate the group representation on $\otimes_i\left(\mathcal H_{2i-1} \otimes \mathcal H_{2i}\right)$ to the representation on $\mathbb (C^s)^{\otimes N}$, we choose a unitary transformation of basis $U'$ for $h$, decompose $h=U' \left[\bigoplus_{|\mu|=N} \mathds{1}\left(d_\mu\right) \otimes h^\mu \right] U^{\prime\dagger}$ with $h^\mu$ as an $m_\mu'\times m_\mu'$ matrix, and divide $U' = \left( U^{\prime,\mu^1}, \dots, U^{\prime,\mu^I} \right)$ into blocks, where $\mu^i$ is the label of the $i$-th Young diagram, and $U^{\prime,\mu^i}$ has $d_{\mu^i} m_{\mu^i}'$ columns. Thus by defining the matrix
\begin{equation} \label{app:eq:N mu}
    \dot{\tilde{\mathbf N}}_\phi^\mu = \dot{\mathbf N}_\phi U^{\prime,\mu}-\mathrm i \mathbf N_\phi U^{\prime,\mu} \left[\mathds{1}\left(d_\mu\right) \otimes h^\mu \right], 
\end{equation}
we have the permutation-invariant performance operator
\begin{equation}
    \Omega_\phi(h) = 4\sum_{|\mu|=N}\left(\dot{\tilde{\mathbf N}}_\phi^\mu\dot{\tilde{\mathbf N}}_\phi^{\mu\dagger}\right)^T = 4\sum_{|\mu|=N}\dot{\tilde{\mathbf N}}_\phi^{\mu*}\dot{\tilde{\mathbf N}}_\phi^{\mu*\dagger}.
\end{equation}
Analogously, we apply the block diagonalization using a change of basis $U$ to 
\begin{equation} \label{app:eq:Qi}
    Q^i =U \left[\bigoplus_{|\mu|=N} \mathds{1}\left(d_{\mu}\right) \otimes Q^{i,\mu} \right] U^\dagger \in \overline{\mathsf S}^i,
\end{equation}
where $Q^{i,\mu}$ is an $m_\mu \times m_\mu$ matrix variable. The unitary transformation $U = \left(U^{\mu^1}, \dots, U^{\mu^I}\right)$ is first divided into blocks, and then each $U^{\mu^i}=\left(U^{\mu^i,1}, \dots, U^{\mu^i,d_{\mu^i}}\right)$ is divided into blocks for $i=1,\dots,I$, where $\mu^i$ is the label of the $i$-th Young diagram, and $U^{\mu^i,j}$ has $m_{\mu^i}$ columns for $j=1,\dots,d_{\mu^i}$. Note that $U$ also gives the block diagonalization of $\Omega_\phi(h)$.

Before proceeding further we prove that the range of $\dot{\tilde{\mathbf N}}_\phi^{\mu*}$ is exactly contained in the irreducible representation space corresponding to $\mu$:
\begin{lemma} \label{app:lem:young}
If we decompose the representation space $\mathcal W = \otimes_{i=1}^N\left(\mathcal H_{2i-1} \otimes \mathcal H_{2i}\right)  = \bigoplus_{|\mu|=N} \mathcal W^\mu$, then $\Range\left(\dot{\tilde{\mathbf N}}_\phi^{\mu*}\right) \subseteq \mathcal W^\mu$ for any Young diagram label $\mu$.
\end{lemma}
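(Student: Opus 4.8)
The plan is to read off that $\mathbf N_\phi$ and $\dot{\mathbf N}_\phi$, regarded as linear maps from $(\mathbb C^s)^{\otimes N}$ to $\mathcal W$, are intertwiners of the two permutation representations of $S_N$, and then to invoke Schur's lemma in the form: an intertwiner carries the $\mu$-isotypic component of its domain into the $\mu$-isotypic component of its codomain. Since $\dot{\tilde{\mathbf N}}_\phi^\mu$ is built in Eq.~(\ref{app:eq:N mu}) from $\mathbf N_\phi$ and $\dot{\mathbf N}_\phi$ precomposed with $U^{\prime,\mu}$, whose columns span the $\mu$-isotypic component of $(\mathbb C^s)^{\otimes N}$, this will pin its range down to $\mathcal W^\mu$; a short reality argument then transfers the conclusion to the entrywise conjugate $\dot{\tilde{\mathbf N}}_\phi^{\mu*}$.

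Concretely, I would first record the equivariance: the identities already established above, $G_\pi\ket{N_{\phi,\mathbf i}}=\ket{N_{\phi,\pi(\mathbf i)}}$ and $G_\pi\ket{\dot N_{\phi,\mathbf i}}=\ket{\dot N_{\phi,\pi(\mathbf i)}}$, say precisely that $G_\pi\mathbf N_\phi=\mathbf N_\phi G_\pi'$ and $G_\pi\dot{\mathbf N}_\phi=\dot{\mathbf N}_\phi G_\pi'$ for all $\pi\in S_N$, where $G_\pi'$ is the permutation action on $(\mathbb C^s)^{\otimes N}$ and $G_\pi$ the one on $\mathcal W$. Next I would bring in the isotypic projectors: let $\Pi^\mu$ be the orthogonal projector onto $\mathcal W^\mu$, which by the standard theory \cite{fulton2013representation} equals $\frac{d_\mu}{N!}\sum_{\pi\in S_N}\chi_\mu(\pi)\,G_\pi$ with $\chi_\mu$ the character of the $S_N$-irrep labelled by $\mu$, and let $\Pi_0^\mu$ be the analogous projector on $(\mathbb C^s)^{\otimes N}$. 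Averaging the two equivariance relations against $\chi_\mu(\pi)$ then gives $\Pi^\mu\mathbf N_\phi=\mathbf N_\phi\Pi_0^\mu$ and $\Pi^\mu\dot{\mathbf N}_\phi=\dot{\mathbf N}_\phi\Pi_0^\mu$.

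The remaining manipulations are routine. Because the columns of $U^{\prime,\mu}$ span the $\mu$-isotypic subspace of $(\mathbb C^s)^{\otimes N}$ one has $\Pi_0^\mu U^{\prime,\mu}=U^{\prime,\mu}$; inserting this idempotent into both terms of Eq.~(\ref{app:eq:N mu}) and commuting $\Pi^\mu$ to the front via the two intertwining relations yields $\dot{\tilde{\mathbf N}}_\phi^\mu=\Pi^\mu\dot{\tilde{\mathbf N}}_\phi^\mu$, so $\Range(\dot{\tilde{\mathbf N}}_\phi^\mu)\subseteq\Range(\Pi^\mu)=\mathcal W^\mu$. Finally, since every irreducible representation of $S_N$ is realizable over $\mathbb Q$, the character $\chi_\mu$ is real-valued and every $G_\pi$ is a real permutation matrix, so $\Pi^\mu$ is a real matrix; conjugating $\dot{\tilde{\mathbf N}}_\phi^\mu=\Pi^\mu\dot{\tilde{\mathbf N}}_\phi^\mu$ entrywise gives $\dot{\tilde{\mathbf N}}_\phi^{\mu*}=\Pi^\mu\dot{\tilde{\mathbf N}}_\phi^{\mu*}$, whence $\Range(\dot{\tilde{\mathbf N}}_\phi^{\mu*})\subseteq\mathcal W^\mu$, as claimed.

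I expect the main, and really the only, point requiring care to be this last step: for a generic group the isotypic component $\mathcal W^\mu$ need not be stable under complex conjugation, and the statement could fail after taking $*$; it survives here exactly because $\mathcal W^\mu$ is an isotypic component of a \emph{real} permutation representation of a group all of whose irreducible characters are real. Everything upstream is just the observation that $\mathbf N_\phi$ and $\dot{\mathbf N}_\phi$ are genuine $S_N$-intertwiners, together with the defining property of the block $U^{\prime,\mu}$.
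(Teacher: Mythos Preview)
Your argument is correct. Both your proof and the paper's rest on the same core fact—that $\mathbf N_\phi$ and $\dot{\mathbf N}_\phi$ intertwine the two permutation actions—but the technical routes diverge. The paper introduces Young symmetrizers $c_\nu=a_\nu b_\nu$ explicitly, characterizes $\Range(U^{\prime,\mu})$ as $\bigoplus_\nu c_\nu[(\mathbb C^s)^{\otimes N}]$, and then handles the complex conjugation by first arguing that one may replace $U^{\prime,\mu}$ by a real matrix $U^{\prime,\mu}_{(\mathrm{real})}=U^{\prime,\mu}V^\dagger$ so that $\Range(U^{\prime,\mu*})=\Range(U^{\prime,\mu})$, before pushing the Young symmetrizer through the intertwining relation. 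You instead work with the character projector $\Pi^\mu=\frac{d_\mu}{N!}\sum_\pi\chi_\mu(\pi)G_\pi$ directly, obtain $\Pi^\mu\dot{\tilde{\mathbf N}}_\phi^\mu=\dot{\tilde{\mathbf N}}_\phi^\mu$ in one line from $\Pi_0^\mu U^{\prime,\mu}=U^{\prime,\mu}$, and dispose of the conjugation by observing that $\Pi^\mu$ is itself a real matrix (real characters, real permutation matrices). Your route is shorter and avoids the Young-tableau bookkeeping; the paper's route is more hands-on and makes the connection to the standard Young symmetrizer construction in \cite{fulton2013representation} explicit, which may be helpful for a reader who wants to see where the irreducible pieces sit concretely.
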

\begin{proof}
To characterize the representation space $\mathcal W^\mu$ we introduce the key notion of \emph{Young symmetrizer} as well as other related concepts very briefly (see, e.g., \cite{fulton2013representation} for mathematical details). A \emph{Young tableau} is a filling into the boxes of the Young diagram with positive integers weakly increasing along each row and strictly increasing along each column. For a \emph{standard} Young tableau labeled by $\nu$, i.e., a Young tableau filled with the entries $1,\dots,N$, we define two permutation subgroups 
\begin{equation}
    P_\nu := \left\{\sigma \in S_N \middle| \sigma\ \mathrm{preserves}\ \mathrm{each}\ \mathrm{row}\right\} 
\end{equation}
and 
\begin{equation}
    Q_\nu := \left\{\sigma \in S_N \middle| \sigma\ \mathrm{preserves}\ \mathrm{each}\ \mathrm{column}\right\}. 
\end{equation}
In the group algebra $\mathbb C S_N$ we define two elements $a_\nu := \sum_{\sigma \in P_\nu} e_\sigma$ and $b_\nu := \sum_{\sigma \in Q_\nu} \mathrm{sgn}(\sigma)e_\sigma$, where $e_\sigma$ is the unit vector corresponding to $\sigma$ and $\mathrm{sgn}(\cdot)$ denotes the parity of the permutation. Then the Young symmetrizer is defined by
\begin{equation}
    c_\nu := a_\nu b_\nu = \sum_{\sigma \in P_\nu} \sum_{\pi \in Q_\nu} \mathrm{sgn}(\pi)e_{\sigma\pi} \in \mathbb C S_N.
\end{equation}
It is known that a Young diagram of $\mu$ corresponds to $d_\mu$ standard Young tableaux, with each Young tableau of $\nu$ characterizing an irreducible representation space, given by the image of $c_\nu$ on $\otimes_i \mathcal W_i$ under the natural group algebra representation $\mathbb C S_N \rightarrow \mathrm{End}\left(\otimes_i \mathcal W_i\right)$,
where $\mathrm{End}(V)$ denotes the set of endomorphisms on $V$. 

With these notions we now have an explicit characterization of the representation space. Note that we can prove $\Range\left(\dot{\tilde{\mathbf N}}_\phi^{\mu*}\right) \subseteq \mathcal W^\mu$ if $\Range\left(\dot{\mathbf N}_\phi^* U^{\prime,\mu*}\right)\subseteq \mathcal W^\mu$ and $\Range\left(\mathbf N_\phi^* U^{\prime,\mu*}\right)\subseteq \mathcal W^\mu$. In the proof of Lemma \ref{app:lem:qfi h symmetry} we have seen $G_\pi \ket{\dot N_{\phi,\mathbf i}} = \ket{\dot N_{\phi,\pi(\mathbf i)}}$ and $G_\pi \ket{N_{\phi,\mathbf i}} = \ket{N_{\phi,\pi(\mathbf i)}}$, from which it follows that
\begin{equation} \label{app:eq:group action on dot N phi}
    G_\pi \dot{\mathbf N}_\phi = \dot{\mathbf N}_\phi G_\pi'
\end{equation} 
and $G_\pi \mathbf N_\phi = \mathbf N_\phi G_\pi'$.

Now we prove that $\Range\left(\dot{\mathbf N}_\phi^* U^{\prime,\mu*}\right)\subseteq \mathcal W^\mu$. From the discussions above we know that
\begin{equation} \label{app:eq:young symmetrizer U h mu}
    \Range\left(U^{\prime,\mu}\right)=\bigoplus_\nu c_\nu \left[\otimes_i\mathbb C^s\right],
\end{equation}
for all Young tableau labels $\nu$ corresponding to the Young diagram of $\mu$. Explicitly, we have
\begin{equation} \label{app:eq:young symmetrizer U h mu explicit}
    c_\nu \left[\otimes_i\mathbb C^s\right] = \Range\left[\sum_{\sigma \in P_\nu} \sum_{\pi \in Q_\nu} \mathrm{sgn}(\pi)G_{\sigma\pi}'\right].
\end{equation} 
Note that there always exists a unitary transformation of basis $V$ such that we can obtain a real matrix $U_{(\mathrm{real})}^{\prime,\mu}=U^{\prime,\mu}V^\dagger$, which leads to
\begin{equation}
    \Range\left(U^{\prime,\mu*}\right)=\Range\left(U_{(\mathrm{real})}^{\prime,\mu*}V^*\right)=\Range\left(U_{(\mathrm{real})}^{\prime,\mu}V^*\right)=\Range\left(U^{\prime,\mu}\right).
\end{equation}
Thus $\Range\left(\dot{\mathbf N}_\phi^* U^{\prime,\mu*}\right)=\Range\left(\dot{\mathbf N}_\phi^* U^{\prime,\mu}\right)$. Furthermore, from Eqs. (\ref{app:eq:young symmetrizer U h mu}) and (\ref{app:eq:young symmetrizer U h mu explicit}) we have $\Range\left(\dot{\mathbf N}_\phi^* U^{\prime,\mu*}\right) \subseteq \mathcal W^\mu$ if
\begin{equation} \label{app:eq:range in representation space}
    \Range\left[\dot{\mathbf N}_\phi^*\sum_{\sigma \in P_\nu} \sum_{\pi \in Q_\nu} \mathrm{sgn}(\pi)G_{\sigma\pi}'\right] \subseteq \mathcal W^\mu
\end{equation}
for all Young tableau labels $\nu$ corresponding to the Young diagram of $\mu$. To show Eq. (\ref{app:eq:range in representation space}), note that by Eq. (\ref{app:eq:group action on dot N phi}) we have
\begin{equation} \label{app:eq:identical range by group action}
    \Range\left[\dot{\mathbf N}_\phi^*\sum_{\sigma \in P_\nu} \sum_{\pi \in Q_\nu} \mathrm{sgn}(\pi)G_{\sigma\pi}'\right] = \Range\left[\sum_{\sigma \in P_\nu} \sum_{\pi \in Q_\nu} \mathrm{sgn}(\pi) G_{\sigma\pi}\dot{\mathbf N}_\phi^*\right].
\end{equation}
Since the R.H.S. of Eq. (\ref{app:eq:identical range by group action}) is the image of the Young symmetrizer $c_\nu$ on $\Range\left(\dot{\mathbf N}_\phi^*\right)$, it is a subset of $\mathcal W^\mu$. Therefore, $\Range\left(\dot{\mathbf N}_\phi^* U^{\prime,\mu*}\right)\subseteq \mathcal W^\mu$. In the same way we can also show that $\Range\left(\mathbf N_\phi^* U^{\prime,\mu*}\right)\subseteq \mathcal W^\mu$ and thus complete the proof.
\end{proof}
Therefore, $U^\dagger \Omega_\phi(h) U$ is block diagonal with each block on the space $\Range\left(\dot{\tilde{\mathbf N}}_\phi^\mu\right)$. This results in:
\begin{theorem}[Symmetry reduced Theorem \ref{thm:qfi}, second case] \label{app:thm:qfi Q symmetry}
If each affine space $\mathsf S^i$ is permutation invariant for any $i=1,\dots,K$, then the QFI of $N$ quantum channels $\mathcal E_\phi$ can be expressed as:
\begin{equation} \label{eq:thm:qfi Q symmetry}
    \begin{aligned}
    J^{(\mathsf P)}(N_\phi) =& \min_{\lambda,Q^{i,\mu},h^\mu} \lambda, \\
    \mathrm{s.t.}\ &\,\lambda Q^{i,\mu} \ge 4U^{\mu,1\dagger} \left(\dot{\tilde{\mathbf N}}_\phi^\mu\dot{\tilde{\mathbf N}}_\phi^{\mu\dagger}\right)^T U^{\mu,1},\ i=1,\dots,K,\ |\mu|=N,
    \end{aligned} 
\end{equation}
where $Q^{i,\mu}$ is given by Eq. (\ref{app:eq:Qi}).
\end{theorem}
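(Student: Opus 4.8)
The plan is to start from the semidefinite program of Theorem~\ref{thm:qfi}, first restrict all the matrix variables to the permutation-invariant subspace, and then block-diagonalize the single operator inequality $\lambda Q^i\ge\Omega_\phi(h)$ so that it splits into one small-block inequality per Young diagram $\mu$ with $|\mu|=N$.

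The first step is to observe that the hypothesis---each affine space $\mathsf S^i$ is permutation invariant---makes $S\mapsto G_\pi S G_\pi^\dagger$ a bijection of $\mathsf S^i$ onto itself, so the hypothesis of Lemma~\ref{app:lem:qfi h symmetry} holds (with $j=i$) and, moreover, each dual affine space $\overline{\mathsf S}^i$ is also permutation invariant. Hence Lemmas~\ref{app:lem:qfi h symmetry} and~\ref{app:lem:qfi Q symmetry} allow us to assume, with no change in the optimal value, that $h$ is permutation invariant on $(\mathbb C^s)^{\otimes N}$ and that each $Q^i$ is permutation invariant on $\mathcal W=\otimes_{i=1}^N(\mathcal H_{2i-1}\otimes\mathcal H_{2i})$. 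By Eq.~(\ref{app:eq:group action on performance operator and h}) the performance operator $\Omega_\phi(h)$ is then permutation invariant on $\mathcal W$ as well, so every operator appearing in the constraint $\lambda Q^i\ge\Omega_\phi(h)$ lies in the invariant subspace and can be treated by the block diagonalization of Eq.~(\ref{app:eq:block diagonalization}).

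The second step is to make the block structure of $\Omega_\phi(h)$ explicit. Inserting the resolution of identity $U'U^{\prime\dagger}=\mathds 1$ coming from $h=U'[\bigoplus_{|\mu|=N}\mathds 1(d_\mu)\otimes h^\mu]U^{\prime\dagger}$ into $\dot{\tilde{\mathbf N}}_\phi=\dot{\mathbf N}_\phi-\mathrm i\mathbf N_\phi h$, and using the orthogonality $U^{\prime,\mu\dagger}U^{\prime,\nu}=\delta_{\mu\nu}\mathds 1$ of the column blocks, one gets $\dot{\tilde{\mathbf N}}_\phi\dot{\tilde{\mathbf N}}_\phi^\dagger=\sum_{|\mu|=N}\dot{\tilde{\mathbf N}}_\phi^\mu\dot{\tilde{\mathbf N}}_\phi^{\mu\dagger}$ with $\dot{\tilde{\mathbf N}}_\phi^\mu$ as defined in Eq.~(\ref{app:eq:N mu}), hence $\Omega_\phi(h)=4\sum_{|\mu|=N}(\dot{\tilde{\mathbf N}}_\phi^\mu\dot{\tilde{\mathbf N}}_\phi^{\mu\dagger})^T$. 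The key input is then Lemma~\ref{app:lem:young}: since $\Range(\dot{\tilde{\mathbf N}}_\phi^{\mu*})\subseteq\mathcal W^\mu$, the $\mu$-th summand is supported on the isotypic component $\mathcal W^\mu$ and therefore contributes to exactly one block of the block-diagonal form of $\Omega_\phi(h)$ under the change of basis $U$; reading off that block with the finer decomposition $U^\mu=(U^{\mu,1},\dots,U^{\mu,d_\mu})$ and Schur's lemma yields $[\Omega_\phi(h)]^\mu=4U^{\mu,1\dagger}(\dot{\tilde{\mathbf N}}_\phi^\mu\dot{\tilde{\mathbf N}}_\phi^{\mu\dagger})^T U^{\mu,1}$, while $[Q^i]^\mu=Q^{i,\mu}$ by Eq.~(\ref{app:eq:Qi}).

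The third step is purely formal: since $\lambda Q^i$ and $\Omega_\phi(h)$ are block-diagonal in the same basis $U$ with blocks of the form $\mathds 1(d_\mu)\otimes(\cdot)^\mu$, and the isomorphism of \cite[Theorem 9.1]{Bachoc2012} preserves positive semidefiniteness, the constraint $\lambda Q^i\ge\Omega_\phi(h)$ is equivalent to the family $\lambda Q^{i,\mu}\ge 4U^{\mu,1\dagger}(\dot{\tilde{\mathbf N}}_\phi^\mu\dot{\tilde{\mathbf N}}_\phi^{\mu\dagger})^T U^{\mu,1}$ for all $|\mu|=N$; substituting this into Theorem~\ref{thm:qfi} gives Eq.~(\ref{eq:thm:qfi Q symmetry}). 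I expect the main obstacle to be the bookkeeping between the two block diagonalizations---$h$ living on $(\mathbb C^s)^{\otimes N}$ via $U'$ with multiplicities $m_\mu'$, versus $Q^i$ and $\Omega_\phi(h)$ living on $\mathcal W$ via $U$ with multiplicities $m_\mu$---and, in particular, verifying that each $\mu$-summand of $\Omega_\phi(h)$ falls into the $\mathcal W^\mu$-block picked out by $U$ with no cross terms between distinct $\mu$'s. This is precisely what Lemma~\ref{app:lem:young} supplies, so once it is invoked the remainder is routine linear algebra of restricting a permutation-invariant operator to its irreducible blocks.
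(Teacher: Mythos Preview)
Your proposal is correct and follows essentially the same route as the paper: invoke Lemmas~\ref{app:lem:qfi h symmetry} and~\ref{app:lem:qfi Q symmetry} to make $h$, $\Omega_\phi(h)$ and each $Q^i$ permutation invariant, expand $\Omega_\phi(h)=4\sum_{|\mu|=N}(\dot{\tilde{\mathbf N}}_\phi^\mu\dot{\tilde{\mathbf N}}_\phi^{\mu\dagger})^T$, use Lemma~\ref{app:lem:young} to kill the cross terms $U^{\mu\dagger}\dot{\tilde{\mathbf N}}_\phi^{\nu*}\dot{\tilde{\mathbf N}}_\phi^{\nu*\dagger}U^\mu$ for $\nu\neq\mu$, and then restrict each isotypic block $\mathds 1(d_\mu)\otimes(\cdot)$ to a single copy via $U^{\mu,1}$. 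The only cosmetic difference is that you package the last reduction as an appeal to the positivity-preserving isomorphism of \cite[Theorem 9.1]{Bachoc2012} and Schur's lemma, whereas the paper simply observes that both sides of $\lambda\,\mathds 1(d_\mu)\otimes Q^{i,\mu}\ge 4U^{\mu\dagger}\dot{\tilde{\mathbf N}}_\phi^{\mu*}\dot{\tilde{\mathbf N}}_\phi^{\mu*\dagger}U^\mu$ are block diagonal with $d_\mu$ identical repeating blocks, so it suffices to compare the first one.
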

\begin{proof}
By Lemmas \ref{app:lem:qfi h symmetry} and \ref{app:lem:qfi Q symmetry}, $h$, $\Omega_\phi(h)$ and all $Q^i$ are taken to be permutation invariant. The original constraint $\lambda Q^i \ge \Omega_\phi(h)$ on the permutation-invariant space is reformulated as
\begin{equation} \label{app:eq:symmetry reduced constraint}
    \lambda \bigoplus_{|\mu|=N} \mathds{1}\left(d_\mu\right) \otimes Q^{i,\mu} \ge \bigoplus_{|\mu|=N} 4U^{\mu\dagger}\sum_{|\nu|=N}\dot{\tilde{\mathbf N}}_\phi^{\nu*}\dot{\tilde{\mathbf N}}_\phi^{\nu*\dagger}U^\mu.
\end{equation}
Since by Lemma \ref{app:lem:young} $\Range\left(\dot{\tilde{\mathbf N}}_\phi^{\mu*}\right) \subseteq \mathcal W^\mu=\Range\left(U^\mu\right)$ for any $\mu$, we have $U^{\mu\dagger}\sum_{|\nu|=N}\dot{\tilde{\mathbf N}}_\phi^{\nu*}\dot{\tilde{\mathbf N}}_\phi^{\nu*\dagger}U^\mu = U^{\mu\dagger}\dot{\tilde{\mathbf N}}_\phi^{\mu*}\dot{\tilde{\mathbf N}}_\phi^{\mu*\dagger}U^\mu$. Then Eq. (\ref{app:eq:symmetry reduced constraint}) can be reformulated as
\begin{equation} \label{app:eq:symmetry reduced constraint block}
    \lambda \mathds{1}\left(d_\mu\right) \otimes Q^{i,\mu} \ge 4U^{\mu\dagger}\dot{\tilde{\mathbf N}}_\phi^{\mu*}\dot{\tilde{\mathbf N}}_\phi^{\mu*\dagger}U^\mu,\ |\mu|=N.
\end{equation}
Furthermore, both sides of the inequality in Eq. (\ref{app:eq:symmetry reduced constraint block}) are block diagonal with the same repeating blocks and we only need to compare one of the blocks. Without loss of generality, we choose the first block for comparison and obtain
\begin{equation}
    \lambda Q^{i,\mu} \ge 4U^{\mu,1\dagger} \left(\dot{\tilde{\mathbf N}}_\phi^\mu\dot{\tilde{\mathbf N}}_\phi^{\mu\dagger}\right)^T U^{\mu,1},
\end{equation}
which holds for all $i=1,\dots,K$ and $|\mu|=N$.
\end{proof}
We have not characterized each dual affine space $\overline{\mathsf S}^i$ for $i=1,\dots,K$ yet. If we choose an affine basis $\{S^{i,j}\}_{j=1}^{M_i}$ for each $\mathsf S^i$, then the constraint $Q^i\in\overline{\mathsf S}^i$ can be reformulated as a set of linear constraints $\Tr(Q^iS^{i,j})=1$ for all $j=1,\dots,M_i$. If each affine space $\mathsf S^i$ is permutation invariant, then by the proof of Lemma \ref{app:lem:qfi Q symmetry} we have, for a feasible solution of $Q^i \in \overline{\mathsf S}^i$, $G_\pi Q^i G_\pi^\dagger$ is also a feasible solution for any $\pi\in S_N$. Then by defining 
\begin{equation} \label{app:eq:tilde Sij}
    \tilde S^{i,j}:=\frac{1}{N!}\sum_{\pi\in S_N} G_\pi S^{i,j}G_\pi^\dagger, 
\end{equation}
each linear constraint $\Tr(Q^iS^{i,j})=1$ can be replaced by $\Tr(Q^i\tilde S^{i,j})=1$ without changing the problem. Since $\tilde S^{i,j}$ is permutation invariant, similar to Eq. (\ref{app:eq:Qi}) it can be decomposed as
\begin{equation} \label{app:eq:tilde Sij decomposition}
    \tilde S^{i,j} =U \left[\bigoplus_{|\mu|=N} \mathds{1}\left(d_{\mu}\right) \otimes \tilde S^{i,j,\mu} \right] U^\dagger,
\end{equation}
where $\tilde S^{i,j,\mu}$ is an $m_\mu \times m_\mu$ matrix. Combining Eqs. (\ref{app:eq:Qi}) and (\ref{app:eq:tilde Sij decomposition}), we can reformulate the constraint $Q^i\in\overline{\mathsf S}^i$ with reduced matrix sizes as
\begin{equation} \label{app:eq:dual affine space constraint permutation-invariant}
    \sum_{|\mu|=N} d_\mu \Tr(Q^{i,\mu}\tilde S^{i,j,\mu}) = 1,\ j=1,\dots,M_i.
\end{equation}
 
Finally, it remains to be seen how to find the unitary transformation $U$ and $U'$ for block diagonalization. Known rigorous numerical algorithms for identifying the transformation are fairly expensive \cite{Montealegre-Mora21Certifying}. Fortunately, RepLAB \cite{Rosset_2021}, a numerical approach to decomposing representations based on randomized heuristics works very well in practice and is thus adopted here.


\subsection{Symmetry reduced algorithm for optimal strategies}
The idea is similar to the symmetry reduced evaluation of QFI. Since the first step of Algorithm \ref{alg:optimal probe} is simply solving the optimization problem in Theorem \ref{thm:qfi}, now we only consider its second step, where we need to solve for the optimal value of $\tilde P$ in
\begin{equation}
    \max_{\tilde{P} \in \tilde{\mathsf P}} \Tr\left[\tilde{P}\Omega_\phi\left(h^{(\mathrm{opt})}\right)\right],
\end{equation}
where $\tilde{\mathsf P} = \Conv\left\{\bigcup_{i=1}^{K}\left\{S^i\ge0\middle|S^i\in\mathsf S^i\right\}\right\}$ and
\begin{equation}
    \Re \left\{\Tr \left\{\tilde P\left[-\mathrm i \mathbf N_\phi \mathscr H\left(\dot{\mathbf N}_\phi-\mathrm i \mathbf N_\phi h^{(\mathrm{opt})}\right)^{\dagger}\right]^T\right\} \right\}
        =0\ \mathrm{for}\ \mathrm{all}\ \mathscr H \in \mathbb H_r.
\end{equation}
We have the following:
\begin{lemma} \label{app:lem:optimal strategy P symmetry}
If, for any $\pi\in S_N$ and any $i$, there exists a $j$ such that the mapping $S\mapsto G_\pi SG_\pi^\dagger$ on $\mathsf S^i$ is a bijective function from $\mathsf S^i$ to $\mathsf S^j$, then there must exist a permutation-invariant $\tilde P$ as an optimal solution in Algorithm \ref{alg:optimal probe}.
\end{lemma}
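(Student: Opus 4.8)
The plan is to follow the same symmetrization template used in the proofs of Lemmas~\ref{app:lem:qfi h symmetry} and \ref{app:lem:qfi Q symmetry}: start from an arbitrary optimizer of the step-(ii) problem of Algorithm~\ref{alg:optimal probe}, average it over $S_N$, and check that the average is still feasible, still optimal, and permutation invariant.

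First I would fix a permutation-invariant optimal $h=h^{(\mathrm{opt})}$ for step~(i); such a choice exists by Lemma~\ref{app:lem:qfi h symmetry}, whose hypothesis is exactly that of the present lemma. Substituting $G_\pi'h^{(\mathrm{opt})}G_\pi^{\prime\dagger}=h^{(\mathrm{opt})}$ into Eq.~(\ref{app:eq:group action on performance operator and h}) gives $G_\pi\Omega_\phi(h^{(\mathrm{opt})})G_\pi^\dagger=\Omega_\phi(h^{(\mathrm{opt})})$ for all $\pi$, i.e.\ the performance operator appearing in the step-(ii) objective is permutation invariant. Next I would observe that the feasible set $\tilde{\mathsf P}=\Conv\{\bigcup_{i=1}^K\{S^i\ge0\mid S^i\in\mathsf S^i\}\}$ is permutation invariant: conjugation $X\mapsto G_\pi X G_\pi^\dagger$ preserves positive semidefiniteness, and by hypothesis it carries $\mathsf S^i$ bijectively onto some $\mathsf S^j$, hence it carries the positive slice $\{S\ge0\mid S\in\mathsf S^i\}$ bijectively onto $\{S\ge0\mid S\in\mathsf S^j\}$; the union of the slices is therefore invariant, and since $X\mapsto G_\pi X G_\pi^\dagger$ is linear the convex hull is invariant as well.

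With these two facts in hand, let $\tilde P^{(\mathrm{opt})}$ be any optimizer of step~(ii), i.e.\ a maximizer of $\Tr[\tilde P\,\Omega_\phi(h^{(\mathrm{opt})})]$ over $\tilde P\in\tilde{\mathsf P}$ subject to the orthogonality constraint~(\ref{eq:optimal condition}), and set $\tilde P^{\mathrm{sym}}:=\frac1{N!}\sum_{\pi\in S_N}G_\pi\tilde P^{(\mathrm{opt})}G_\pi^\dagger$. Then $\tilde P^{\mathrm{sym}}\in\tilde{\mathsf P}$ by invariance and convexity of $\tilde{\mathsf P}$; $\tilde P^{\mathrm{sym}}$ is permutation invariant because $\pi\mapsto G_\pi$ is a representation, so reindexing the sum gives $G_{\pi'}\tilde P^{\mathrm{sym}}G_{\pi'}^\dagger=\tilde P^{\mathrm{sym}}$; and $\Tr[\tilde P^{\mathrm{sym}}\Omega_\phi(h^{(\mathrm{opt})})]=\Tr[\tilde P^{(\mathrm{opt})}\Omega_\phi(h^{(\mathrm{opt})})]$ by cyclicity of the trace together with permutation invariance of $\Omega_\phi(h^{(\mathrm{opt})})$, so the objective value is unchanged and $\tilde P^{\mathrm{sym}}$ is again optimal for the maximization.

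The step I expect to be the crux is verifying that $\tilde P^{\mathrm{sym}}$ still satisfies the orthogonality constraint~(\ref{eq:optimal condition}). Abbreviating $M(\mathscr H):=-\mathrm i\mathbf N_\phi\mathscr H(\dot{\mathbf N}_\phi-\mathrm i\mathbf N_\phi h^{(\mathrm{opt})})^\dagger$, the constraint reads $\Re\Tr[\tilde P\,M(\mathscr H)^T]=0$ for all $\mathscr H\in\mathbb H_r$. Using $G_\pi\mathbf N_\phi=\mathbf N_\phi G_\pi'$ and $G_\pi\dot{\mathbf N}_\phi=\dot{\mathbf N}_\phi G_\pi'$ (Eq.~(\ref{app:eq:group action on dot N phi})) together with the commutation $[G_\pi',h^{(\mathrm{opt})}]=0$, one finds $G_\pi M(\mathscr H)G_\pi^\dagger=M\!\left(G_\pi'\mathscr H G_\pi^{\prime\dagger}\right)$; and because both $G_\pi$ and $G_\pi'$ are real orthogonal matrices the transpose in~(\ref{eq:optimal condition}) passes through the conjugation cleanly, so that a cyclic rearrangement yields $\Tr[G_\pi\tilde P^{(\mathrm{opt})}G_\pi^\dagger M(\mathscr H)^T]=\Tr[\tilde P^{(\mathrm{opt})}\,M\!\left(G_\pi^{\prime\dagger}\mathscr H G_\pi'\right)^{T}]$. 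Since $G_\pi^{\prime\dagger}\mathscr H G_\pi'\in\mathbb H_r$ and $\tilde P^{(\mathrm{opt})}$ obeys the constraint for every element of $\mathbb H_r$, each summand has vanishing real part, hence so does $\Re\Tr[\tilde P^{\mathrm{sym}}M(\mathscr H)^T]$. Therefore $\tilde P^{\mathrm{sym}}$ is a permutation-invariant optimizer of step~(ii), which is the assertion of the lemma; a purification $P^{(\mathrm{opt})}$ with $\Tr_F P^{(\mathrm{opt})}=\tilde P^{\mathrm{sym}}$ is then an optimal strategy exactly as in the unreduced Algorithm~\ref{alg:optimal probe}, consistent with footnote~\cite{invariant_strategy_note}. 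The only delicate bookkeeping is keeping the two group actions distinct — $G_\pi$ on the process space $\mathcal W=\otimes_i(\mathcal H_{2i-1}\otimes\mathcal H_{2i})$ and $G_\pi'$ on the decomposition space $\bigotimes_i\mathbb C^s\cong\mathbb C^r$ — and exploiting the reality of both to neutralize the transpose appearing in~(\ref{eq:optimal condition}).
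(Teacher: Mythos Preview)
Your proposal is correct and follows essentially the same approach as the paper: invoke Lemma~\ref{app:lem:qfi h symmetry} to fix a permutation-invariant $h^{(\mathrm{opt})}$, note that $\tilde{\mathsf P}$ is permutation invariant, show that each conjugate $G_\pi\tilde P^{(\mathrm{opt})}G_\pi^\dagger$ preserves both the objective value and the constraint~(\ref{eq:optimal condition}) via the intertwining relations $G_\pi\mathbf N_\phi=\mathbf N_\phi G_\pi'$, $G_\pi\dot{\mathbf N}_\phi=\dot{\mathbf N}_\phi G_\pi'$, and then average. Your treatment is in fact slightly more explicit than the paper's about the role of the reality of $G_\pi$ in passing the transpose through conjugation and about the commutation $[G_\pi',h^{(\mathrm{opt})}]=0$, both of which the paper uses implicitly in its Eq.~(\ref{app:eq:optimal strategy constraint permutation-invariant}).
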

\begin{proof}
By definition $\tilde{\mathsf P}$ is permutation invariant. By Lemma \ref{app:lem:qfi h symmetry} we can choose a permutation-invariant $h^{(\mathrm{opt})}$ and thus $\Omega_\phi\left(h^{(\mathrm{opt})}\right)$ is also permutation invariant. If $\tilde P^{(\mathrm{opt})}$ is an optimal solution of $\tilde P$, then $G_\pi \tilde P^{(\mathrm{opt})} G_\pi^\dagger$ for any $\pi\in S_N$ is also an optimal solution, since
\begin{equation}
    \Tr\left[G_\pi\tilde P^{(\mathrm{opt})}G_\pi^\dagger\Omega_\phi\left(h^{(\mathrm{opt})}\right)\right]=\Tr\left[\tilde P^{(\mathrm{opt})}G_\pi^\dagger\Omega_\phi\left(h^{(\mathrm{opt})}\right)G_\pi\right]=\Tr\left[\tilde P^{(\mathrm{opt})}\Omega_\phi\left(h^{(\mathrm{opt})}\right)\right]
\end{equation}
and for all $\mathscr H \in \mathbb H_r$
\begin{equation} \label{app:eq:optimal strategy constraint permutation-invariant}
    \begin{aligned}
    &\Re \left\{\Tr \left\{G_\pi\tilde P^{(\mathrm{opt})}G_\pi^\dagger\left[-\mathrm i \mathbf N_\phi G_\pi'\mathscr HG_\pi^{\prime\dagger}\left(\dot{\mathbf N}_\phi-\mathrm i \mathbf N_\phi h^{(\mathrm{opt})}\right)^{\dagger}\right]^T\right\} \right\} \\
    =&\Re \left\{\Tr \left\{\tilde P^{(\mathrm{opt})}G_\pi^\dagger\left[-\mathrm i \mathbf N_\phi G_\pi'\mathscr HG_\pi^{\prime\dagger}\left(\dot{\mathbf N}_\phi-\mathrm i \mathbf N_\phi h^{(\mathrm{opt})}\right)^{\dagger}\right]^TG_\pi\right\} \right\} \\
    =&\Re \left\{\Tr \left\{\tilde P^{(\mathrm{opt})}\left[-\mathrm i \mathbf N_\phi \mathscr H\left(\dot{\mathbf N}_\phi-\mathrm i \mathbf N_\phi h^{(\mathrm{opt})}\right)^{\dagger}\right]^T\right\} \right\}\\
    =&\,0,
    \end{aligned}
\end{equation}
having used $G_\pi^\dagger\mathbf N_\phi = \mathbf N_\phi G_\pi^{\prime\dagger}$ and $G_\pi^\dagger\dot{\mathbf N}_\phi = \dot{\mathbf N}_\phi G_\pi^{\prime\dagger}$ for any $\pi\in S_N$ in the second equality of Eq. (\ref{app:eq:optimal strategy constraint permutation-invariant}). Therefore, there exists a permutation-invariant solution $\tilde P = \frac{1}{N!}\sum_{\pi\in S_N}G_\pi \tilde P^{(\mathrm{opt})} G_\pi^\dagger$.
\end{proof}
Now by following the same line of arguments as used from Eqs. (\ref{app:eq:tilde Sij}) to (\ref{app:eq:dual affine space constraint permutation-invariant}), we define 
\begin{equation}
    \begin{aligned}
    O:=&\,\frac{1}{N!}\sum_{\pi\in S_N} G_\pi \left[-\mathrm i \mathbf N_\phi \mathscr H\left(\dot{\mathbf N}_\phi-\mathrm i \mathbf N_\phi h^{(\mathrm{opt})}\right)^{\dagger}\right]^TG_\pi^\dagger \\ =&\,\left[-\mathrm i \mathbf N_\phi \tilde{\mathscr H}\left(\dot{\mathbf N}_\phi-\mathrm i \mathbf N_\phi h^{(\mathrm{opt})}\right)^{\dagger}\right]^T,
    \end{aligned}
\end{equation}
having defined the permutation-invariant $\tilde{\mathscr H}:=\frac{1}{N!}\sum_{\pi\in S_N}G_\pi'\mathscr HG_\pi^{\prime\dagger}$. Similar to the arguments in Appendix \ref{app:alg optimal probe}, by choosing a basis $\left\{\tilde{\mathscr H}^i\right\}_{i=1}^J$ for the permutation-invariant subspace of $\mathbb H_{r\times r}$, where $J=\binom{N+s^2-1}{s^2-1}$, Eq. (\ref{app:eq:optimal strategy constraint}) can be reformulated as a set of $J$ linear constraints. We further define 
\begin{equation}
    O^i := \left[-\mathrm i \mathbf N_\phi \tilde{\mathscr H}^i\left(\dot{\mathbf N}_\phi-\mathrm i \mathbf N_\phi h^{(\mathrm{opt})}\right)^{\dagger}\right]^T
\end{equation}
for $i=1,\dots,J$. Now we can decompose $O^i =U \left[\bigoplus_{|\mu|=N} \mathds{1}\left(d_{\mu}\right) \otimes \tilde O^{i,\mu} \right] U^\dagger$, $\tilde P =U \left[\bigoplus_{|\mu|=N} \mathds{1}\left(d_{\mu}\right) \otimes \tilde P^\mu \right] U^\dagger$, $\Omega_\phi\left(h^{(\mathrm{opt})}\right) =U \left[\bigoplus_{|\mu|=N} \mathds{1}\left(d_{\mu}\right) \otimes \Omega_\phi^\mu\left(h^{(\mathrm{opt})}\right) \right] U^\dagger$, and then reformulate the optimization problem as the following reduced form:
\begin{equation}
    \begin{aligned}
    \max_{\tilde P^\mu}&\sum_{|\mu|=N} d_\mu \Tr\left[\tilde P^\mu\Omega_\phi^\mu\left(h^{(\mathrm{opt})}\right)\right], \\
    \mathrm{s.t.}\ &\sum_{|\mu|=N} d_\mu\Re \left[\Tr \left(\tilde P^\mu O^{i,\mu}\right) \right]
        =0\ \mathrm{for}\ \mathrm{all}\ i=1,\dots,J.
    \end{aligned}
\end{equation}
Recall that $\tilde P=\sum_{i=1}^K q^iS^i$ for $S^i\in\mathsf S^i$. By choosing an affine basis $\{Q^{i,j}\}_{j=1}^{L_i}$ for $\overline{\mathsf S}^i$ we can also characterize $S^i\in\mathsf S^i$ by a set of linear constraints $\Tr\left(S^iQ^{i,j}\right)=1$ for $i=1,\dots,K$ and $j=1,\dots,L_i$, and follow the same routine to tackle the constraints on the permutation-invariant subspace. Thus both the number of variables and the number of constraints are polynomial with respect to $N$. 

\section{Evaluation of QFI using different strategies} \label{app:qfi all strategies}
In this section we provide explicit formulas of the QFI for all strategy sets considered in the main text, in the forms which can be numerically solved by SDP. Without the positivity constraints, parallel, sequential and general indefinite-causal-order strategy sets are affine spaces themselves, while quantum SWITCH and causal superposition strategy sets are convex hulls of affine spaces. In some cases the result of Theorem \ref{thm:qfi} can be simplified a bit, as it is possible to trace over certain subspace while formulating the primal problem at the beginning. 
\subsection{Parallel strategies}
When definite causal order is obeyed, a strategy can be described by a \emph{quantum comb} \cite{Chiribella2008PRL,Chiribella2008EPL,Chiribella2009PRA}. The dual affine space is the set of dual combs without the positivity constraint \cite{Chiribella_2016}. For parallel strategies the primal problem can be written as
\begin{equation} 
    \begin{aligned}
    J^{(\mathsf{Par})}(N_\phi) =&\min_{h \in \mathbb H_r} \max_{\tilde{P}}  \Tr\left[\tilde{P}\Omega_\phi(h)\right], \\
    \mathrm{s.t.}\ &\,\tilde P \ge 0, \\
    &\,\tilde P= \mathds{1}_{2,4,\dots,2N} \otimes \tilde P^{(1)}, \\
    &\Tr \tilde P^{(1)}=1. 
    \end{aligned}
\end{equation}
Equivalently, the problem can be formulated as
\begin{equation} \label{app:eq:par primal min max}
    \begin{aligned}
    \min_{h \in \mathbb H_r} \max_{P} &\Tr\left[P\Tr_{2,4,\dots,2N}\Omega_\phi(h)\right], \\
    \mathrm{s.t.}\ &\,P \ge 0, \\
    &\Tr P=1.
    \end{aligned}
\end{equation}
The dual problem is given by
\begin{equation} \label{app:eq:par dual}
    \begin{aligned}
    \min_{\lambda, h}&\,\lambda, \\
    \mathrm{s.t.}\ &\,\lambda\mathds{1}_{1,3,\dots,2N-1} \ge \Tr_{2,4,\dots,2N}\Omega_\phi(h),
    \end{aligned}
\end{equation}
which simplifies the result directly obtained from Theorem \ref{thm:qfi} a bit. To solve the problem via SDP, we define a block matrix
\begin{equation}
    A := \left( 
\begin{array}{c | c} 
  \frac \lambda 4 \mathds{1}\left(r\prod_{i=1}^N d_{2i}\right) & 
  \begin{array}{c}
       \bra{n_{1,1}}  \\
       \vdots \\
       \bra{n_{r,\prod_{i=1}^N d_{2i}}}
  \end{array}\\ 
  \hline 
  \begin{array}{ccc}
       \ket{n_{1,1}} & \hdots & \ket{n_{r,\prod_{i=1}^N d_{2i}}}  
  \end{array} & \mathds{1}_{1,3,\dots,2N-1}
 \end{array} 
\right),
\end{equation}
wherein $\mathds{1}(d)$ denotes a $d$-dimensional identity matrix, and
\begin{equation} \label{app:eq:nij par}
    \ket{n_{i,j}} := \braket{j}{\dot{\tilde{N}}_{\phi,i}^{*}},
\end{equation}
where $\ket{\dot{\tilde{N}}_{\phi,j}} = \ket{\dot{N}_{\phi,j}}-\mathrm i \sum_k\ket{N_{\phi,k}}h_{kj}$ and $\left\{\ket{j}, \ j=1,\dots,\prod_{k=1}^Nd_{2k}\right\}$ forms an orthonormal basis of $\otimes_{k=1}^N \mathcal H_{2k}$, having assumed that the identity map trivially acts on the subspace where the dual vector $\bra{j}$ does not affect. Note that 
\begin{equation}
    \sum_{i,j}\dyad{n_{i,j}} = \frac 1 4 \Tr_{2,4,\dots,2N}\Omega_\phi(h).
\end{equation}

By Schur complement lemma \cite[Theorem 1.12]{Horn2005}, the constraint in Eq. (\ref{app:eq:par dual}) is equivalent to the requirement that $A \ge 0$. Hence, the QFI for parallel strategies is solved by
\begin{equation}
    \begin{aligned}
    \min_{\lambda, h}&\,\lambda, \\
    \mathrm{s.t.}\ &\,A\ge 0, \\
    &\,h\in \mathbb H_r.
    \end{aligned}
\end{equation}
The problem can be solved by SDP since $h$ is incorporated linearly in the blocks of $A$.

\emph{Symmetry reduction}.—We can reduce the problem using permutation symmetry. For $\mathsf{Par}$, the set $\tilde{\mathsf P}$ is given by $\tilde{\mathsf P} = \left\{S\ge0\middle|S\in\mathsf S\right\}$, and the affine space $\mathsf S$ is permutation invariant. As explained in Appendix \ref{app:symmetry optimization} we can decompose the permutation-invariant $h=U' \left[\bigoplus_{|\mu|=N} \mathds{1}\left(d_\mu\right) \otimes h^\mu \right] U^{\prime\dagger}$ with $h^\mu$ as an $m_\mu'\times m_\mu'$ matrix. $Q\in \overline{\mathsf S}$ is characterized by the constraint $\Tr_{2,4,\dots,2N} Q = \mathds{1}_{1,3,\dots,2N-1}$, and we can decompose $Q = U \left[\bigoplus_{|\mu|=N} \mathds{1}\left(d_\mu\right) \otimes Q^\mu \right] U^\dagger$ with $Q^\mu$ as an $m_\mu\times m_\mu$ matrix. If we define 
\begin{equation}
    A^\mu := \left( 
    \begin{array}{c | c} 
  \frac \lambda 4 \mathds{1}\left(d_\mu m_\mu'\right) & 
  \left(U^{\mu,1\dagger} \dot{\tilde{\mathbf N}}_\phi^{\mu*}\right)^\dagger \\ 
  \hline 
  U^{\mu,1\dagger} \dot{\tilde{\mathbf N}}_\phi^{\mu*}& Q^\mu
 \end{array} 
\right),
\end{equation}
where $\dot{\tilde{\mathbf N}}_\phi^\mu$ is given by Eq. (\ref{app:eq:N mu}), then by Theorem \ref{app:thm:qfi Q symmetry} we can reformulate the optimization problem as
\begin{equation}
    \begin{aligned}
    \min_{\lambda,Q^\mu,h}&\,\lambda, \\
    \mathrm{s.t.}\ &\,A^\mu\ge 0,\ h^\mu\in \mathbb H_{m_\mu'},\ |\mu|=N, \\
    &\Tr_{2,4,\dots,2N} Q = \mathds{1}_{1,3,\dots,2N-1}.
    \end{aligned}
\end{equation}
The constraint $\Tr_{2,4,\dots,2N} Q = \mathds{1}_{1,3,\dots,2N-1}$ only requires to be explicitly characterized on the permutation-invariant subspace, since $\Tr_{2,4,\dots,2N} Q$ is permutation invariant. Therefore, not only the number of scalar variables but also the number of constraints in terms of scalar variables are polynomial with respect to $N$.

\subsection{Sequential strategies}
For sequential strategies the problem can be written as (having traced over $\mathcal H_{2N}$)
\begin{equation}
    \begin{aligned}
    J^{(\mathsf{Seq})}(N_\phi) =&\min_{h \in \mathbb H_r} \max_{P^{(k)}}\Tr\left[P^{(N)}\Tr_{2N}\Omega_\phi(h) \right], \\
    \mathrm{s.t.}\ &\,P^{(N)}\ge 0, \\
    &\Tr_{2k-1} P^{(k)} =  \mathds{1}_{2k-2} \otimes P^{(k-1)}, \ k=2,\dots, N, \\
    &\Tr P^{(1)}=1,
    \end{aligned}
\end{equation}
from which it follows that the dual problem is
\begin{equation}
    \begin{aligned}
    \min_{\lambda,Q^{(k)},h}&\,\lambda, \\
    \mathrm{s.t.}\ &\,\lambda\mathds{1}_{2N-1} \otimes Q^{(N-1)} \ge \Tr_{2N}\Omega_\phi(h), \\
    &\Tr_{2k}Q^{(k)} = \mathds{1}_{2k-1} \otimes Q^{(k-1)}, \ k=2,\dots, N-1, \\
    &\Tr_2 Q^{(1)} = \mathds{1}_1,
    \end{aligned}
\end{equation}
where $Q^{(N-1)}$ is Hermitian. 

Similarly, in order to solve the problem via SDP we rewrite it as
\begin{equation}
    \begin{aligned}
    \min_{\lambda, Q^{(k)}, h}&\,\lambda, \\
    \mathrm{s.t.}\ &\,A\ge 0, \\
    &\Tr_{2k}Q^{(k)} = \mathds{1}_{2k-1} \otimes Q^{(k-1)}, \ k=2,\dots, N-1, \\
    &\Tr_2 Q^{(1)} = \mathds{1}_1, \\
    &\,h\in \mathbb H_r,
    \end{aligned}
\end{equation}
for
\begin{equation}
    A := \left( 
\begin{array}{c | c} 
  \frac \lambda 4 \mathds{1}\left(rd_{2N}\right) & 
  \begin{array}{c}
       \bra{n_{1,1}}  \\
       \vdots \\
       \bra{n_{r,d_{2N}}}
  \end{array}\\ 
  \hline 
  \begin{array}{ccc}
       \ket{n_{1,1}} & \hdots & \ket{n_{r,d_{2N}}}  
  \end{array} & \mathds{1}_{2N-1} \otimes Q^{(N-1)}
 \end{array} 
\right),
\end{equation}
having defined
\begin{equation}
    \ket{n_{i,j}} := \braket{j}{\dot{\tilde{N}}_{\phi,i}^{*}},
\end{equation}
where $\ket{\dot{\tilde{N}}_{\phi,j}} = \ket{\dot{N}_{\phi,j}}-\mathrm i \sum_k\ket{N_{\phi,k}}h_{kj}$ and $\left\{\ket{j}, \ j=1,\dots,d_{2N}\right\}$ forms an orthonormal basis of $\mathcal H_{2N}$.

\subsection{Quantum SWITCH strategies}
We first formally define a quantum SWITCH strategy set $\mathsf{SWI}$ as the collection of $P \in \mathsf{Strat}$ such that
\begin{equation}
    P = \left(\rho_{T,A,C} \right) * \dyad{P^{(\mathrm{SW})}},\ \rho_{T,A,C} \ge 0,\ \Tr \rho_{T,A,C} = 1,
\end{equation}
where $\ket{P^{(\mathrm{SW})}}
:= \lvert I \rrangle_{A,F_A}\sum_{\pi\in S_N}\left[\ket{\pi}_C \lvert I \rrangle_{T,2\pi(1)-1}\left(\otimes_{i=1}^{N-1}\lvert I \rrangle_{2\pi(i),2\pi(i+1)-1}\right) \lvert I \rrangle_{2\pi(N),F_T} \ket{\pi}_{F_C}\right]$ corresponds to a (generalized) quantum SWITCH for $N$ operations, each permutation $\pi$ is an element of the symmetric group $S_N$ whose order is $N!$, and $\{\ket{\pi}_C\}$ forms an orthonormal basis. We suppose each $\mathcal H_i$ for $i=1,\dots,2N$ has the same dimension $d_1$. $\mathcal H_T\simeq\mathcal H_i$ denotes the input space of the target system, $\mathcal H_A$ the ancillary space, and $\mathcal H_C$ the space of the control system. Correspondingly, $\mathcal H_{F_T}$, $\mathcal H_{F_A}$ and $\mathcal H_{F_C}$ denote the future output spaces of each part. The global future space $\mathcal H_F = \mathcal H_{F_T} \otimes \mathcal H_{F_A} \otimes \mathcal H_{F_C}$.

Using the quantum SWITCH strategy set, after tracing over the global future space $\mathcal H_F$ the QFI evaluation problem is written as
\begin{equation}
    \begin{aligned}
    J^{(\mathsf{SWI})}(N_\phi) =&\min_{h \in \mathbb H_r} \max_{\tilde{P}}  \Tr\left[\tilde{P}\Omega_\phi(h)\right], \\
    \mathrm{s.t.}\ &\,\tilde P = \sum_{\pi\in S_N}q^{\pi}\rho_{2\pi(1)-1}^{\pi}\left(\otimes_{i=1}^{N-1}\lvert I \rrangle_{2\pi(i),2\pi(i+1)-1}\llangle I \rvert_{2\pi(i),2\pi(i+1)-1}\right) \otimes\mathds{1}_{2\pi(N)}, \\
    &\sum_{\pi\in S_N} q^\pi=1, \\
    &\,\rho_{2\pi(1)-1}^\pi \ge 0,\ \Tr\rho_{2\pi(1)-1}^\pi = 1,\ q^\pi\ge0,\ \pi\in S_N,
    \end{aligned}
\end{equation}
where the superscript $\pi$ of an operator denotes a permutation label, and the subscript denotes the subspace it lies in. Note that the primal set of $\tilde P$ is a convex hull of affine spaces. Equivalently the problem can be rewritten as
\begin{equation}
    \begin{aligned}
    \min_{h \in \mathbb H_r} \max_{\rho_{2\pi(1)-1}^\pi,q^\pi}&\sum_{\pi\in S_N}\Tr\left[q^\pi\rho_{2\pi(1)-1}^\pi\left(\otimes_{i=1}^{N-1}\llangle I \rvert_{2\pi(i),2\pi(i+1)-1}\right)\Tr_{2\pi(N)}\Omega_\phi(h)\left(\otimes_{j=1}^{N-1}\lvert I \rrangle_{2\pi(j),2\pi(j+1)-1}\right)\right], \\
    \mathrm{s.t.}\ &\sum_{\pi\in S_N} q^\pi=1, \\
    &\,\rho_{2\pi(1)-1}^\pi \ge 0,\ \Tr\rho_{2\pi(1)-1}^\pi = 1,\ q^\pi\ge0,\ \pi\in S_N.
    \end{aligned}
\end{equation}
Following the method in the proof of Theorem \ref{thm:qfi}, the dual problem is given by
\begin{equation}
    \begin{aligned}
    \min&\,\lambda, \\
    \mathrm{s.t.}\ &\,\lambda\mathds{1}_{2\pi(1)-1} \ge \Omega_\phi^\pi(h),\ \Omega_\phi^\pi(h):=\left(\otimes_{i=1}^{N-1}\llangle I \rvert_{2\pi(i),2\pi(i+1)-1}\right)\Tr_{2\pi(N)}\Omega_\phi(h)\left(\otimes_{j=1}^{N-1}\lvert I \rrangle_{2\pi(j),2\pi(j+1)-1}\right),\ \pi\in S_N.
    \end{aligned}
\end{equation}
Equivalently in an SDP form the problem is written as
\begin{equation} 
    \begin{aligned}
    \min_{\lambda,h}&\,\lambda, \\
    \mathrm{s.t.}\ &\,A^\pi \ge 0,\ \pi\in S_N, \\
    &\,h\in \mathbb{H}_r,
    \end{aligned}
\end{equation}
having defined
\begin{equation}
    A^\pi := \left( 
\begin{array}{c | c} 
  \frac \lambda 4 \mathds{1}(rd_1) & 
  \begin{array}{c}
       \bra{n_{1,1}^\pi}  \\
       \vdots \\
       \bra{n_{r,d_1}^\pi}
  \end{array}\\ 
  \hline 
  \begin{array}{ccc}
       \ket{n_{1,1}^\pi} & \hdots & \ket{n_{r,d_1}^\pi}  
  \end{array} & \mathds{1}_{2\pi(1)-1}
 \end{array} 
\right),
\end{equation}
for
\begin{equation}
    \ket{n_{i,j}^\pi} := \bra{j^\pi}\left(\otimes_{k=1}^{N-1}\llangle I \rvert_{2\pi(k),2\pi(k+1)-1}\right)\ket{\dot{\tilde{N}}_{\phi,i}^{*}},
\end{equation}
where $\ket{\dot{\tilde{N}}_{\phi,j}} = \ket{\dot{N}_{\phi,j}}-\mathrm i \sum_k\ket{N_{\phi,k}}h_{kj}$ and $\{\ket{j^\pi}\}$ forms an orthonormal basis of $\mathcal H_{2\pi(N)}$. 

\emph{Symmetry reduction}.—For $\mathsf{SWI}$, by Theorem \ref{app:thm:qfi h symmetry}, $h$ can be taken to be permutation invariant and the constraint corresponding to one permutation $\pi$ (e.g., the identity element of $S_N$) is sufficient. As explained in Appendix \ref{app:symmetry optimization} we can decompose the permutation-invariant $h=U' \left[\bigoplus_{|\mu|=N} \mathds{1}\left(d_\mu\right) \otimes h^\mu \right] U^{\prime\dagger}$ with $h^\mu$ as an $m_\mu'\times m_\mu'$ matrix. We define 
\begin{equation}
    \mathbf n_{j_{2N}}^\mu := \bra{j_{2N}}\left(\otimes_{i=1}^{N-1}\llangle I \rvert_{2i,2i+1}\right)\dot{\tilde{\mathbf N}}_\phi^{\mu*},
\end{equation}
where $\{\ket{j_{2N}}\}$ forms an orthonormal basis of $\mathcal H_{2N}$ and $\dot{\tilde{\mathbf N}}_\phi^\mu$ is given by Eq. (\ref{app:eq:N mu}). If we define 
\begin{equation}
    A^{(\mathrm{inv})} := \left( 
    \begin{array}{c | c} 
  \frac \lambda 4 \mathds{1}(rd_1) & 
  \begin{array}{c}
       \mathbf n_1^{\mu^1\dagger} \\
       \vdots \\
       \mathbf n_{d_1}^{\mu^I\dagger}
  \end{array} \\
  \hline 
  \begin{array}{ccc}
       \mathbf n_1^{\mu^1} & \hdots & \mathbf n_{d_1}^{\mu^I} 
  \end{array}
 & \mathds{1}_1
 \end{array} 
\right),
\end{equation}
then by Theorem \ref{app:thm:qfi h symmetry} we can reformulate the optimization problem as
\begin{equation}
    \begin{aligned}
    \min_{\lambda,h^\mu}&\,\lambda, \\
    \mathrm{s.t.}\ &\,A^{(\mathrm{inv})}\ge 0,\\
    &\,h^\mu\in \mathbb H_{m_\mu'},\ |\mu|=N.
    \end{aligned}
\end{equation}

\subsection{Causal superposition strategies}
Following a similar route for the causal superposition strategy set the problem can be written as
\begin{equation}
    \begin{aligned}
    J^{(\mathsf{Sup})}(N_\phi) =&\min_{h \in \mathbb H_r} \max_{P^{\pi,(N)},q^\pi} \sum_{\pi\in S_N}\Tr\left(q^\pi P^{\pi,(N)}\Tr_{2\pi(N)}\Omega_\phi(h)\right), \\
    \mathrm{s.t.}\ &\sum_{\pi\in S_N} q^\pi=1, \\
    &\,q^\pi\ge0,\ P^{\pi,(N)}\ge 0,\ \Tr P^{\pi,(1)}=1,\ \Tr_{2k-1} P^{\pi,(k)} =  \mathds{1}_{2k-2} \otimes P^{\pi,(k-1)}\ \mathrm{for}\ k=2,\dots, N,\ \pi\in S_N.
    \end{aligned}
\end{equation}
For each causal order in the superposition the dual affine space is the set of dual combs without the positivity constraint. Thus the dual problem is given by
\begin{equation}
    \begin{aligned}
    \min&\,\lambda, \\
    \mathrm{s.t.}\ &\,\lambda\mathds{1}_{2\pi(N)-1} \otimes Q^{\pi,(N-1)} \ge \Tr_{2\pi(N)}\Omega_\phi(h),\ \Tr_{2\pi(1)} Q^{\pi,(1)} = \mathds{1}_{2\pi(1)-1},\ \pi\in S_N, \\
    &\Tr_{2\pi(k)}Q^{\pi,(k)} = \mathds{1}_{2\pi(k)-1} \otimes Q^{\pi,(k-1)}\ \mathrm{for}\ k=2,\dots, N-1,\ \pi\in S_N,
    \end{aligned}
\end{equation}
where the constraints hold for any $\pi\in S_N$. To solve the problem via SDP we can formulate it as
\begin{equation} 
    \begin{aligned}
    \min_{\lambda,h}&\,\lambda, \\
    \mathrm{s.t.}\ &\,A^\pi \ge 0,\ \Tr_{2\pi(1)} Q^{\pi,(1)} = \mathds{1}_{2\pi(1)-1},\ \Tr_{2\pi(k)}Q^{\pi,(k)} = \mathds{1}_{2\pi(k)-1} \otimes Q^{\pi,(k-1)}\ \mathrm{for}\ k=2,\dots, N-1,\ \pi\in S_N, \\
    &\,h\in \mathbb{H}_r,
    \end{aligned}
\end{equation}
having defined
\begin{equation}
    A^\pi := \left( 
\begin{array}{c | c} 
  \frac \lambda 4 \mathds{1}\left(rd_{2\pi(N)}\right) & 
  \begin{array}{c}
       \bra{n_{1,1}^\pi}  \\
       \vdots \\
       \bra{n_{r,d_{2\pi(N)}}^\pi}
  \end{array}\\ 
  \hline 
  \begin{array}{ccc}
       \ket{n_{1,1}^\pi} & \hdots & \ket{n_{r,d_{2\pi(N)}}^\pi}  
  \end{array} & \mathds{1}_{2\pi(N)-1} \otimes Q^{\pi,(N-1)}
 \end{array} 
\right),
\end{equation}
for
\begin{equation}
    \ket{n_{i,j}^\pi} := \bra{j^\pi}\ket{\dot{\tilde{N}}_{\phi,i}^{*}},
\end{equation}
where $\ket{\dot{\tilde{N}}_{\phi,j}} = \ket{\dot{N}_{\phi,j}}-\mathrm i \sum_k\ket{N_{\phi,k}}h_{kj}$ and $\{\ket{j^\pi}\}$ forms an orthonormal basis of $\mathcal H_{2\pi(N)}$. 

\emph{Symmetry reduction}.—Similar to $\mathsf{SWI}$, by Theorem \ref{app:thm:qfi h symmetry}, for $\mathsf{Sup}$ we take a permutation-invariant $h$ and only need the constraint corresponding to the identity element of $S_N$. As explained in Appendix \ref{app:symmetry optimization} we can decompose the permutation-invariant $h=U' \left[\bigoplus_{|\mu|=N} \mathds{1}\left(d_\mu\right) \otimes h^\mu \right] U^{\prime\dagger}$ with $h^\mu$ as an $m_\mu'\times m_\mu'$ matrix. We define 
\begin{equation}
    \mathbf n_{j_{2N}}^\mu := \bra{j_{2N}}\dot{\tilde{\mathbf N}}_\phi^{\mu*},
\end{equation}
where $\{\ket{j_{2N}}\}$ forms an orthonormal basis of $\mathcal H_{2N}$ and $\dot{\tilde{\mathbf N}}_\phi^\mu$ is given by Eq. (\ref{app:eq:N mu}). If we define 
\begin{equation}
    A^{(\mathrm{inv})} := \left( 
    \begin{array}{c | c} 
  \frac \lambda 4 \mathds{1}\left(rd_{2N}\right) & 
  \begin{array}{c}
       \mathbf n_1^{\mu^1\dagger} \\
       \vdots \\
       \mathbf n_{d_1}^{\mu^I\dagger}
  \end{array} \\
  \hline 
  \begin{array}{ccc}
       \mathbf n_1^{\mu^1} & \hdots & \mathbf n_{d_1}^{\mu^I} 
  \end{array}
 & \mathds{1}_{2N-1} \otimes Q^{(N-1)}
 \end{array} 
\right),
\end{equation}
then by Theorem \ref{app:thm:qfi h symmetry} we can reformulate the optimization problem as
\begin{equation}
    \begin{aligned}
    \min_{\lambda, Q^{(k)}, h}&\,\lambda, \\
    \mathrm{s.t.}\ &\,A^{(\mathrm{inv})}\ge 0, \\
    &\Tr_{2k}Q^{(k)} = \mathds{1}_{2k-1} \otimes Q^{(k-1)}, \ k=2,\dots, N-1, \\
    &\Tr_2 Q^{(1)} = \mathds{1}_1, \\
    &\,h^\mu\in \mathbb H_{m_\mu'},\ |\mu|=N.
    \end{aligned}
\end{equation}

\subsection{General indefinite-causal-order strategies}
In this case the explicit linear constraints on strategies have been derived in \cite{Araujo2015}, and the dual affine space turns out to be the set of CJ operators of $N$-partite no-signaling quantum channels without the positivity constraint \cite{Chiribella2013PRA,Chiribella_2016}, mathematically defined by
\begin{equation}
    \begin{aligned}
    \Tr_{2k}Q &= \frac{\mathds{1}_{2k-1}}{d_{2k-1}}\otimes\Tr_{2k-1,2k}Q,\ k=1,\dots,N, \\
    \Tr Q&=\prod_{i=1}^Nd_{2i-1}.
    \end{aligned}
\end{equation}
The intuitive interpretation for no-signaling channels is that locally the input of each channel only affects the output of this single channel, but cannot transmit any information to $N-1$ other channels. To solve the QFI evaluation problem via SDP we can write it in the form
\begin{equation}
    \begin{aligned}
    \min_{\lambda,Q,h}&\,\lambda, \\
    \mathrm{s.t.}\ &\,A\ge0, \\
    &\Tr_{2k}Q = \frac{\mathds{1}_{2k-1}}{d_{2k-1}}\otimes\Tr_{2k-1,2k}Q,\ k=1,\dots,N, \\
    &\Tr Q=\prod_{i=1}^Nd_{2i-1}, \\
    &\,h\in\mathbb H_r,
    \end{aligned}
\end{equation}
having defined
\begin{equation}
    A = \left( 
\begin{array}{c | c} 
  \frac \lambda 4 \mathds{1}\left(r\right) & 
  \begin{array}{c}
       \bra{\dot{\tilde{N}}_{\phi,1}^{*}} \\
       \vdots \\
       \bra{\dot{\tilde{N}}_{\phi,r}^{*}}
  \end{array}\\ 
  \hline 
  \begin{array}{ccc}
       \ket{\dot{\tilde{N}}_{\phi,1}^{*}} & \hdots & \ket{\dot{\tilde{N}}_{\phi,r}^{*}}
  \end{array} & Q
 \end{array} 
\right).
\end{equation}

\emph{Symmetry reduction}.—For $\mathsf{ICO}$, by Lemmas \ref{app:lem:qfi h symmetry} and \ref{app:lem:qfi Q symmetry}, both $h$ and $Q$ can be taken to be permutation invariant. As explained in Appendix \ref{app:symmetry optimization} we can decompose the permutation-invariant $h=U' \left[\bigoplus_{|\mu|=N} \mathds{1}\left(d_\mu\right) \otimes h^\mu \right] U^{\prime\dagger}$ with $h^\mu$ as an $m_\mu'\times m_\mu'$ matrix, and decompose the permutation-invariant $Q=U \left[\bigoplus_{|\mu|=N} \mathds{1}\left(d_\mu\right) \otimes Q^\mu \right] U^\dagger$ with $Q^\mu$ as an $m_\mu \times m_\mu$ matrix. If we define 
\begin{equation}
    A^\mu := \left( 
    \begin{array}{c | c} 
  \frac \lambda 4 \mathds{1}\left(d_\mu m_\mu'\right) & 
  \left(U^{\mu,1\dagger} \dot{\tilde{\mathbf N}}_\phi^{\mu*}\right)^\dagger \\ 
  \hline 
  U^{\mu,1\dagger} \dot{\tilde{\mathbf N}}_\phi^{\mu*}& Q^\mu
 \end{array} 
\right),
\end{equation}
where $\dot{\tilde{\mathbf N}}_\phi^\mu$ is given by Eq. (\ref{app:eq:N mu}), then by Theorem \ref{app:thm:qfi Q symmetry} we can reformulate the optimization problem as
\begin{equation}
    \begin{aligned}
    \min_{\lambda,Q^\mu,h}&\,\lambda, \\
    \mathrm{s.t.}\ &\,A^\mu\ge 0,\ h^\mu\in \mathbb H_{m_\mu'},\ |\mu|=N, \\
    &\Tr_{2N} Q = \frac{\mathds{1}_{2N-1}}{d_{2N-1}} \otimes \Tr_{2N-1,2N}Q, \\
    &\sum_{|\mu|=N} d_\mu\Tr Q^\mu=\prod_{i=1}^Nd_{2i-1},
    \end{aligned}
\end{equation}
having removed the redundant constraints on $Q$ by permutation symmetry. Similar to the case of $\mathsf{Par}$, we only need to consider the constraint on $Q$ on the permutation-invariant subspace, since both $Q$ and $\Tr_{2N-1,2N} Q$ are permutation invariant. Therefore, both the number of scalar variables and the number of constraints in terms of scalar variables are polynomial with respect to $N$.

\section{Complexity analysis} \label{app:complexity}
Here we refer to the number of real scalar variables concerned in optimization as the complexity. In Appendix \ref{app:qfi all strategies} we have presented both the original and the symmetry reduced QFI evaluation for all families of strategies considered in this work as applicable, from which we can obtain Table \ref{tab:complexity} in the main text.

Now we consider the algorithm for identifying an optimal strategy. Since the algorithm is based on Theorem \ref{thm:qfi}, its complexity is no less than that of the QFI evaluation. For the second step of the algorithm, by Lemma \ref{app:lem:optimal strategy P symmetry} there exists a permutation-invariant optimal strategy for all families of strategies except for the sequential one, and thus we can apply the group-invariant SDP to achieve the permutation-invariant solution. Recall that $\tilde P=\sum_{i=1}^K q^iS^i$ for $S^i\in\mathsf S^i$. In fact, for $\mathsf{SWI}$ and $\mathsf{Sup}$, we only need to characterize $S^1\in\mathsf S^1$ and obtain $\tilde P=\sum_{i=1}^K \frac 1K S^1$ due to the permutation invariance. In summary, taking both steps of the algorithm into account, we obtain Table \ref{tab:complexity of algorithm 1}:

\begin{table}[!htbp]
\caption{\label{tab:complexity of algorithm 1}\textbf{Complexity of Algorithm \ref{alg:optimal probe} for each family of strategies} (with respect to $N$). The asymptotic numbers of variables in optimization are compared between the original (Ori.) and group-invariant (Inv.) SDP. We denote $d:=d_1 d_2$ for $d_i:=\dim(\mathcal H_i)$ and $s:=\max_\phi\rank(E_\phi)\le d$.}
\begin{ruledtabular}
\begin{tabular}{cccccc}
SDP & $\mathsf{Par}$ & $\mathsf{Seq}$ & $\mathsf{SWI}$ & $\mathsf{Sup}$ & $\mathsf{ICO}$ \\
\colrule
Ori. & $O\left(\max(s,d_1)^N\right)$ & $O\left(d^N\right)$ & $O\left(N!\right)$ & $O\left(N!\,d^N\right)$ & $O\left(d^N\right)$ \\
Inv. & $O\left(N^{d^2-1}\right)$ & $O\left(d^N\right)$ & $O(N^{s^2-1})$ & $O\left(d^N\right)$ & $O\left(N^{d^2-1}\right)$\\
\end{tabular}
\end{ruledtabular}
\end{table}

As a concrete example, we apply the group-invariant SDP to the evaluation of the QFI $J^{(\mathsf{ICO})}$ for the general indefinite-causal-order strategies up to $N=5$. We consider the amplitude damping noise and take $\phi=1.0$, $t=1.0$ and $p=0.5$. As illustrated in Fig. \ref{fig:qfi ico growth}, the growth of $J^{(\mathsf{ICO})}$ is faster than linear growth but slower than quadratic growth. Table \ref{tab:Complexity growth ICO} compares the complexity between the original and the group-invariant SDP and indicates that the symmetry reduced approach can save the computational resources dramatically.

\begin{figure} [!htbp]
    \centering
    \includegraphics[width=\linewidth]{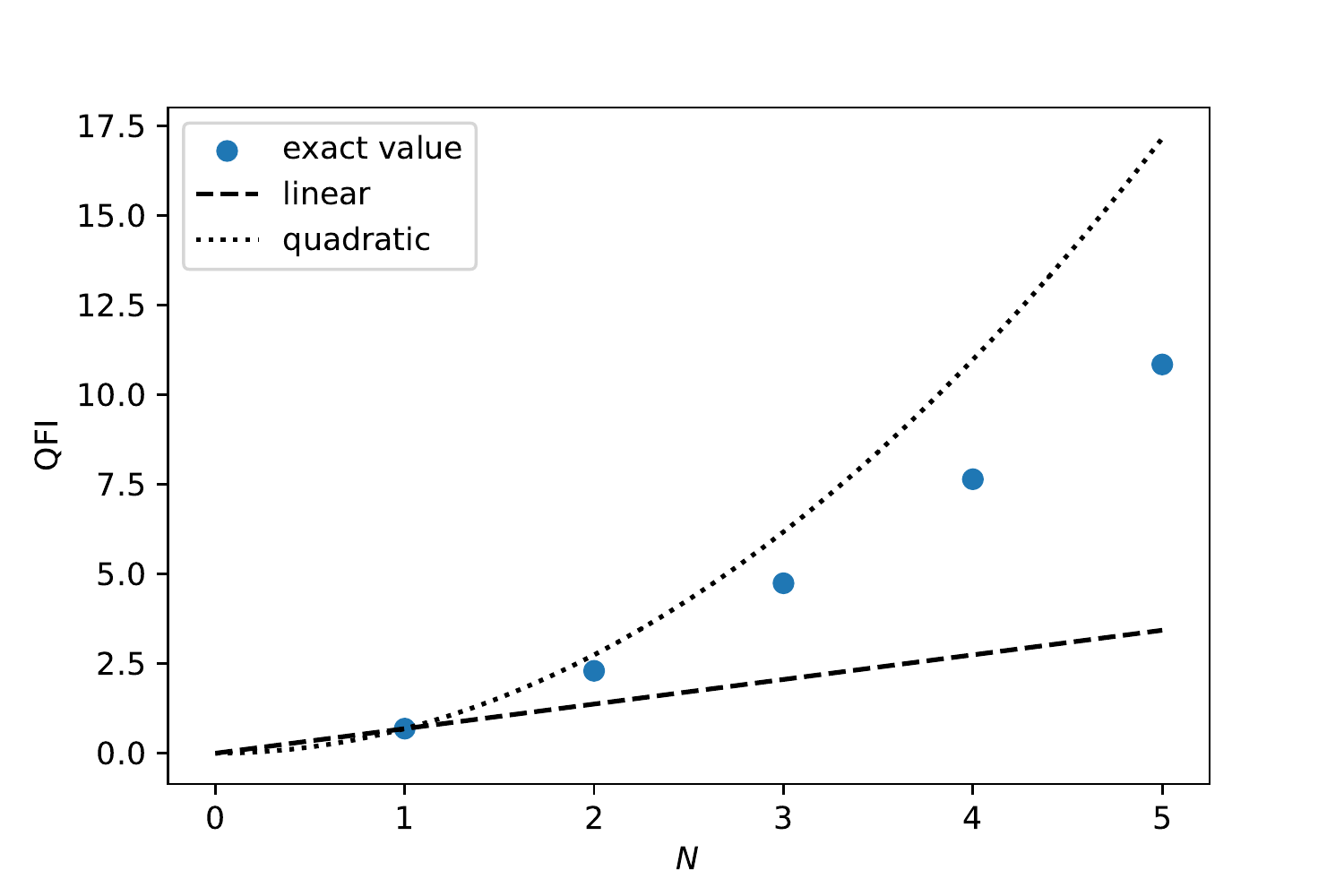}
    \caption{\textbf{Growth of QFI $J^{(\mathsf{ICO})}$ as $N$ increases.} For the amplitude damping noise, we take $\phi=1.0$, $t=1.0$ and $p=0.5$. The dashed and dotted lines illustrate the linear and quadratic growth with respect to $N$ respectively, while matching the QFI at $N=1$.}
    \label{fig:qfi ico growth}
\end{figure}

\begin{table}[!htbp]
\caption{\label{tab:Complexity growth ICO}\textbf{Number of real scalar variables concerned in the evaluation of $J^{(\mathsf{ICO})}$}. For the original SDP the total number of real scalar variables for $Q$, $h$ and $\lambda$ is $16^N+4^N+1$, while for the group-invariant SDP the total number of real scalar variables for $Q^\mu$, $h^\mu$ and $\lambda$ for all Young diagram labels $|\mu|=N$ is $\binom{N+15}{15}+\binom{N+3}{3}+1$.}
\begin{ruledtabular}
\begin{tabular}{cccccc}
$N$ & 1 & 2 & 3 & 4 & 5 \\
\colrule
No. of variables in the original SDP & 21 & 273 & 4161 & 65793 & 1049601 \\
No. of variables in the group-invariant SDP & 21 & 147 & 837 & 3912 & 15561\\
\end{tabular}
\end{ruledtabular}
\end{table}

\section{Supplementary numerical results} \label{app:supplementary numerical results}
\subsection{Hierarchy for the $N=3$ case}
Numerical results in this work are obtained by implementing SDP using the open source Python package CVXPY \cite{diamond2016cvxpy,agrawal2018rewriting} with the solver MOSEK \cite{mosek}. We plot the QFI for $N=3$, amplitude damping noise in Fig. \ref{fig:qfi all strategies tripartite} and observe a similar hierarchy of the estimation performance using parallel, sequential and indefinite-causal-order strategies. Different from the $N=2$ case presented in the main text, general indefinite-causal-order strategies indeed provide a small advantage over causal superposition strategies, which is presented in Table. \ref{tab:ICO Sup}.

\begin{figure} [!htbp]
    \centering
    \includegraphics[width=\linewidth]{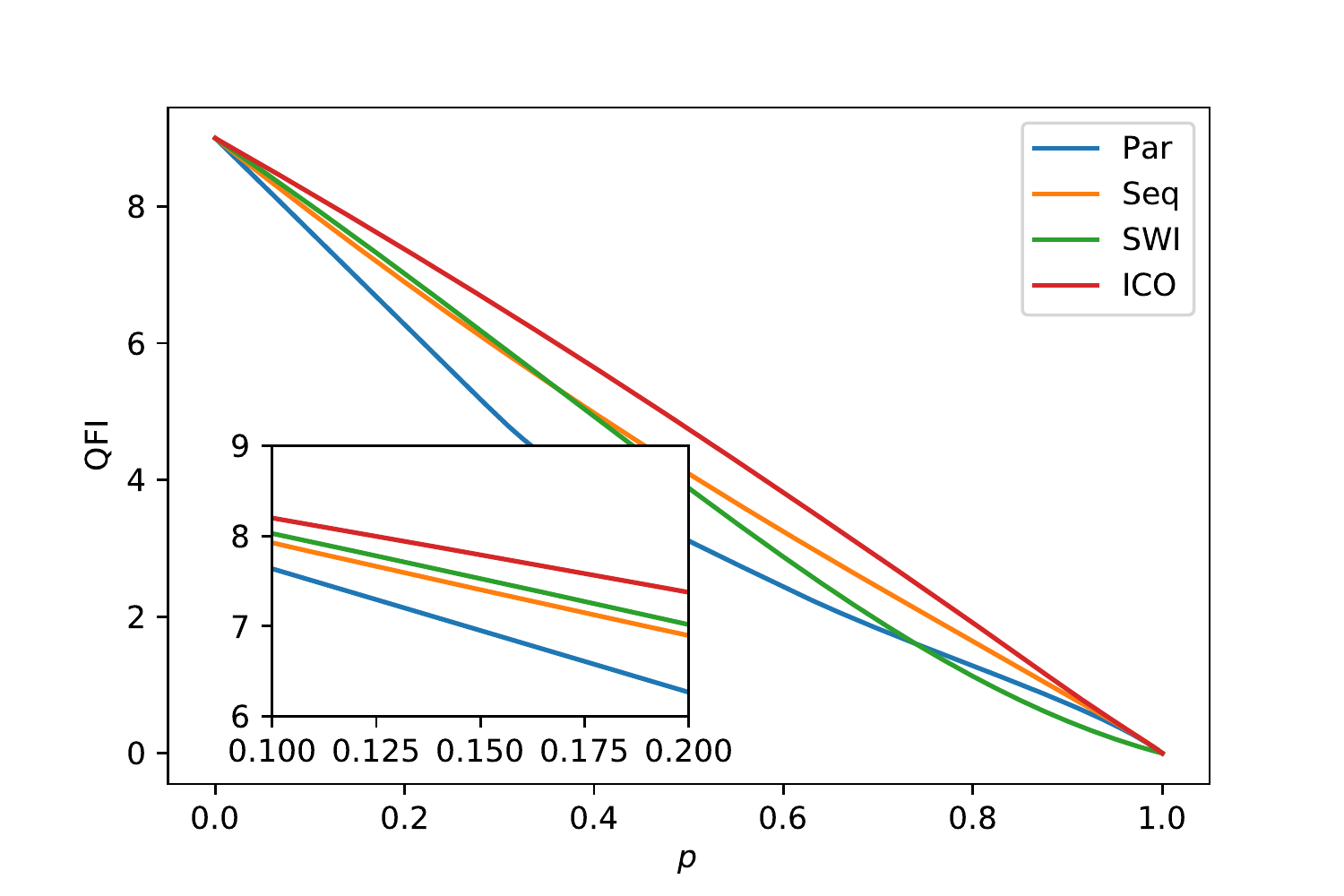}
    \caption{\textbf{Hierarchy of QFI using parallel, sequential, and indefinite-causal-order strategies} for $N=3$, the amplitude damping noise. We take $\phi=1.0$, fix the evolution time $t=1.0$ and vary the decay parameter $p$. To see the gaps between different strategies more clearly we insert an inset plot where the decay parameter $p$ ranges from 0.1 to 0.2.}
    \label{fig:qfi all strategies tripartite}
\end{figure}

\begin{table}[!htbp]
\caption{\label{tab:ICO Sup}\textbf{Hierarchy of QFI using $\mathsf{ICO}$ and $\mathsf{Sup}$.}}
\begin{tabular}{ c@{\qquad}c@{\qquad}c }
\toprule
$p$ & $J^{(\mathsf{Sup})}(N_\phi)$ & $J^{(\mathsf{ICO})}(N_\phi)$ \\
\colrule
0.1 & 8.185 & 8.200 \\
0.2 & 7.364 & 7.375 \\
0.3 & 6.523 & 6.524 \\
0.4 & 5.642 & 5.647 \\
0.5 & 4.725 & 4.743 \\
0.6 & 3.786 & 3.815 \\
0.7 & 2.832 & 2.870 \\
0.8 & 1.871 & 1.909 \\
0.9 & 0.918 & 0.930 \\
\botrule
\end{tabular}
\end{table}

On the other hand, analogous to the $N=2$ case, simple quantum SWITCH strategies without any additional intermediate control operations could have advantage over any definite-causal-order strategies when the decay parameter $p$ is small, but this advantage becomes more insignificant. This should not be surprising since the control can make a bigger difference as $N$ grows.

\subsection{Estimation of randomly sampled channels}
To demonstrate the universality of the hierarchy of different families of strategies considered in the main text, we randomly sample noise channels drawn from an ensemble of CPTP maps defined by Bruzda et al. in \cite{BRUZDA2009320}. In this work we only sample rank-2 qubit channels for $N=2$, which is enough to show the hierarchy. The sampling process is implemented via an open source Python package QuTiP \cite{JOHANSSON20121760,JOHANSSON20131234}. We set an error tolerance of $10^{-8}$, i.e., we claim $J_1>J_2$ only if the gap is no smaller than $10^{-8}$. We find that for 984 of 1000 random channels, a strict hierarchy $J^{(\mathsf{Par})}<J^{(\mathsf{Seq})}<J^{(\mathsf{Sup})}<J^{(\mathsf{ICO})}$ holds, implying that general indefinite-causal-order strategies can provide advantage over causal superposition strategies. In addition, we find that of the same 1000 channels $J^{(\mathsf{Par})}<J^{(\mathsf{SWI})}$ for 34 channels and $J^{(\mathsf{Seq})}<J^{(\mathsf{SWI})}$ only for 1 channel, so with a high probability quantum SWITCH strategies cannot outperform strategies following definite causal order for a random noise channel, which highlights the estimation enhancement from intermediate control in the general case.

\section{Comparison with asymptotic results} \label{app:compar asymp}
In this section we focus on strategies following definite causal order, i.e., parallel and sequential ones, and compare our results and those of the extensively studied asymptotic theory.

\subsection{Preliminaries}
We first introduce some basic notions. If we write the operation-sum representation of the channel $\mathcal E_\phi(\rho)=\sum_{i=1}^r K_{\phi,i}^\dagger \rho K_{\phi,i}$, where $\{K_{\phi,i}\}$ are a set of Kraus operators and $r$ is the rank of the channel, the channel QFI can be evaluated by optimization:
\begin{equation}
    J_Q^{(\mathrm{chan})}(\mathcal E_\phi) = 4\min_{h \in \mathbb H_r} \lVert \alpha \rVert,
\end{equation}
where $\lVert \cdot \rVert$ denotes the operator norm and $\alpha = \sum_i \dot{\tilde{K}}^\dagger_{\phi,i} \dot{\tilde{K}}_{\phi,i}$. Here $\dot{\tilde{K}}_{\phi,i} = \dot K_{\phi,i} - \mathrm i \sum_{j=1}^r h_{ij}K_{\phi,j}$ is nothing but the derivative of an equivalent Kraus representation, given an $r\times r$ Hermitian matrix $h$.

The upper bounds on QFI of $N$ quantum channels have been derived for both sequential and parallel strategies. For parallel strategies an asymptotically tight upper bound is \cite{Fujiwara2008,Demkowicz-Dobrzanski2012}
\begin{equation} \label{eq:parallel bound}
    J^{(\mathsf{Par})}(N_\phi) \le 4 \min_{h \in \mathbb H_r} [N\lVert \alpha \rVert+N(N-1)\lVert\beta\rVert^2],
\end{equation}
where $\beta = \mathrm i \sum_i K_{\phi,i}^{\dagger} \dot {\tilde K}_{\phi,i} $. An upper bound was also derived for sequential strategies \cite{Demkowicz-Dobrzanski14PRL,PRXQuantum.2.010343}:
\begin{equation} \label{eq:sequential bound}
    J^{(\mathsf{Seq})}(N_\phi) \le 4 \min_{h \in \mathbb H_r} [N\lVert \alpha \rVert+N(N-1)\lVert\beta\rVert(\lVert\beta\rVert + 2 \sqrt{\lVert \alpha \rVert})].
\end{equation}
It has been shown that the QFI follows the standard quantum limit if and only if there exists an $h$ such that $\beta=0$ \cite{PRXQuantum.2.010343}. In this case sequential strategies provide no advantage asymptotically, and we have
\begin{equation}
    \lim_{N \rightarrow \infty} \frac{1}{N} J^{(\mathsf{Par})}(N_\phi) = \lim_{N \rightarrow \infty} \frac{1}{N} J^{(\mathsf{Seq})}(N_\phi) = 4\min_{h\in \mathbb H_r\ \mathrm{s.t.}\ \beta=0} \lVert \alpha \rVert.
\end{equation}
We remark that the minimization in Eq. (\ref{eq:parallel bound}) can be efficiently evaluated via SDP \cite{Demkowicz-Dobrzanski2012}.

\subsection{Tightness of QFI bounds in nonasymptotic channel estimation}
We compare our nonasymptotic results and existing asymptotically tight bounds for two types of quantum channels. Apart from the amplitude damping noise described by $K_1^{(\mathrm{AD})} = \dyad{0} + \sqrt{1-p} \dyad{1}$ and $K_2^{(\mathrm{AD})} = \sqrt{p}\ketbra{0}{1}$ considered in the main text, here we also present a second example, where the noise is described by a SWAP-type interaction $V_{\mathrm{int}} = e^{-\mathrm i g\tau H_{\mathrm{SWAP}}}$ between a qubit system $S$ and a qubit environment $E$, where $g$ is the interaction strength, $\tau$ is the interaction time and the Hamiltonian is given by $H_{\mathrm{SWAP}}(\ket{i}_S \ket{j}_E) = \ket{j}_S \ket{i}_E$. The initial environment state is $\ket{0}$, and the Kraus operators can be written as $K_1^{(\mathrm{SWAP})} = \bra{0}_E V_{\mathrm{int}} \ket{0}_E = e^{-\mathrm i g \tau} \dyad{0} + \cos(g\tau) \dyad{1}$, and $K_2^{(\mathrm{SWAP})} = \bra{1}_E V_{\mathrm{int}} \ket{0}_E = -\mathrm i \sin(g\tau)\ketbra{0}{1}$.

We plot the QFI for the two examples in Fig. \ref{fig:compar asymp}. Both of them show the advantage of sequential strategies over parallel ones, and the gaps between exact results of QFI for sequential strategies and the parallel upper bounds given by Eq. (\ref{eq:parallel bound}). 

\begin{figure}[!htbp]
    \captionsetup{position=t}
    \centering
    \sidesubfloat[]{\includegraphics[width=0.45\linewidth]{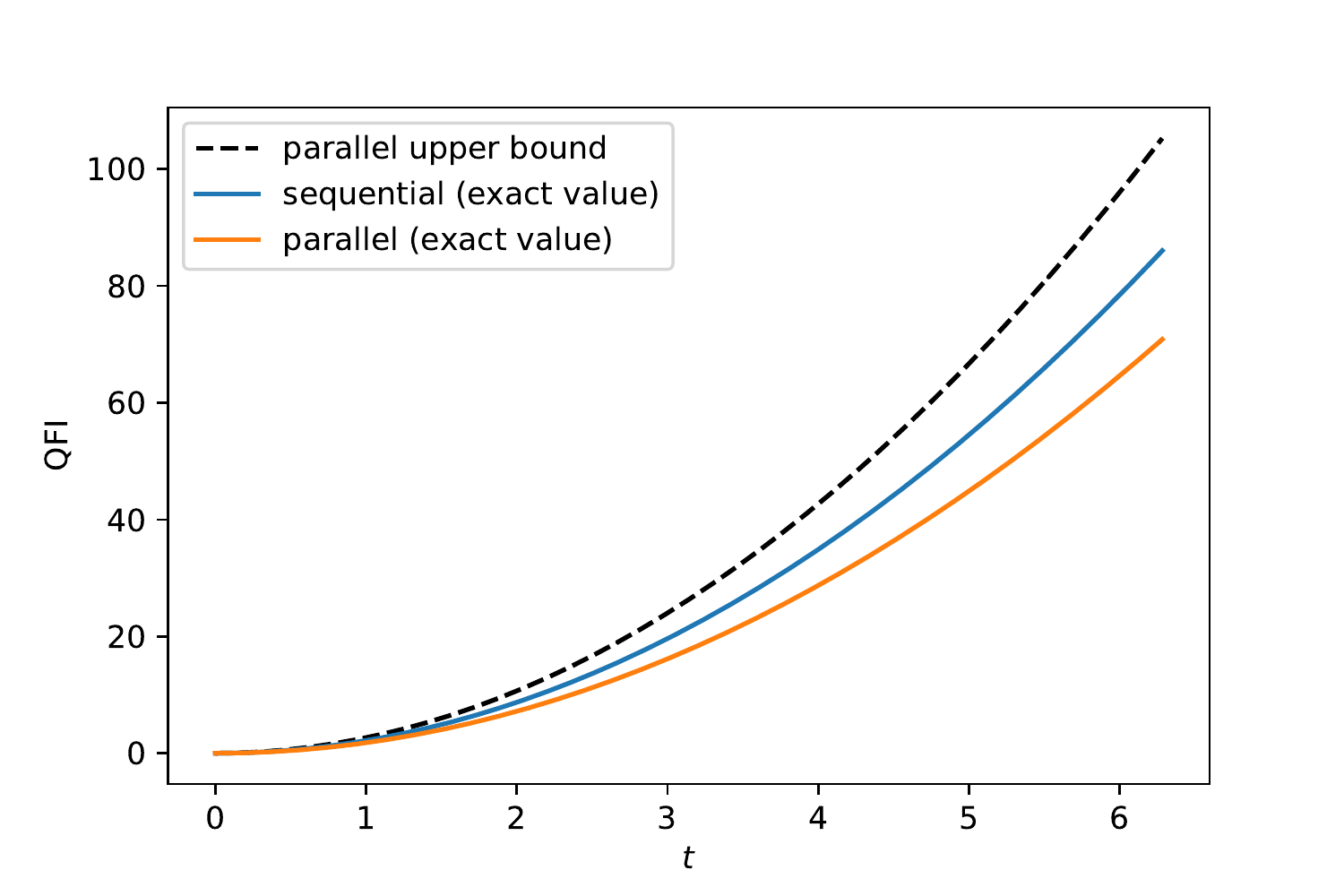}\label{subfig:N_2 ad}} 
    \sidesubfloat[]{\includegraphics[width=0.45\linewidth]{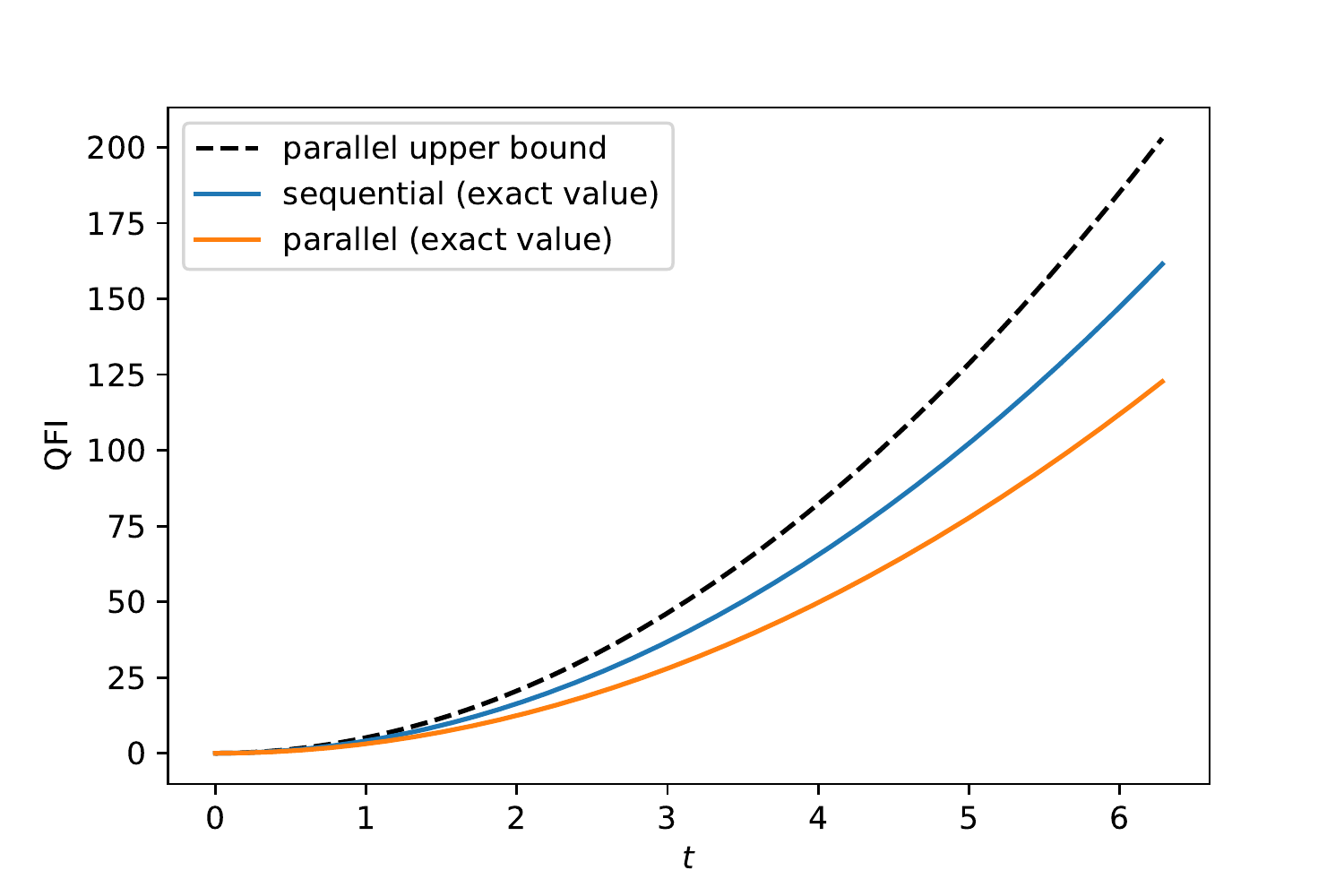}\label{subfig:N_3 ad}} \\
    \sidesubfloat[]{\includegraphics[width=0.45\linewidth]{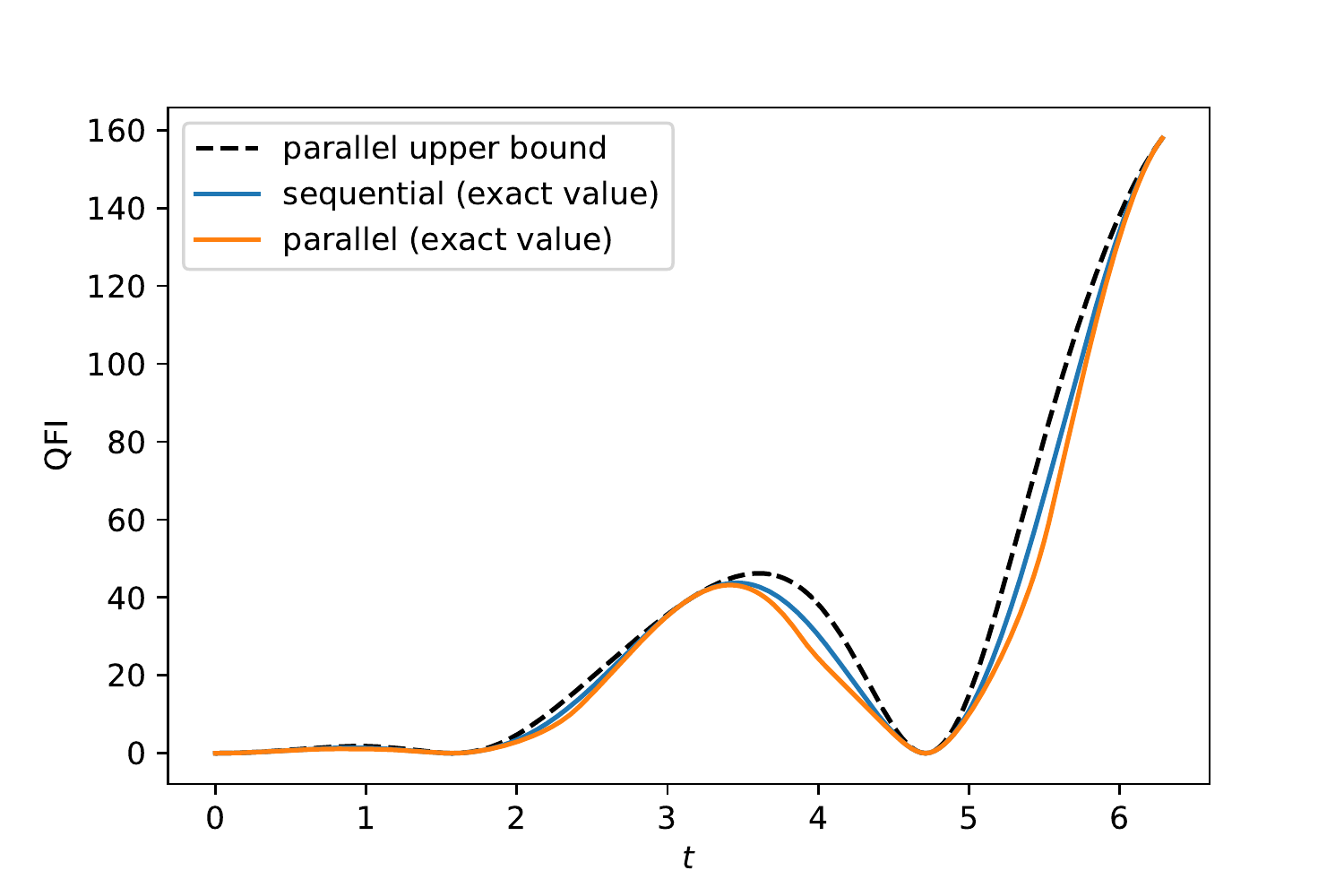}\label{subfig:N_2 swap}} 
    \sidesubfloat[]{\includegraphics[width=0.45\linewidth]{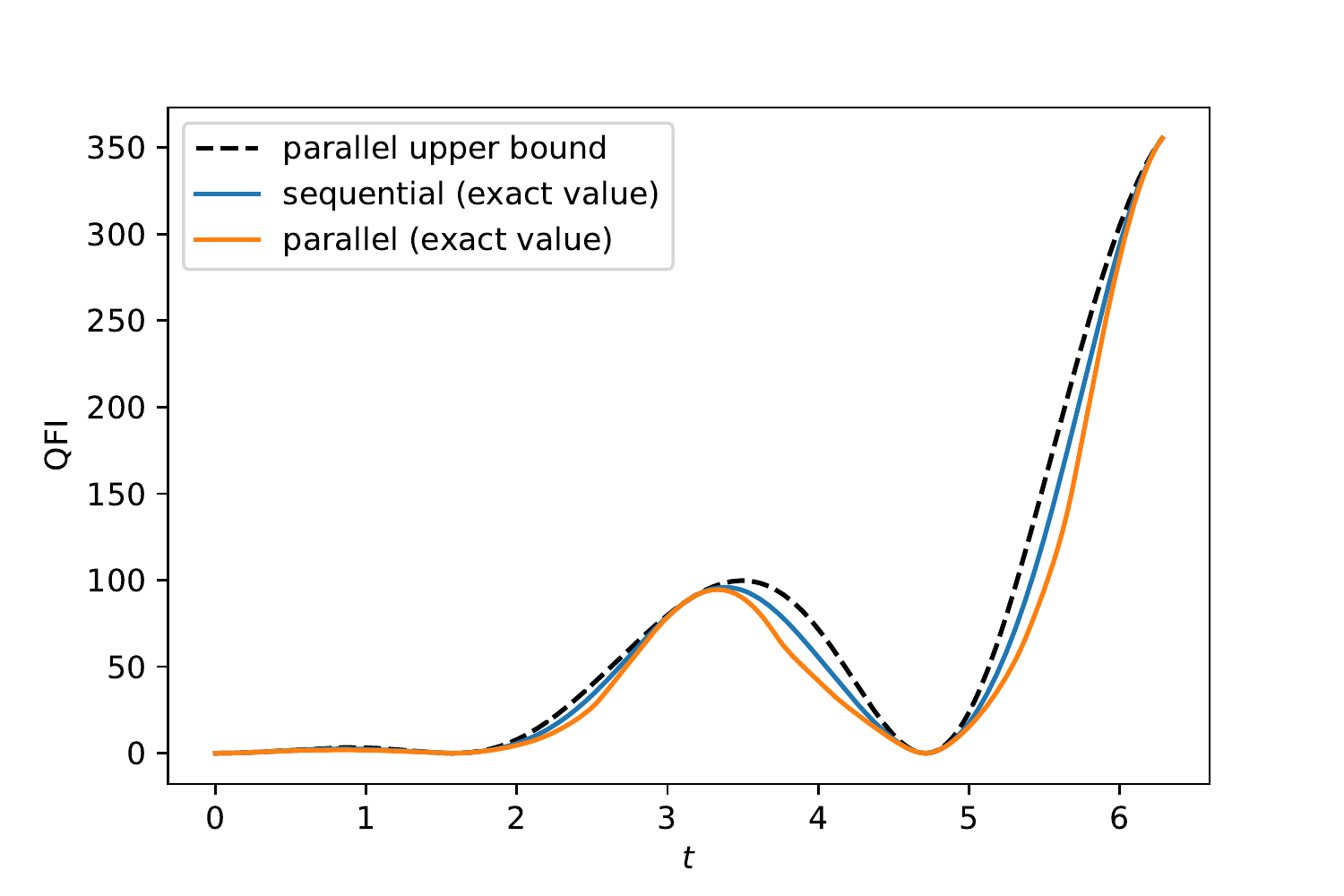}\label{subfig:N_3 swap}}
    \caption{\label{fig:compar asymp}\textbf{Comparison of our results with the existing asymptotically tight QFI bound.} We compare our results to the asymptotically tight bound on the maximal QFI of parallel strategies. We plot the QFI versus the evolution time $t$ for $N=2$, $\phi=1.0$, $g=1.0$, $\tau=t$ and $p=0.5$. \protect\subref{subfig:N_2 ad} $N=2$ and \protect\subref{subfig:N_3 ad} $N=3$ for the amplitude damping noise. \protect\subref{subfig:N_2 swap} $N=2$ and \protect\subref{subfig:N_3 swap} $N=3$ for the SWAP-type noise.} 
\end{figure}

\subsection{Elusive advantage of sequential strategies in the asymptotic limit}
We observe a gap between parallel and sequential strategies for amplitude damping channels and SWAP-type interactions for $N =2$ and $3$. Now we show that for both examples there is no advantage of sequential strategies asymptotically since there exists an $h$ such that $\beta=0$.

For the amplitude damping channel, $K_{\phi,i}^{(\mathrm{AD})} = K_i^{(\mathrm{AD})}U_z(\phi),\ i=1,2$. Direct calculation leads to
\begin{equation}
    \beta^{(\mathrm{AD})} = \left(\frac t 2 + h^{(\mathrm{AD})}_{11}\right) \dyad{0} + \left[h^{(\mathrm{AD})}_{11}-\frac t 2 + \left(h^{(\mathrm{AD})}_{22}-h^{(\mathrm{AD})}_{11}\right)p\right]\dyad{1}.
\end{equation}
To obtain $\beta^{(\mathrm{AD})}=0$ we just need to take $h^{(\mathrm{AD})}_{11}=-t/2$ and $h^{(\mathrm{AD})}_{22} = (2-p)t/2p$.

Similarly, for the SWAP-type interaction we have
\begin{equation}
    \beta^{(\mathrm{SWAP})} = \left(\frac t 2 + h^{(\mathrm{SWAP})}_{11}\right) \dyad{0} + \left[h^{(\mathrm{SWAP})}_{11}-\frac t 2 +  \left(h^{(\mathrm{SWAP})}_{22}-h^{(\mathrm{SWAP})}_{11}\right)\sin^2(g\tau)\right]\dyad{1}.
\end{equation}
Thus there exists $h^{(\mathrm{SWAP})}_{11}=-t/2$ and $h^{(\mathrm{SWAP})}_{22} = \left[2-\sin^2(g\tau)\right]t/2\sin^2(g\tau)$ such that $\beta^{(\mathrm{SWAP})}=0$.

\section{Implementation of optimal strategies with universal quantum gates} \label{app:implementation of strategies}
In this section we apply Algorithm \ref{alg:optimal probe} in the main text to numerically solve for an optimal strategy in the set of sequential and causal superposition ones respectively. The CJ operator of an optimal sequential strategy corresponds to a sequence of isometries with a minimal ancilla-space implementation provided by \cite{Bisio2011PRA}, and can then be decomposed into single-qubit gates and CNOT gates \cite{Iten2016PRA}. By taking advantage of the freedom of choosing a parameter-independent unitary on the final output state, we can adjust the strategy to reduce the CNOT count without affecting the QFI. In terms of an optimal causal superposition strategy we follow the same routine for each sequential strategy branch respectively in the superposition. 

\subsection{Optimal sequential strategy} \label{app:implementation of seq}

Let $\mathcal H_0 = \mathbb C$ and $\mathcal H_{2N+1}=\mathcal H_F$, and we have the CJ operator of a sequential strategy $P \in \mathcal L\left(\mathcal H_F\otimes_{i=0}^{2N} \mathcal H_i\right)$ (an $(N+1)$-step quantum comb), as illustrated in Fig. \ref{fig:sequential strategy with H0}. In this way of relabeling Hilbert spaces we have 
\begin{equation}
P=P^{(N+1)}\ge 0,\ \Tr_{2k-1} P^{(k)} =  \mathds{1}_{2k-2} \otimes P^{(k-1)}\ \mathrm{for}\ k=2,\dots, N+1,\ \Tr P^{(1)}=P^{(0)}=1.
\end{equation}

\begin{figure} [!htbp]
    \centering
    \includegraphics[width=.8\linewidth]{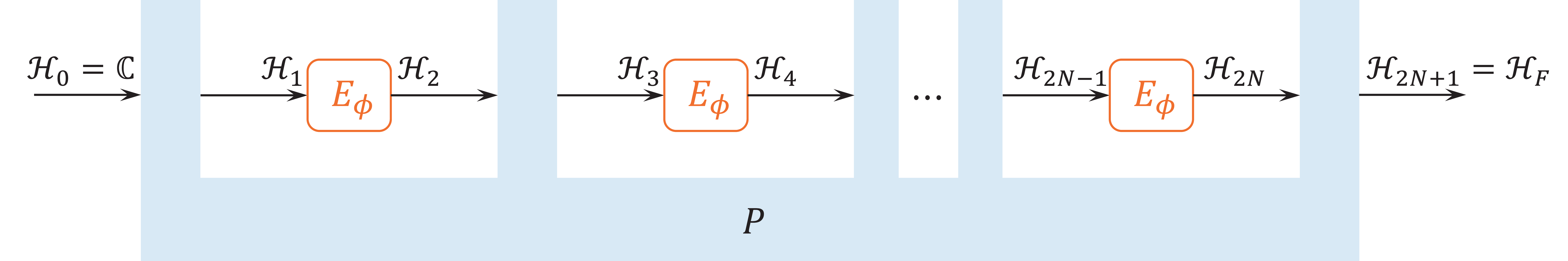}
    \caption{\textbf{Concatenation of a sequential strategy and $N$ quantum channels.} $P$ is the CJ operator of a sequential strategy and $E_\phi$ is the CJ operator of a parametrized quantum channel. $\mathcal \mathcal H_0 = \mathbb C$ is a trivial one-dimensional Hilbert space, and therefore the first step of the strategy is the process of state preparation.}
    \label{fig:sequential strategy with H0}
\end{figure}

According to Theorems 1 and 2 in \cite{Bisio2011PRA},  $P$ corresponds to a sequence of isometries $\{V^{(k)}\}$ for $k=1,\dots,N+1$ by Stinespring dilation:
\begin{equation}
    \mathcal P(\rho) = \Tr_{A_{N+1}} \left[\left(V^{(N+1)}\otimes \mathds{1}_{1,3,\dots,2N-1}\right)\cdots \left(V^{(1)}\otimes \mathds{1}_{2,4,\dots,2N}\right)\rho \left(V^{(1)}\otimes \mathds{1}_{2,4,\dots,2N}\right)^\dagger \cdots \left(V^{(N+1)}\otimes \mathds{1}_{1,3,\dots,2N-1}\right)^\dagger \right]
\end{equation}
for any input state $\rho\in \mathcal L\left(\otimes_{i=0}^{N}\mathcal H_{2i}\right)$, where the whole process corresponding to $P$ is described by an isometry $\mathcal P \in \mathcal L\left(\otimes_{i=0}^{N}\mathcal H_{2i}, \otimes_{i=0}^{N}\mathcal H_{2i+1}\right)$, and in each step a choice of isometry $V^{(k)}\in\mathcal L\left(\mathcal H_{2k-2} \otimes \mathcal H_{A_{k-1}}, \mathcal H_{2k-1} \otimes \mathcal H_{A_{k}}\right)$ with minimal ancilla space is given by
\begin{equation}
    V^{(k)} = \mathds{1}_{2k-1} \otimes \left(P^{(k)*}\right)^{\frac 12}\left[\kett{I}_{2k-1,(2k-1)'}\mathds{1}_{2k-2\rightarrow(2k-2)'}\otimes\left(P^{(k-1)*}\right)^{-\frac 12}\right],
\end{equation}
where $\mathcal H_{A_{k}}=\Supp(P^{(k)*})$ is an ancillary space given by the support of $P^{(k)*}$ with $\mathcal H_{A_{0}}= \mathbb C$, $\mathcal H_{i'}$ is a copy of the Hilbert space $\mathcal H_i$, $\mathds{1}_{2k-2\rightarrow(2k-2)'}:=\sum_i\ket{i}_{(2k-2)'}\bra{i}_{2k-2}$ is an identity map from $\mathcal H_{2k-2}$ to $\mathcal H_{(2k-2)'}$, and $\left(P^{(k-1)*}\right)^{-\frac 12}$ denotes the Moore–Penrose pseudoinverse of $\left(P^{(k-1)*}\right)^{\frac 12}$ with its support on $\mathcal H_{A_{k-1}}$.

As the last isometry $V^{(N+1)}$ preserves the QFI, it is therefore only necessary to consider the implementation of $P^{(N)}$ instead of the full strategy $P^{(N+1)}$. From this explicit construction it follows that the minimal dimension of the ancilla space for implementing the sequential strategy $P$ is $\dim(\mathcal H_{A_{N}})=\rank(P^{(N)})$. In the case of $N=2$ for the amplitude damping noise considered in the main text, it is easy to see that $\dim(\mathcal H_{A_{1}})\le 2$ and $\dim(\mathcal H_{A_{2}}) \le 8$, so $V^{(1)}$ is an isometry from $0$ to (at most) $2$ qubits and $V^{(2)}$ is an isometry from $2$ to (at most) $4$ qubits, as illustrated in Fig. \ref{fig:V1 and V2 seq}.  

\begin{figure} [!htbp]
    \centering
    \mbox{
    \Qcircuit @C=1em @R=.7em {
\lstick{\ket{0}} & \multigate{1}{V^{(1)}} & \gate{\mathcal E_\phi} & \multigate{3}{V^{(2)}} & \gate{\mathcal E_\phi} & \qw \\
\lstick{\ket{0}} & \ghost{V^{(1)}}& \qw & \ghost{V^{(2)}} & \qw & \qw \\
\lstick{\ket{0}} & \qw & \qw & \ghost{V^{(2)}} & \qw & \qw \\
\lstick{\ket{0}} & \qw & \qw & \ghost{V^{(2)}} & \qw & \qw 
}
}
    \caption{\textbf{A sequence of isometries corresponding to a sequential strategy for $N=2$.} The first qubit is the system qubit going through the channel $\mathcal E_\phi$ twice, while the three other qubits are ancillary.}
    \label{fig:V1 and V2 seq}
\end{figure}
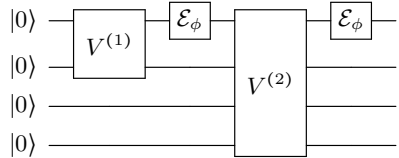

Next, we apply a circuit decomposition of each isometry into single-qubit gates and CNOT gates. In practice it is often desirable to achieve a CNOT count as low as possible. Note that $V^{(1)}$ actually corresponds to the preparation of a $2$-qubit state, which in general requires only one CNOT gate \cite{Znidaric08PRA}. In terms of $V^{(2)}$, an isometry from $2$ to $4$ qubits, the state-of-the-art decomposition scheme is the column-by-column approach which requires at most 54 CNOT gates \cite{Iten2016PRA}. However, as an arbitrary parameter-independent unitary on $3$ ancillae can always be deferred and regrouped into $V^{(3)}$ and therefore does not affect the QFI, in fact we can choose a proper $V^{(2)}$ to further reduce the worst CNOT count to $47$ without changing the QFI. To see this we need to briefly introduce the main ideas behind the column-by-column decomposition scheme.

An isometry $V$ from $m$ to $n$ qubits ($m\le n$) can be represented in the matrix form by $V=U^{\dagger} \mathds{1}(2^n\times 2^m)$, where $U^{\dagger}$ is a $2^n\times 2^n$ unitary matrix and $\mathds{1}(2^n\times 2^m)$ is the first $2^m$ columns of the $2^n \times 2^n$ identity matrix. If we obtain a decomposition of $U^\dagger$, then we can simply initialize the state of the first $n-m$ qubits to $\ket{0}$ to implement $V$. Equivalently we can find a decomposition of $U$ such that $UV=\mathds{1}(2^n\times 2^m)$ and then inverse the circuit representing $U$. The idea is to find a sequence of unitary operations such that $U=U_{2^m-1}\cdots U_0$ transforms $V$ into $\mathds{1}(2^n\times 2^m)$ column by column. More specifically, we first choose a proper $U_0$ to map the first column of $V$ to the first column of $\mathds{1}(2^n\times 2^m)$, i.e., $U_0 V\ket{0}_m=\ket{0}_n$, then choose $U_1$ satisfying $U_1U_0 V\ket{1}_m=\ket{1}_n$ as well as $U_1U_0 V\ket{0}_m=\ket{0}_n$ \ldots until we determine $U_{2^m-1}$.

Here we only focus on $U_0$, the inverse of which can be seen as a process preparing a state $V\ket{0}_m$ from $\ket{0}_n$. In terms of decomposing $V^{(2)}$ from $m=2$ to $n=4$ qubits, preparing a $4$-qubit state in general requires $8$ CNOT gates \cite{Plesch11PRA}. Fortunately, without changing the QFI, we have the freedom to choose a unitary $U_{\mathrm{anc}}$ on the ancillae after applying $V^{(2)}$ such that the state $V^{(2)\prime}\ket{0}_2=U_{\mathrm{anc}}V^{(2)}\ket{0}_2$ can be prepared using only one CNOT gate. This can be seen by dividing the $4$ qubits into two parties, including the single system qubit (in the space $\mathcal H_S$) and the three ancillary qubits (in the space $\mathcal H_A$), and taking the Schimidt decomposition of the $4$-qubit state $V^{(2)}\ket{0}_2$
\begin{equation}
    \ket{\psi}_{SA} := V^{(2)}\ket{0}_2 = \sum_{i=0}^1 \lambda_i \ket{e_i}_S \ket{f_i}_A,
\end{equation}
where $\{\ket{e_i/f_i}_{S/A}\}$ forms an orthonormal basis of $\mathcal H_{S/A}$, and $\{\lambda_i\}$ is a set of nonnegative real numbers satisfying $\sum_i\lambda_i^2=1$. Therefore, to prepare $V^{(2)}\ket{0}_2$, we only need a local unitary on $\mathcal H_S$ to generate $\sum_{i=0}^1 \lambda_i \ket{i}_S\ket{0}_A$, then apply one CNOT gate taking the system qubit as the control to obtain $\sum_{i=0}^1 \lambda_i \ket{i}_S\ket{i}_A$, and finally apply local unitary operations $U_S=\sum_i\ket{e_i}_S\bra{i}_S$ on the system and $U_A=\sum_i\ket{f_i}_A\bra{i}_A$ on the ancillae respectively. If we take $U_{\mathrm{anc}} = U_A^{\dagger}$, then it is easy to see that $V^{(2)\prime}\ket{0}_2=U_{\mathrm{anc}}V^{(2)}\ket{0}_2=\sum_{i=0}^1 \lambda_i \ket{e_i}_S \ket{i}_A$ can thus be prepared using one CNOT gate. This choice of $V^{(2)\prime}$ saves $7$ CNOT gates compared to the general state preparation scheme, and leads to a worst CNOT count of $47$ in total.

Now we present numerical results of the circuit implementation of an optimal sequential strategy. The decomposition of ismometries is implemented using the Mathematica package UniversalQCompiler \cite{Iten19Universal} based on the method described above. As in the main text, we consider the amplitude damping noise and take $N=2$, $\phi=1.0$, $p=0.5$ and $t=1.0$. The circuits implementing $V^{(1)}$ and $V^{(2)\prime}$ are illustrated in Fig. \ref{fig:decomp seq}. The state preparation $V^{(1)}$ requires $1$ CNOT gate and the intermediate control operation $V^{(2)\prime}$ requires $33$ CNOT gates.

\begin{figure} [!htbp]
    \captionsetup{position=bottom}
    \centering
    \subfloat[Decomposition of $V^{(1)}$. For simplicity the angles of single-qubit rotation gates are not depicted.]{\mbox{
    \Qcircuit @C=1em @R=.7em {
\push{\rule{20em}{0em}} & \lstick{\ket{0}} & \gate{R_y} & \gate{R_z} & \ctrl{1} & \gate{R_y} & \qw & \push{\rule{20em}{0em}}  \\
\push{\rule{20em}{0em}} & \lstick{\ket{0}} & \gate{R_y} & \qw & \targ & \qw  & \qw & \push{\rule{20em}{0em}}
}
}\label{subfig:seq V1 decomp}} \\
    \subfloat[Decomposition of $V^{(2)\prime}=U_{\mathrm{anc}}V^{(2)}$. For simplicity single-qubit gates, which might be required in addition to CNOT gates, are not depicted.]{\mbox{
    \Qcircuit @C=.6em @R=.7em {
& \ctrl{1} & \ctrl{1} & \ctrl{2} & \qw  & \ctrl{2} & \ctrl{3} 
& \qw  & \ctrl{3} & \qw  & \qw  & \qw  & \qw  & \qw  & \qw  & \targ & 
\targ & \targ & \targ & \targ & \targ & \targ & \ctrl{2} & \qw  & 
\ctrl{3} & \qw  & \qw  & \qw  & \targ & \targ & \targ & \targ & \targ & \ctrl{1} & \qw \\
& \targ & \targ & \qw  & \ctrl{1} & \qw  & \qw  & \ctrl{2} & 
\qw  & \ctrl{1} & \qw  & \ctrl{2} & \targ & \targ & \targ & \ctrl{-1} 
& \qw  & \ctrl{-1} & \qw  & \ctrl{-1} & \qw  & \ctrl{-1} & \qw  & \qw 
 & \qw  & \targ & \targ & \targ & \qw  & \qw  & \ctrl{-1} & \qw  & 
\ctrl{-1} & \targ & \qw \\
\lstick{\ket{0}} & \qw  & \qw  & \targ & \targ & \targ & \qw  & \qw  & \qw  & \targ & \ctrl{1} & \qw  & \qw  & \ctrl{-1} & \qw  & \qw  & \qw  & \qw  
& \ctrl{-2} & \qw  & \qw  & \qw  & \targ & \ctrl{1} & \qw  & \qw  & 
\ctrl{-1} & \qw  & \qw  & \ctrl{-2} & \qw  & \qw  & \qw & \qw & \qw \\
\lstick{\ket{0}} & \qw  & \qw  & \qw  & \qw  & \qw  & \targ & \targ & \targ & \qw  & \targ & \targ & \ctrl{-2} & \qw  & \ctrl{-2} & \qw  & 
\ctrl{-3} & \qw  & \qw  & \qw  & \ctrl{-3} & \qw  & \qw  & \targ & \targ & \ctrl{-2} & \qw  & \ctrl{-2} & \ctrl{-3} & \qw  & \qw  &
\ctrl{-3} & \qw  & \qw & \qw
}
}\label{subfig:seq V2 decomp}} 
    \caption{\label{fig:decomp seq}\textbf{Decomposition of isometries corresponding to an optimal sequential strategy for $N=2$.} We apply $V^{(2)\prime}$ instead of $V^{(2)}$ to achieve the maximal QFI with fewer CNOT gates.} 
\end{figure}
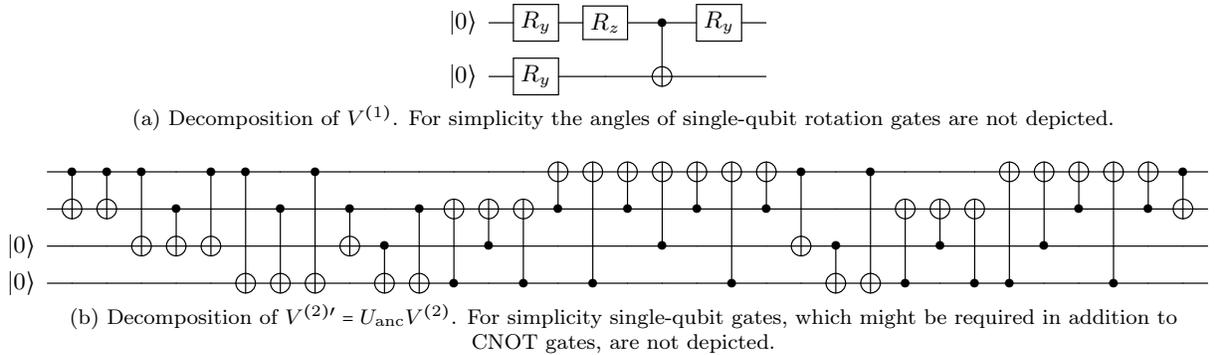

\subsection{Optimal causal superposition strategy} \label{app:implementation of sup}
A causal superposition strategy for estimating $N$ channels can be implemented by an $N!$-dim quantum control system entangled with $N!$ sequential strategies of applying the channels:
\begin{equation} \label{app:eq:causal superposition strategy}
    P = \dyad{P}\ \mathrm{for}\ \ket{P}=\sum_{\pi\in S_N} \ket{P^{\pi}}\ket{\pi}_C,
\end{equation}
where $\{\ket{\pi}_C\}$ forms an orthonormal basis of the Hilbert space $\mathcal H_C$ of the control system, and each $P^{\pi}=\dyad{P^\pi}$ is a sequential strategy. Once we obtain an optimal causal superposition strategy by applying Algorithm \ref{alg:optimal probe}, we can apply the circuit decomposition for each sequential strategy in the superposition, following the method described in Appendix \ref{app:implementation of seq}. Taking account of the permutation symmetry, we can choose an optimal causal superposition strategy such that apart from the execution order of two channels, sequential strategies in the decomposition of the strategy are the same, containing the same state preparation and intermediate control.  

As a concrete example, we again take $N=2$, $\phi=1.0$, $p=0.5$ and $t=1.0$ for the amplitude damping noise and present numerical results of the circuit implementation of an optimal causal superposition strategy. As illustrated in Fig. \ref{fig:V1 and V2 sup}, we use the qubit $\ket{\psi}_C$ to coherently control which sequential order is executed. Due to the permutation invariance of the optimal strategy, we can simply control the query order of the identical channels while fixing $V^{(1)}$ and $V^{(2)}$ for all sequential orders. In view of this, generally we can use a $(2N-1)$-quantum SWITCH to control the order of $N$ channels $\mathcal E_\phi$ and $N-1$ intermediate control operations $V^{(i)}$.

\begin{figure} [!htbp]
    \centering
\mbox{
\Qcircuit @C=1em @R=.7em {
& \mbox{If $\ket{\psi}_C = \ket{0}_C$,}  & & & & \\
\lstick{\ket{0}} & \multigate{1}{V^{(1)}} & \gate{\mathcal E_\phi^{(1)}} & \multigate{3}{V^{(2)}} & \gate{\mathcal E_\phi^{(2)}} & \qw \\
\lstick{\ket{0}} & \ghost{V^{(1)}}& \qw & \ghost{V^{(2)}} & \qw & \qw \\
\lstick{\ket{0}} & \qw & \qw & \ghost{V^{(2)}} & \qw & \qw \\
\lstick{\ket{0}} & \qw & \qw & \ghost{V^{(2)}} & \qw & \qw 
}
} \hspace{6em}
\mbox{
\Qcircuit @C=1em @R=.7em {
& \mbox{If $\ket{\psi}_C = \ket{1}_C$,} & & & & \\
\lstick{\ket{0}} & \multigate{1}{V^{(1)}} & \gate{\mathcal E_\phi^{(2)}} & \multigate{3}{V^{(2)}} & \gate{\mathcal E_\phi^{(1)}} & \qw \\
\lstick{\ket{0}} & \ghost{V^{(1)}}& \qw & \ghost{V^{(2)}} & \qw & \qw \\
\lstick{\ket{0}} & \qw & \qw & \ghost{V^{(2)}} & \qw & \qw \\
\lstick{\ket{0}} & \qw & \qw & \ghost{V^{(2)}} & \qw & \qw 
}
}
    \caption{\textbf{Sequences of isometries corresponding to each sequential order in the causal superposition for $N=2$.} The first qubit of the circuit is the system qubit, and the query order of two identical channels $\mathcal E_\phi^{(1)}$ and $\mathcal E_\phi^{(2)}$ is entangled with the state of the control qubit $\ket{\psi}_C$. When $\ket{\psi}_C$ is a superposition of the two states shown in the figure, the causal order is also in a superposition given by Eq. (\ref{app:eq:causal superposition strategy}).}
    \label{fig:V1 and V2 sup}
\end{figure}
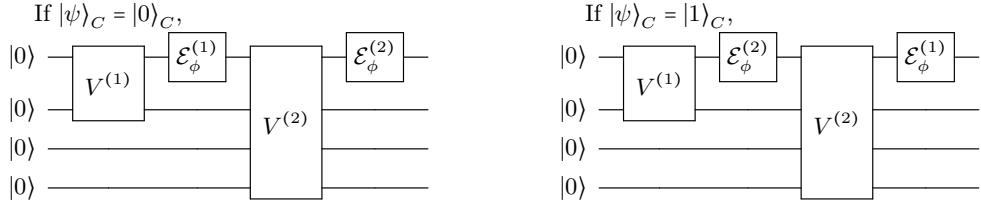

Further decomposition shows that each sequential branch requires one CNOT gate for state preparation $V^{(1)}$ and 36 CNOT gates for intermediate control $V^{(2)}$, as illustrated in Fig. \ref{fig:decomp sup}.

\begin{figure} [!htbp]
    \captionsetup{position=bottom}
    \centering
    \subfloat[Decomposition of $V^{(1)}$. For simplicity the angles of single-qubit rotation gates are not depicted.]{\mbox{
    \Qcircuit @C=1em @R=.7em {
\push{\rule{20em}{0em}} & \lstick{\ket{0}} & \gate{R_y} & \gate{R_z} & \ctrl{1} & \gate{R_y} & \qw & \push{\rule{20em}{0em}}  \\
\push{\rule{20em}{0em}} & \lstick{\ket{0}} & \gate{R_y} & \qw & \targ & \qw  & \qw & \push{\rule{20em}{0em}}
}
}\label{subfig:sup V1 decomp}} \\
    \subfloat[Decomposition of $V^{(2)}$. For simplicity single-qubit gates, which might be required in addition to CNOT gates, are not depicted.]{\mbox{
    \Qcircuit @C=.6em @R=.7em {
& \ctrl{1} & \ctrl{1} & \ctrl{2} & \qw  & \ctrl{2} & \qw  & 
\ctrl{3} & \qw  & \ctrl{3} & \qw  & \qw  & \qw  & \qw  & \qw  & \qw  
& \qw  & \qw  & \qw  & \targ & \targ & \targ & \targ & \targ & \targ 
& \targ & \ctrl{2} & \qw  & \ctrl{3} & \qw  & \qw  & \qw  & \ctrl{1} 
& \targ & \targ & \targ & \ctrl{1} & \qw  \\
& \targ & \targ & \qw  & \ctrl{1} & \qw  & \qw  & \qw  & 
\ctrl{2} & \qw  & \targ & \targ & \targ & \ctrl{1} & \qw  & \ctrl{2} 
& \targ & \targ & \targ & \ctrl{-1} & \qw  & \ctrl{-1} & \qw  & 
\ctrl{-1} & \qw  & \ctrl{-1} & \qw  & \qw  & \qw  & \targ & \targ & \targ & \targ & \ctrl{-1} & \qw  & \ctrl{-1} & \targ & \qw  \\
\lstick{\ket{0}}  & \qw  & \qw  & \targ & \targ & \targ & \ctrl{1} & \qw  & \qw  & \qw  & \qw  & \ctrl{-1} & \qw  & \targ 
& \ctrl{1} & \qw  & \qw  & \ctrl{-1} & \qw  & \qw  & \qw  & \qw  & 
\ctrl{-2} & \qw  & \qw  & \qw  & \targ & \ctrl{1} & \qw  & \qw  & 
\ctrl{-1} & \qw  & \qw  & \qw  & \ctrl{-2} & \qw  & \qw & \qw  \\
\lstick{\ket{0}} & \qw  & \qw  & \qw  & \qw  & \qw  & 
\targ & \targ & \targ & \targ & \ctrl{-2} & \qw  & \ctrl{-2} & \qw  & 
\targ & \targ & \ctrl{-2} & \qw  & \ctrl{-2} & \qw  & \ctrl{-3} & \qw 
 & \qw  & \qw  & \ctrl{-3} & \qw  & \qw  & \targ & \targ & \ctrl{-2} 
& \qw  & \ctrl{-2} & \qw  & \qw  & \qw  & \qw & \qw & \qw  
}
}\label{subfig:sup V2 decomp}} 
    \caption{\label{fig:decomp sup}\textbf{Decomposition of isometries corresponding to one causal order in the decomposition of an optimal causal superposition strategy for $N=2$.} We have already taken advantage of the freedom to choose a $V^{(2)}$ implemented by fewer CNOT gates, as explained in Appendix \ref{app:implementation of seq}.} 
\end{figure}
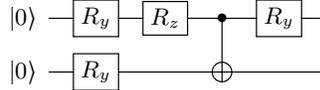
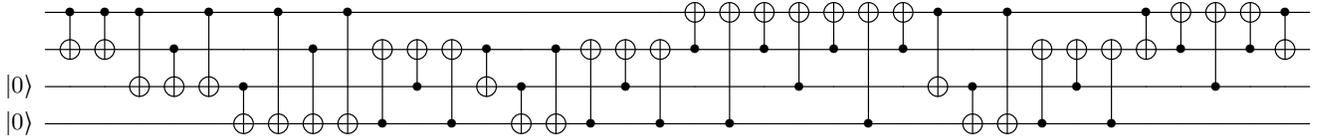


\begin{thebibliography}{75}%
\makeatletter
\providecommand \@ifxundefined [1]{%
 \@ifx{#1\undefined}
}%
\providecommand \@ifnum [1]{%
 \ifnum #1\expandafter \@firstoftwo
 \else \expandafter \@secondoftwo
 \fi
}%
\providecommand \@ifx [1]{%
 \ifx #1\expandafter \@firstoftwo
 \else \expandafter \@secondoftwo
 \fi
}%
\providecommand \natexlab [1]{#1}%
\providecommand \enquote  [1]{``#1''}%
\providecommand \bibnamefont  [1]{#1}%
\providecommand \bibfnamefont [1]{#1}%
\providecommand \citenamefont [1]{#1}%
\providecommand \href@noop [0]{\@secondoftwo}%
\providecommand \href [0]{\begingroup \@sanitize@url \@href}%
\providecommand \@href[1]{\@@startlink{#1}\@@href}%
\providecommand \@@href[1]{\endgroup#1\@@endlink}%
\providecommand \@sanitize@url [0]{\catcode `\\12\catcode `\$12\catcode
  `\&12\catcode `\#12\catcode `\^12\catcode `\_12\catcode `\%12\relax}%
\providecommand \@@startlink[1]{}%
\providecommand \@@endlink[0]{}%
\providecommand \url  [0]{\begingroup\@sanitize@url \@url }%
\providecommand \@url [1]{\endgroup\@href {#1}{\urlprefix }}%
\providecommand \urlprefix  [0]{URL }%
\providecommand \Eprint [0]{\href }%
\providecommand \doibase [0]{https://doi.org/}%
\providecommand \selectlanguage [0]{\@gobble}%
\providecommand \bibinfo  [0]{\@secondoftwo}%
\providecommand \bibfield  [0]{\@secondoftwo}%
\providecommand \translation [1]{[#1]}%
\providecommand \BibitemOpen [0]{}%
\providecommand \bibitemStop [0]{}%
\providecommand \bibitemNoStop [0]{.\EOS\space}%
\providecommand \EOS [0]{\spacefactor3000\relax}%
\providecommand \BibitemShut  [1]{\csname bibitem#1\endcsname}%
\let\auto@bib@innerbib\@empty
\bibitem [{\citenamefont {Giovannetti}\ \emph {et~al.}(2011)\citenamefont
  {Giovannetti}, \citenamefont {Lloyd},\ and\ \citenamefont
  {Maccone}}]{Giovannetti2011}%
  \BibitemOpen
  \bibfield  {author} {\bibinfo {author} {\bibfnamefont {V.}~\bibnamefont
  {Giovannetti}}, \bibinfo {author} {\bibfnamefont {S.}~\bibnamefont {Lloyd}},\
  and\ \bibinfo {author} {\bibfnamefont {L.}~\bibnamefont {Maccone}},\
  }\bibfield  {title} {\bibinfo {title} {Advances in quantum metrology},\
  }\href {https://doi.org/10.1038/nphoton.2011.35} {\bibfield  {journal}
  {\bibinfo  {journal} {Nat. Photonics}\ }\textbf {\bibinfo {volume} {5}},\
  \bibinfo {pages} {222} (\bibinfo {year} {2011})}\BibitemShut {NoStop}%
\bibitem [{\citenamefont {Degen}\ \emph {et~al.}(2017)\citenamefont {Degen},
  \citenamefont {Reinhard},\ and\ \citenamefont {Cappellaro}}]{Degen2017RMP}%
  \BibitemOpen
  \bibfield  {author} {\bibinfo {author} {\bibfnamefont {C.~L.}\ \bibnamefont
  {Degen}}, \bibinfo {author} {\bibfnamefont {F.}~\bibnamefont {Reinhard}},\
  and\ \bibinfo {author} {\bibfnamefont {P.}~\bibnamefont {Cappellaro}},\
  }\bibfield  {title} {\bibinfo {title} {Quantum sensing},\ }\href
  {https://doi.org/10.1103/RevModPhys.89.035002} {\bibfield  {journal}
  {\bibinfo  {journal} {Rev. Mod. Phys.}\ }\textbf {\bibinfo {volume} {89}},\
  \bibinfo {pages} {035002} (\bibinfo {year} {2017})}\BibitemShut {NoStop}%
\bibitem [{\citenamefont {Martinis}(2015)}]{Martinis2015}%
  \BibitemOpen
  \bibfield  {author} {\bibinfo {author} {\bibfnamefont {J.~M.}\ \bibnamefont
  {Martinis}},\ }\bibfield  {title} {\bibinfo {title} {Qubit metrology for
  building a fault-tolerant quantum computer},\ }\href
  {https://doi.org/10.1038/npjqi.2015.5} {\bibfield  {journal} {\bibinfo
  {journal} {npj Quantum Inf.}\ }\textbf {\bibinfo {volume} {1}},\ \bibinfo
  {pages} {15005} (\bibinfo {year} {2015})}\BibitemShut {NoStop}%
\bibitem [{\citenamefont {Lee}\ \emph {et~al.}(2002)\citenamefont {Lee},
  \citenamefont {Kok},\ and\ \citenamefont {Dowling}}]{Lee2002JMP}%
  \BibitemOpen
  \bibfield  {author} {\bibinfo {author} {\bibfnamefont {H.}~\bibnamefont
  {Lee}}, \bibinfo {author} {\bibfnamefont {P.}~\bibnamefont {Kok}},\ and\
  \bibinfo {author} {\bibfnamefont {J.~P.}\ \bibnamefont {Dowling}},\
  }\bibfield  {title} {\bibinfo {title} {A quantum rosetta stone for
  interferometry},\ }\href {https://doi.org/10.1080/0950034021000011536}
  {\bibfield  {journal} {\bibinfo  {journal} {J. Mod. Opt.}\ }\textbf {\bibinfo
  {volume} {49}},\ \bibinfo {pages} {2325} (\bibinfo {year}
  {2002})}\BibitemShut {NoStop}%
\bibitem [{\citenamefont {Bu\ifmmode~\check{z}\else \v{z}\fi{}ek}\ \emph
  {et~al.}(1999)\citenamefont {Bu\ifmmode~\check{z}\else \v{z}\fi{}ek},
  \citenamefont {Derka},\ and\ \citenamefont {Massar}}]{Buzek1999PRL}%
  \BibitemOpen
  \bibfield  {author} {\bibinfo {author} {\bibfnamefont {V.}~\bibnamefont
  {Bu\ifmmode~\check{z}\else \v{z}\fi{}ek}}, \bibinfo {author} {\bibfnamefont
  {R.}~\bibnamefont {Derka}},\ and\ \bibinfo {author} {\bibfnamefont
  {S.}~\bibnamefont {Massar}},\ }\bibfield  {title} {\bibinfo {title} {Optimal
  {Q}uantum {C}locks},\ }\href {https://doi.org/10.1103/PhysRevLett.82.2207}
  {\bibfield  {journal} {\bibinfo  {journal} {Phys. Rev. Lett.}\ }\textbf
  {\bibinfo {volume} {82}},\ \bibinfo {pages} {2207} (\bibinfo {year}
  {1999})}\BibitemShut {NoStop}%
\bibitem [{\citenamefont {Kitagawa}\ and\ \citenamefont
  {Ueda}(1993)}]{Kitagawa1993PRA}%
  \BibitemOpen
  \bibfield  {author} {\bibinfo {author} {\bibfnamefont {M.}~\bibnamefont
  {Kitagawa}}\ and\ \bibinfo {author} {\bibfnamefont {M.}~\bibnamefont
  {Ueda}},\ }\bibfield  {title} {\bibinfo {title} {Squeezed spin states},\
  }\href {https://doi.org/10.1103/PhysRevA.47.5138} {\bibfield  {journal}
  {\bibinfo  {journal} {Phys. Rev. A}\ }\textbf {\bibinfo {volume} {47}},\
  \bibinfo {pages} {5138} (\bibinfo {year} {1993})}\BibitemShut {NoStop}%
\bibitem [{\citenamefont {Demkowicz-Dobrza\ifmmode~\acute{n}\else
  \'{n}\fi{}ski}\ and\ \citenamefont
  {Maccone}(2014)}]{Demkowicz-Dobrzanski14PRL}%
  \BibitemOpen
  \bibfield  {author} {\bibinfo {author} {\bibfnamefont {R.}~\bibnamefont
  {Demkowicz-Dobrza\ifmmode~\acute{n}\else \'{n}\fi{}ski}}\ and\ \bibinfo
  {author} {\bibfnamefont {L.}~\bibnamefont {Maccone}},\ }\bibfield  {title}
  {\bibinfo {title} {Using {E}ntanglement {A}gainst {N}oise in {Q}uantum
  {M}etrology},\ }\href {https://doi.org/10.1103/PhysRevLett.113.250801}
  {\bibfield  {journal} {\bibinfo  {journal} {Phys. Rev. Lett.}\ }\textbf
  {\bibinfo {volume} {113}},\ \bibinfo {pages} {250801} (\bibinfo {year}
  {2014})}\BibitemShut {NoStop}%
\bibitem [{\citenamefont {Yuan}\ and\ \citenamefont
  {Fung}(2015)}]{Yuan2015PRL}%
  \BibitemOpen
  \bibfield  {author} {\bibinfo {author} {\bibfnamefont {H.}~\bibnamefont
  {Yuan}}\ and\ \bibinfo {author} {\bibfnamefont {C.-H.~F.}\ \bibnamefont
  {Fung}},\ }\bibfield  {title} {\bibinfo {title} {Optimal {F}eedback {S}cheme
  and {U}niversal {T}ime {S}caling for {H}amiltonian {P}arameter
  {E}stimation},\ }\href {https://doi.org/10.1103/PhysRevLett.115.110401}
  {\bibfield  {journal} {\bibinfo  {journal} {Phys. Rev. Lett.}\ }\textbf
  {\bibinfo {volume} {115}},\ \bibinfo {pages} {110401} (\bibinfo {year}
  {2015})}\BibitemShut {NoStop}%
\bibitem [{\citenamefont {Yuan}(2016)}]{Yuan2016PRL}%
  \BibitemOpen
  \bibfield  {author} {\bibinfo {author} {\bibfnamefont {H.}~\bibnamefont
  {Yuan}},\ }\bibfield  {title} {\bibinfo {title} {Sequential {F}eedback
  {S}cheme {O}utperforms the {P}arallel {S}cheme for {H}amiltonian {P}arameter
  {E}stimation},\ }\href {https://doi.org/10.1103/PhysRevLett.117.160801}
  {\bibfield  {journal} {\bibinfo  {journal} {Phys. Rev. Lett.}\ }\textbf
  {\bibinfo {volume} {117}},\ \bibinfo {pages} {160801} (\bibinfo {year}
  {2016})}\BibitemShut {NoStop}%
\bibitem [{\citenamefont {Pang}\ and\ \citenamefont {Jordan}(2017)}]{Pang2017}%
  \BibitemOpen
  \bibfield  {author} {\bibinfo {author} {\bibfnamefont {S.}~\bibnamefont
  {Pang}}\ and\ \bibinfo {author} {\bibfnamefont {A.~N.}\ \bibnamefont
  {Jordan}},\ }\bibfield  {title} {\bibinfo {title} {Optimal adaptive control
  for quantum metrology with time-dependent hamiltonians},\ }\href
  {https://doi.org/10.1038/ncomms14695} {\bibfield  {journal} {\bibinfo
  {journal} {Nat. Commun.}\ }\textbf {\bibinfo {volume} {8}},\ \bibinfo {pages}
  {14695} (\bibinfo {year} {2017})}\BibitemShut {NoStop}%
\bibitem [{\citenamefont {D\"ur}\ \emph {et~al.}(2014)\citenamefont {D\"ur},
  \citenamefont {Skotiniotis}, \citenamefont {Fr\"owis},\ and\ \citenamefont
  {Kraus}}]{Duer2014PRL}%
  \BibitemOpen
  \bibfield  {author} {\bibinfo {author} {\bibfnamefont {W.}~\bibnamefont
  {D\"ur}}, \bibinfo {author} {\bibfnamefont {M.}~\bibnamefont {Skotiniotis}},
  \bibinfo {author} {\bibfnamefont {F.}~\bibnamefont {Fr\"owis}},\ and\
  \bibinfo {author} {\bibfnamefont {B.}~\bibnamefont {Kraus}},\ }\bibfield
  {title} {\bibinfo {title} {{Improved Quantum Metrology Using Quantum Error
  Correction}},\ }\href {https://doi.org/10.1103/PhysRevLett.112.080801}
  {\bibfield  {journal} {\bibinfo  {journal} {Phys. Rev. Lett.}\ }\textbf
  {\bibinfo {volume} {112}},\ \bibinfo {pages} {080801} (\bibinfo {year}
  {2014})}\BibitemShut {NoStop}%
\bibitem [{\citenamefont {Kessler}\ \emph {et~al.}(2014)\citenamefont
  {Kessler}, \citenamefont {Lovchinsky}, \citenamefont {Sushkov},\ and\
  \citenamefont {Lukin}}]{Kessler2014PRL}%
  \BibitemOpen
  \bibfield  {author} {\bibinfo {author} {\bibfnamefont {E.~M.}\ \bibnamefont
  {Kessler}}, \bibinfo {author} {\bibfnamefont {I.}~\bibnamefont {Lovchinsky}},
  \bibinfo {author} {\bibfnamefont {A.~O.}\ \bibnamefont {Sushkov}},\ and\
  \bibinfo {author} {\bibfnamefont {M.~D.}\ \bibnamefont {Lukin}},\ }\bibfield
  {title} {\bibinfo {title} {{Quantum Error Correction for Metrology}},\ }\href
  {https://doi.org/10.1103/PhysRevLett.112.150802} {\bibfield  {journal}
  {\bibinfo  {journal} {Phys. Rev. Lett.}\ }\textbf {\bibinfo {volume} {112}},\
  \bibinfo {pages} {150802} (\bibinfo {year} {2014})}\BibitemShut {NoStop}%
\bibitem [{\citenamefont {Demkowicz-Dobrza\ifmmode~\acute{n}\else
  \'{n}\fi{}ski}\ \emph {et~al.}(2017)\citenamefont
  {Demkowicz-Dobrza\ifmmode~\acute{n}\else \'{n}\fi{}ski}, \citenamefont
  {Czajkowski},\ and\ \citenamefont {Sekatski}}]{Demkowicz-Dobrza2017PRX}%
  \BibitemOpen
  \bibfield  {author} {\bibinfo {author} {\bibfnamefont {R.}~\bibnamefont
  {Demkowicz-Dobrza\ifmmode~\acute{n}\else \'{n}\fi{}ski}}, \bibinfo {author}
  {\bibfnamefont {J.}~\bibnamefont {Czajkowski}},\ and\ \bibinfo {author}
  {\bibfnamefont {P.}~\bibnamefont {Sekatski}},\ }\bibfield  {title} {\bibinfo
  {title} {{Adaptive Quantum Metrology under General Markovian Noise}},\ }\href
  {https://doi.org/10.1103/PhysRevX.7.041009} {\bibfield  {journal} {\bibinfo
  {journal} {Phys. Rev. X}\ }\textbf {\bibinfo {volume} {7}},\ \bibinfo {pages}
  {041009} (\bibinfo {year} {2017})}\BibitemShut {NoStop}%
\bibitem [{\citenamefont {Zhou}\ \emph {et~al.}(2018)\citenamefont {Zhou},
  \citenamefont {Zhang}, \citenamefont {Preskill},\ and\ \citenamefont
  {Jiang}}]{Zhou2018}%
  \BibitemOpen
  \bibfield  {author} {\bibinfo {author} {\bibfnamefont {S.}~\bibnamefont
  {Zhou}}, \bibinfo {author} {\bibfnamefont {M.}~\bibnamefont {Zhang}},
  \bibinfo {author} {\bibfnamefont {J.}~\bibnamefont {Preskill}},\ and\
  \bibinfo {author} {\bibfnamefont {L.}~\bibnamefont {Jiang}},\ }\bibfield
  {title} {\bibinfo {title} {Achieving the heisenberg limit in quantum
  metrology using quantum error correction},\ }\href
  {https://doi.org/10.1038/s41467-017-02510-3} {\bibfield  {journal} {\bibinfo
  {journal} {Nat. Commun.}\ }\textbf {\bibinfo {volume} {9}},\ \bibinfo {pages}
  {78} (\bibinfo {year} {2018})}\BibitemShut {NoStop}%
\bibitem [{\citenamefont {Preskill}(2018)}]{Preskill2018quantumcomputingin}%
  \BibitemOpen
  \bibfield  {author} {\bibinfo {author} {\bibfnamefont {J.}~\bibnamefont
  {Preskill}},\ }\bibfield  {title} {\bibinfo {title} {Quantum {C}omputing in
  the {NISQ} era and beyond},\ }\href
  {https://doi.org/10.22331/q-2018-08-06-79} {\bibfield  {journal} {\bibinfo
  {journal} {{Quantum}}\ }\textbf {\bibinfo {volume} {2}},\ \bibinfo {pages}
  {79} (\bibinfo {year} {2018})}\BibitemShut {NoStop}%
\bibitem [{\citenamefont {Chiribella}\ \emph {et~al.}(2013)\citenamefont
  {Chiribella}, \citenamefont {D'Ariano}, \citenamefont {Perinotti},\ and\
  \citenamefont {Valiron}}]{Chiribella2013PRA}%
  \BibitemOpen
  \bibfield  {author} {\bibinfo {author} {\bibfnamefont {G.}~\bibnamefont
  {Chiribella}}, \bibinfo {author} {\bibfnamefont {G.~M.}\ \bibnamefont
  {D'Ariano}}, \bibinfo {author} {\bibfnamefont {P.}~\bibnamefont
  {Perinotti}},\ and\ \bibinfo {author} {\bibfnamefont {B.}~\bibnamefont
  {Valiron}},\ }\bibfield  {title} {\bibinfo {title} {Quantum computations
  without definite causal structure},\ }\href
  {https://doi.org/10.1103/PhysRevA.88.022318} {\bibfield  {journal} {\bibinfo
  {journal} {Phys. Rev. A}\ }\textbf {\bibinfo {volume} {88}},\ \bibinfo
  {pages} {022318} (\bibinfo {year} {2013})}\BibitemShut {NoStop}%
\bibitem [{\citenamefont {Chapeau-Blondeau}(2021)}]{Chapeau-Blondeau2021PRA}%
  \BibitemOpen
  \bibfield  {author} {\bibinfo {author} {\bibfnamefont {F.}~\bibnamefont
  {Chapeau-Blondeau}},\ }\bibfield  {title} {\bibinfo {title} {Noisy quantum
  metrology with the assistance of indefinite causal order},\ }\href
  {https://doi.org/10.1103/PhysRevA.103.032615} {\bibfield  {journal} {\bibinfo
   {journal} {Phys. Rev. A}\ }\textbf {\bibinfo {volume} {103}},\ \bibinfo
  {pages} {032615} (\bibinfo {year} {2021})}\BibitemShut {NoStop}%
\bibitem [{\citenamefont {Mukhopadhyay}\ \emph {et~al.}(2018)\citenamefont
  {Mukhopadhyay}, \citenamefont {Gupta},\ and\ \citenamefont
  {Pati}}]{mukhopadhyay2018superposition}%
  \BibitemOpen
  \bibfield  {author} {\bibinfo {author} {\bibfnamefont {C.}~\bibnamefont
  {Mukhopadhyay}}, \bibinfo {author} {\bibfnamefont {M.~K.}\ \bibnamefont
  {Gupta}},\ and\ \bibinfo {author} {\bibfnamefont {A.~K.}\ \bibnamefont
  {Pati}},\ }\href@noop {} {\bibinfo {title} {Superposition of causal order as
  a metrological resource for quantum thermometry}} (\bibinfo {year} {2018}),\
  \Eprint {https://arxiv.org/abs/1812.07508} {arXiv:1812.07508 [quant-ph]}
  \BibitemShut {NoStop}%
\bibitem [{\citenamefont {Zhao}\ \emph {et~al.}(2020)\citenamefont {Zhao},
  \citenamefont {Yang},\ and\ \citenamefont {Chiribella}}]{Zhao2020PRL}%
  \BibitemOpen
  \bibfield  {author} {\bibinfo {author} {\bibfnamefont {X.}~\bibnamefont
  {Zhao}}, \bibinfo {author} {\bibfnamefont {Y.}~\bibnamefont {Yang}},\ and\
  \bibinfo {author} {\bibfnamefont {G.}~\bibnamefont {Chiribella}},\ }\bibfield
   {title} {\bibinfo {title} {{Quantum Metrology with Indefinite Causal
  Order}},\ }\href {https://doi.org/10.1103/PhysRevLett.124.190503} {\bibfield
  {journal} {\bibinfo  {journal} {Phys. Rev. Lett.}\ }\textbf {\bibinfo
  {volume} {124}},\ \bibinfo {pages} {190503} (\bibinfo {year}
  {2020})}\BibitemShut {NoStop}%
\bibitem [{\citenamefont {Oreshkov}\ \emph {et~al.}(2012)\citenamefont
  {Oreshkov}, \citenamefont {Costa},\ and\ \citenamefont
  {Brukner}}]{Oreshkov2012}%
  \BibitemOpen
  \bibfield  {author} {\bibinfo {author} {\bibfnamefont {O.}~\bibnamefont
  {Oreshkov}}, \bibinfo {author} {\bibfnamefont {F.}~\bibnamefont {Costa}},\
  and\ \bibinfo {author} {\bibfnamefont {{\v{C}}.}~\bibnamefont {Brukner}},\
  }\bibfield  {title} {\bibinfo {title} {Quantum correlations with no causal
  order},\ }\href {https://doi.org/10.1038/ncomms2076} {\bibfield  {journal}
  {\bibinfo  {journal} {Nat. Commun.}\ }\textbf {\bibinfo {volume} {3}},\
  \bibinfo {pages} {1092} (\bibinfo {year} {2012})}\BibitemShut {NoStop}%
\bibitem [{\citenamefont {Ara{\'{u}}jo}\ \emph {et~al.}(2015)\citenamefont
  {Ara{\'{u}}jo}, \citenamefont {Branciard}, \citenamefont {Costa},
  \citenamefont {Feix}, \citenamefont {Giarmatzi},\ and\ \citenamefont
  {Brukner}}]{Araujo2015}%
  \BibitemOpen
  \bibfield  {author} {\bibinfo {author} {\bibfnamefont {M.}~\bibnamefont
  {Ara{\'{u}}jo}}, \bibinfo {author} {\bibfnamefont {C.}~\bibnamefont
  {Branciard}}, \bibinfo {author} {\bibfnamefont {F.}~\bibnamefont {Costa}},
  \bibinfo {author} {\bibfnamefont {A.}~\bibnamefont {Feix}}, \bibinfo {author}
  {\bibfnamefont {C.}~\bibnamefont {Giarmatzi}},\ and\ \bibinfo {author}
  {\bibfnamefont {{\v{C}}.}~\bibnamefont {Brukner}},\ }\bibfield  {title}
  {\bibinfo {title} {Witnessing causal nonseparability},\ }\href
  {https://doi.org/10.1088/1367-2630/17/10/102001} {\bibfield  {journal}
  {\bibinfo  {journal} {New J. Phys.}\ }\textbf {\bibinfo {volume} {17}},\
  \bibinfo {pages} {102001} (\bibinfo {year} {2015})}\BibitemShut {NoStop}%
\bibitem [{\citenamefont {Quintino}\ \emph {et~al.}(2019)\citenamefont
  {Quintino}, \citenamefont {Dong}, \citenamefont {Shimbo}, \citenamefont
  {Soeda},\ and\ \citenamefont {Murao}}]{Quintino2019PRL}%
  \BibitemOpen
  \bibfield  {author} {\bibinfo {author} {\bibfnamefont {M.~T.}\ \bibnamefont
  {Quintino}}, \bibinfo {author} {\bibfnamefont {Q.}~\bibnamefont {Dong}},
  \bibinfo {author} {\bibfnamefont {A.}~\bibnamefont {Shimbo}}, \bibinfo
  {author} {\bibfnamefont {A.}~\bibnamefont {Soeda}},\ and\ \bibinfo {author}
  {\bibfnamefont {M.}~\bibnamefont {Murao}},\ }\bibfield  {title} {\bibinfo
  {title} {{Reversing Unknown Quantum Transformations: Universal Quantum
  Circuit for Inverting General Unitary Operations}},\ }\href
  {https://doi.org/10.1103/PhysRevLett.123.210502} {\bibfield  {journal}
  {\bibinfo  {journal} {Phys. Rev. Lett.}\ }\textbf {\bibinfo {volume} {123}},\
  \bibinfo {pages} {210502} (\bibinfo {year} {2019})}\BibitemShut {NoStop}%
\bibitem [{\citenamefont {Bavaresco}\ \emph {et~al.}(2021)\citenamefont
  {Bavaresco}, \citenamefont {Murao},\ and\ \citenamefont
  {Quintino}}]{Bavaresco2021PRL}%
  \BibitemOpen
  \bibfield  {author} {\bibinfo {author} {\bibfnamefont {J.}~\bibnamefont
  {Bavaresco}}, \bibinfo {author} {\bibfnamefont {M.}~\bibnamefont {Murao}},\
  and\ \bibinfo {author} {\bibfnamefont {M.~T.}\ \bibnamefont {Quintino}},\
  }\bibfield  {title} {\bibinfo {title} {{Strict Hierarchy between Parallel,
  Sequential, and Indefinite-Causal-Order Strategies for Channel
  Discrimination}},\ }\href {https://doi.org/10.1103/PhysRevLett.127.200504}
  {\bibfield  {journal} {\bibinfo  {journal} {Phys. Rev. Lett.}\ }\textbf
  {\bibinfo {volume} {127}},\ \bibinfo {pages} {200504} (\bibinfo {year}
  {2021})}\BibitemShut {NoStop}%
\bibitem [{\citenamefont {Helstrom}(1976)}]{helstrom1976quantum}%
  \BibitemOpen
  \bibfield  {author} {\bibinfo {author} {\bibfnamefont {C.}~\bibnamefont
  {Helstrom}},\ }\href {https://books.google.com.hk/books?id=fv9SAAAAMAAJ}
  {\emph {\bibinfo {title} {Quantum Detection and Estimation Theory}}}\
  (\bibinfo  {publisher} {Academic Press},\ \bibinfo {address} {New York},\
  \bibinfo {year} {1976})\BibitemShut {NoStop}%
\bibitem [{\citenamefont {Holevo}(2011)}]{holevo2011probabilistic}%
  \BibitemOpen
  \bibfield  {author} {\bibinfo {author} {\bibfnamefont {A.~S.}\ \bibnamefont
  {Holevo}},\ }\href@noop {} {\emph {\bibinfo {title} {Probabilistic and
  Statistical Aspects of Quantum Theory}}},\ Vol.~\bibinfo {volume} {1}\
  (\bibinfo  {publisher} {Springer Science \& Business Media},\ \bibinfo
  {address} {Berlin},\ \bibinfo {year} {2011})\BibitemShut {NoStop}%
\bibitem [{\citenamefont {Braunstein}\ and\ \citenamefont
  {Caves}(1994)}]{Braunstein1994PRL}%
  \BibitemOpen
  \bibfield  {author} {\bibinfo {author} {\bibfnamefont {S.~L.}\ \bibnamefont
  {Braunstein}}\ and\ \bibinfo {author} {\bibfnamefont {C.~M.}\ \bibnamefont
  {Caves}},\ }\bibfield  {title} {\bibinfo {title} {{Statistical distance and
  the geometry of quantum states}},\ }\href
  {https://doi.org/10.1103/PhysRevLett.72.3439} {\bibfield  {journal} {\bibinfo
   {journal} {Phys. Rev. Lett.}\ }\textbf {\bibinfo {volume} {72}},\ \bibinfo
  {pages} {3439} (\bibinfo {year} {1994})}\BibitemShut {NoStop}%
\bibitem [{Asy()}]{Asymptotic_note}%
  \BibitemOpen
  \href@noop {} {}\bibinfo {note} {For $\nu \rightarrow \infty$, the maximum
  likelihood estimator is unbiased and the quantum Cramér-Rao bound is
  achieved. This should not be confused with the nonasymptotic regime (where
  $N$ is finite).}\BibitemShut {Stop}%
\bibitem [{\citenamefont {Fujiwara}\ and\ \citenamefont
  {Imai}(2008)}]{Fujiwara2008}%
  \BibitemOpen
  \bibfield  {author} {\bibinfo {author} {\bibfnamefont {A.}~\bibnamefont
  {Fujiwara}}\ and\ \bibinfo {author} {\bibfnamefont {H.}~\bibnamefont
  {Imai}},\ }\bibfield  {title} {\bibinfo {title} {A fibre bundle over
  manifolds of quantum channels and its application to quantum statistics},\
  }\href {https://doi.org/10.1088/1751-8113/41/25/255304} {\bibfield  {journal}
  {\bibinfo  {journal} {J. Phys. A}\ }\textbf {\bibinfo {volume} {41}},\
  \bibinfo {pages} {255304} (\bibinfo {year} {2008})}\BibitemShut {NoStop}%
\bibitem [{\citenamefont {Escher}\ \emph {et~al.}(2011)\citenamefont {Escher},
  \citenamefont {de~Matos~Filho},\ and\ \citenamefont
  {Davidovich}}]{Escher2011}%
  \BibitemOpen
  \bibfield  {author} {\bibinfo {author} {\bibfnamefont {B.~M.}\ \bibnamefont
  {Escher}}, \bibinfo {author} {\bibfnamefont {R.~L.}\ \bibnamefont
  {de~Matos~Filho}},\ and\ \bibinfo {author} {\bibfnamefont {L.}~\bibnamefont
  {Davidovich}},\ }\bibfield  {title} {\bibinfo {title} {General framework for
  estimating the ultimate precision limit in noisy quantum-enhanced
  metrology},\ }\href {https://doi.org/10.1038/nphys1958} {\bibfield  {journal}
  {\bibinfo  {journal} {Nat. Phys.}\ }\textbf {\bibinfo {volume} {7}},\
  \bibinfo {pages} {406} (\bibinfo {year} {2011})}\BibitemShut {NoStop}%
\bibitem [{\citenamefont {Demkowicz-Dobrza{\'{n}}ski}\ \emph
  {et~al.}(2012)\citenamefont {Demkowicz-Dobrza{\'{n}}ski}, \citenamefont
  {Ko{\l}ody{\'{n}}ski},\ and\ \citenamefont
  {Gu{\c{t}}{\u{a}}}}]{Demkowicz-Dobrzanski2012}%
  \BibitemOpen
  \bibfield  {author} {\bibinfo {author} {\bibfnamefont {R.}~\bibnamefont
  {Demkowicz-Dobrza{\'{n}}ski}}, \bibinfo {author} {\bibfnamefont
  {J.}~\bibnamefont {Ko{\l}ody{\'{n}}ski}},\ and\ \bibinfo {author}
  {\bibfnamefont {M.}~\bibnamefont {Gu{\c{t}}{\u{a}}}},\ }\bibfield  {title}
  {\bibinfo {title} {The elusive heisenberg limit in quantum-enhanced
  metrology},\ }\href {https://doi.org/10.1038/ncomms2067} {\bibfield
  {journal} {\bibinfo  {journal} {Nat. Commun.}\ }\textbf {\bibinfo {volume}
  {3}},\ \bibinfo {pages} {1063} (\bibinfo {year} {2012})}\BibitemShut
  {NoStop}%
\bibitem [{\citenamefont {Yuan}\ and\ \citenamefont
  {Fung}(2017{\natexlab{a}})}]{yuan2017fidelity}%
  \BibitemOpen
  \bibfield  {author} {\bibinfo {author} {\bibfnamefont {H.}~\bibnamefont
  {Yuan}}\ and\ \bibinfo {author} {\bibfnamefont {C.-H.~F.}\ \bibnamefont
  {Fung}},\ }\bibfield  {title} {\bibinfo {title} {Fidelity and fisher
  information on quantum channels},\ }\href
  {https://doi.org/10.1088/1367-2630/aa874c} {\bibfield  {journal} {\bibinfo
  {journal} {New J. Phys.}\ }\textbf {\bibinfo {volume} {19}},\ \bibinfo
  {pages} {113039} (\bibinfo {year} {2017}{\natexlab{a}})}\BibitemShut
  {NoStop}%
\bibitem [{\citenamefont {Yuan}\ and\ \citenamefont
  {Fung}(2017{\natexlab{b}})}]{yuan2017quantum}%
  \BibitemOpen
  \bibfield  {author} {\bibinfo {author} {\bibfnamefont {H.}~\bibnamefont
  {Yuan}}\ and\ \bibinfo {author} {\bibfnamefont {C.-H.~F.}\ \bibnamefont
  {Fung}},\ }\bibfield  {title} {\bibinfo {title} {Quantum parameter estimation
  with general dynamics},\ }\href {https://doi.org/10.1038/s41534-017-0014-6}
  {\bibfield  {journal} {\bibinfo  {journal} {npj Quantum Inf.}\ }\textbf
  {\bibinfo {volume} {3}},\ \bibinfo {pages} {14} (\bibinfo {year}
  {2017}{\natexlab{b}})}\BibitemShut {NoStop}%
\bibitem [{\citenamefont {Chiribella}\ \emph
  {et~al.}(2008{\natexlab{a}})\citenamefont {Chiribella}, \citenamefont
  {D'Ariano},\ and\ \citenamefont {Perinotti}}]{Chiribella2008PRL}%
  \BibitemOpen
  \bibfield  {author} {\bibinfo {author} {\bibfnamefont {G.}~\bibnamefont
  {Chiribella}}, \bibinfo {author} {\bibfnamefont {G.~M.}\ \bibnamefont
  {D'Ariano}},\ and\ \bibinfo {author} {\bibfnamefont {P.}~\bibnamefont
  {Perinotti}},\ }\bibfield  {title} {\bibinfo {title} {Quantum circuit
  architecture},\ }\href {https://doi.org/10.1103/PhysRevLett.101.060401}
  {\bibfield  {journal} {\bibinfo  {journal} {Phys. Rev. Lett.}\ }\textbf
  {\bibinfo {volume} {101}},\ \bibinfo {pages} {060401} (\bibinfo {year}
  {2008}{\natexlab{a}})}\BibitemShut {NoStop}%
\bibitem [{\citenamefont {Chiribella}\ \emph {et~al.}(2009)\citenamefont
  {Chiribella}, \citenamefont {D'Ariano},\ and\ \citenamefont
  {Perinotti}}]{Chiribella2009PRA}%
  \BibitemOpen
  \bibfield  {author} {\bibinfo {author} {\bibfnamefont {G.}~\bibnamefont
  {Chiribella}}, \bibinfo {author} {\bibfnamefont {G.~M.}\ \bibnamefont
  {D'Ariano}},\ and\ \bibinfo {author} {\bibfnamefont {P.}~\bibnamefont
  {Perinotti}},\ }\bibfield  {title} {\bibinfo {title} {Theoretical framework
  for quantum networks},\ }\href {https://doi.org/10.1103/PhysRevA.80.022339}
  {\bibfield  {journal} {\bibinfo  {journal} {Phys. Rev. A}\ }\textbf {\bibinfo
  {volume} {80}},\ \bibinfo {pages} {022339} (\bibinfo {year}
  {2009})}\BibitemShut {NoStop}%
\bibitem [{pur()}]{purification_note}%
  \BibitemOpen
  \href@noop {} {}\bibinfo {note} {For any CJ operator $P$ of rank $r_P$, we
  can always choose an $r_P$-dimensional ancillary space to purify $P$, which
  only extends the global future space $\mathcal H_F$ without changing the
  causal relations between queries to the channel $\mathcal E_\phi$. As QFI is
  monotonic under CPTP maps, the optimal QFI can be achieved by a purified
  process, and thus it is sufficient to only consider pure
  processes.}\BibitemShut {Stop}%
\bibitem [{\citenamefont {Petz}(2008)}]{petz2008quantum}%
  \BibitemOpen
  \bibfield  {author} {\bibinfo {author} {\bibfnamefont {D.}~\bibnamefont
  {Petz}},\ }\href@noop {} {\emph {\bibinfo {title} {Quantum Information Theory
  and Quantum Statistics}}}\ (\bibinfo  {publisher} {Springer},\ \bibinfo
  {address} {Berlin},\ \bibinfo {year} {2008})\BibitemShut {NoStop}%
\bibitem [{\citenamefont {Yang}(2019)}]{Yang2019PRL}%
  \BibitemOpen
  \bibfield  {author} {\bibinfo {author} {\bibfnamefont {Y.}~\bibnamefont
  {Yang}},\ }\bibfield  {title} {\bibinfo {title} {{Memory Effects in Quantum
  Metrology}},\ }\href {https://doi.org/10.1103/PhysRevLett.123.110501}
  {\bibfield  {journal} {\bibinfo  {journal} {Phys. Rev. Lett.}\ }\textbf
  {\bibinfo {volume} {123}},\ \bibinfo {pages} {110501} (\bibinfo {year}
  {2019})}\BibitemShut {NoStop}%
\bibitem [{\citenamefont {Altherr}\ and\ \citenamefont
  {Yang}(2021)}]{Altherr2021PRL}%
  \BibitemOpen
  \bibfield  {author} {\bibinfo {author} {\bibfnamefont {A.}~\bibnamefont
  {Altherr}}\ and\ \bibinfo {author} {\bibfnamefont {Y.}~\bibnamefont {Yang}},\
  }\bibfield  {title} {\bibinfo {title} {Quantum {M}etrology for
  {N}on-markovian {P}rocesses},\ }\href
  {https://doi.org/10.1103/PhysRevLett.127.060501} {\bibfield  {journal}
  {\bibinfo  {journal} {Phys. Rev. Lett.}\ }\textbf {\bibinfo {volume} {127}},\
  \bibinfo {pages} {060501} (\bibinfo {year} {2021})}\BibitemShut {NoStop}%
\bibitem [{\citenamefont {Fan}(1953)}]{Fan1953}%
  \BibitemOpen
  \bibfield  {author} {\bibinfo {author} {\bibfnamefont {K.}~\bibnamefont
  {Fan}},\ }\bibfield  {title} {\bibinfo {title} {Minimax theorems},\ }\href
  {https://doi.org/10.1073/pnas.39.1.42} {\bibfield  {journal} {\bibinfo
  {journal} {Proc. Natl. Acad. Sci. U.S.A.}\ }\textbf {\bibinfo {volume}
  {39}},\ \bibinfo {pages} {42} (\bibinfo {year} {1953})}\BibitemShut {NoStop}%
\bibitem [{\citenamefont {Rockafellar}(1970)}]{rockafellar1970convex}%
  \BibitemOpen
  \bibfield  {author} {\bibinfo {author} {\bibfnamefont {R.~T.}\ \bibnamefont
  {Rockafellar}},\ }\href@noop {} {\emph {\bibinfo {title} {Convex Analysis}}}\
  (\bibinfo  {publisher} {Princeton University Press},\ \bibinfo {address}
  {Princeton, NJ},\ \bibinfo {year} {1970})\BibitemShut {NoStop}%
\bibitem [{min()}]{minimax_note}%
  \BibitemOpen
  \href@noop {} {}\bibinfo {note} {We also suppose that the strategy set
  $\mathsf P$ is compact in order to apply Fan's minimax theorem.}\BibitemShut
  {Stop}%
\bibitem [{\citenamefont {Chiribella}(2012)}]{Chiribella_2012}%
  \BibitemOpen
  \bibfield  {author} {\bibinfo {author} {\bibfnamefont {G.}~\bibnamefont
  {Chiribella}},\ }\bibfield  {title} {\bibinfo {title} {Optimal networks for
  quantum metrology: {S}emidefinite programs and product rules},\ }\href
  {https://doi.org/10.1088/1367-2630/14/12/125008} {\bibfield  {journal}
  {\bibinfo  {journal} {New J. Phys.}\ }\textbf {\bibinfo {volume} {14}},\
  \bibinfo {pages} {125008} (\bibinfo {year} {2012})}\BibitemShut {NoStop}%
\bibitem [{\citenamefont {Chiribella}\ and\ \citenamefont
  {Ebler}(2016)}]{Chiribella_2016}%
  \BibitemOpen
  \bibfield  {author} {\bibinfo {author} {\bibfnamefont {G.}~\bibnamefont
  {Chiribella}}\ and\ \bibinfo {author} {\bibfnamefont {D.}~\bibnamefont
  {Ebler}},\ }\bibfield  {title} {\bibinfo {title} {Optimal quantum networks
  and one-shot entropies},\ }\href
  {https://doi.org/10.1088/1367-2630/18/9/093053} {\bibfield  {journal}
  {\bibinfo  {journal} {New J. Phys.}\ }\textbf {\bibinfo {volume} {18}},\
  \bibinfo {pages} {093053} (\bibinfo {year} {2016})}\BibitemShut {NoStop}%
\bibitem [{\citenamefont {Zhou}\ and\ \citenamefont
  {Jiang}(2020)}]{PhysRevResearch.2.013235}%
  \BibitemOpen
  \bibfield  {author} {\bibinfo {author} {\bibfnamefont {S.}~\bibnamefont
  {Zhou}}\ and\ \bibinfo {author} {\bibfnamefont {L.}~\bibnamefont {Jiang}},\
  }\bibfield  {title} {\bibinfo {title} {Optimal approximate quantum error
  correction for quantum metrology},\ }\href
  {https://doi.org/10.1103/PhysRevResearch.2.013235} {\bibfield  {journal}
  {\bibinfo  {journal} {Phys. Rev. Res.}\ }\textbf {\bibinfo {volume} {2}},\
  \bibinfo {pages} {013235} (\bibinfo {year} {2020})}\BibitemShut {NoStop}%
\bibitem [{\citenamefont {Zhou}\ and\ \citenamefont
  {Jiang}(2021)}]{PRXQuantum.2.010343}%
  \BibitemOpen
  \bibfield  {author} {\bibinfo {author} {\bibfnamefont {S.}~\bibnamefont
  {Zhou}}\ and\ \bibinfo {author} {\bibfnamefont {L.}~\bibnamefont {Jiang}},\
  }\bibfield  {title} {\bibinfo {title} {Asymptotic theory of quantum channel
  estimation},\ }\href {https://doi.org/10.1103/PRXQuantum.2.010343} {\bibfield
   {journal} {\bibinfo  {journal} {PRX Quantum}\ }\textbf {\bibinfo {volume}
  {2}},\ \bibinfo {pages} {010343} (\bibinfo {year} {2021})}\BibitemShut
  {NoStop}%
\bibitem [{\citenamefont {Bisio}\ \emph {et~al.}(2011)\citenamefont {Bisio},
  \citenamefont {D'Ariano}, \citenamefont {Perinotti},\ and\ \citenamefont
  {Chiribella}}]{Bisio2011PRA}%
  \BibitemOpen
  \bibfield  {author} {\bibinfo {author} {\bibfnamefont {A.}~\bibnamefont
  {Bisio}}, \bibinfo {author} {\bibfnamefont {G.~M.}\ \bibnamefont {D'Ariano}},
  \bibinfo {author} {\bibfnamefont {P.}~\bibnamefont {Perinotti}},\ and\
  \bibinfo {author} {\bibfnamefont {G.}~\bibnamefont {Chiribella}},\ }\bibfield
   {title} {\bibinfo {title} {Minimal computational-space implementation of
  multiround quantum protocols},\ }\href
  {https://doi.org/10.1103/PhysRevA.83.022325} {\bibfield  {journal} {\bibinfo
  {journal} {Phys. Rev. A}\ }\textbf {\bibinfo {volume} {83}},\ \bibinfo
  {pages} {022325} (\bibinfo {year} {2011})}\BibitemShut {NoStop}%
\bibitem [{\citenamefont {Giovannetti}\ \emph {et~al.}(2006)\citenamefont
  {Giovannetti}, \citenamefont {Lloyd},\ and\ \citenamefont
  {Maccone}}]{Giovannetti2006PRL}%
  \BibitemOpen
  \bibfield  {author} {\bibinfo {author} {\bibfnamefont {V.}~\bibnamefont
  {Giovannetti}}, \bibinfo {author} {\bibfnamefont {S.}~\bibnamefont {Lloyd}},\
  and\ \bibinfo {author} {\bibfnamefont {L.}~\bibnamefont {Maccone}},\
  }\bibfield  {title} {\bibinfo {title} {{Quantum Metrology}},\ }\href
  {https://doi.org/10.1103/PhysRevLett.96.010401} {\bibfield  {journal}
  {\bibinfo  {journal} {Phys. Rev. Lett.}\ }\textbf {\bibinfo {volume} {96}},\
  \bibinfo {pages} {010401} (\bibinfo {year} {2006})}\BibitemShut {NoStop}%
\bibitem [{\citenamefont {Colnaghi}\ \emph {et~al.}(2012)\citenamefont
  {Colnaghi}, \citenamefont {D'Ariano}, \citenamefont {Facchini},\ and\
  \citenamefont {Perinotti}}]{COLNAGHI20122940}%
  \BibitemOpen
  \bibfield  {author} {\bibinfo {author} {\bibfnamefont {T.}~\bibnamefont
  {Colnaghi}}, \bibinfo {author} {\bibfnamefont {G.~M.}\ \bibnamefont
  {D'Ariano}}, \bibinfo {author} {\bibfnamefont {S.}~\bibnamefont {Facchini}},\
  and\ \bibinfo {author} {\bibfnamefont {P.}~\bibnamefont {Perinotti}},\
  }\bibfield  {title} {\bibinfo {title} {Quantum computation with programmable
  connections between gates},\ }\href
  {https://doi.org/https://doi.org/10.1016/j.physleta.2012.08.028} {\bibfield
  {journal} {\bibinfo  {journal} {Phys. Lett. A}\ }\textbf {\bibinfo {volume}
  {376}},\ \bibinfo {pages} {2940} (\bibinfo {year} {2012})}\BibitemShut
  {NoStop}%
\bibitem [{\citenamefont {Ara\'ujo}\ \emph {et~al.}(2014)\citenamefont
  {Ara\'ujo}, \citenamefont {Costa},\ and\ \citenamefont
  {Brukner}}]{Araujo2014PRL}%
  \BibitemOpen
  \bibfield  {author} {\bibinfo {author} {\bibfnamefont {M.}~\bibnamefont
  {Ara\'ujo}}, \bibinfo {author} {\bibfnamefont {F.}~\bibnamefont {Costa}},\
  and\ \bibinfo {author} {\bibfnamefont {v.}~\bibnamefont {Brukner}},\
  }\bibfield  {title} {\bibinfo {title} {{Computational Advantage from
  Quantum-Controlled Ordering of Gates}},\ }\href
  {https://doi.org/10.1103/PhysRevLett.113.250402} {\bibfield  {journal}
  {\bibinfo  {journal} {Phys. Rev. Lett.}\ }\textbf {\bibinfo {volume} {113}},\
  \bibinfo {pages} {250402} (\bibinfo {year} {2014})}\BibitemShut {NoStop}%
\bibitem [{\citenamefont {Wechs}\ \emph {et~al.}(2021)\citenamefont {Wechs},
  \citenamefont {Dourdent}, \citenamefont {Abbott},\ and\ \citenamefont
  {Branciard}}]{Wechs2021PRXQuantum}%
  \BibitemOpen
  \bibfield  {author} {\bibinfo {author} {\bibfnamefont {J.}~\bibnamefont
  {Wechs}}, \bibinfo {author} {\bibfnamefont {H.}~\bibnamefont {Dourdent}},
  \bibinfo {author} {\bibfnamefont {A.~A.}\ \bibnamefont {Abbott}},\ and\
  \bibinfo {author} {\bibfnamefont {C.}~\bibnamefont {Branciard}},\ }\bibfield
  {title} {\bibinfo {title} {Quantum circuits with classical versus quantum
  control of causal order},\ }\href
  {https://doi.org/10.1103/PRXQuantum.2.030335} {\bibfield  {journal} {\bibinfo
   {journal} {PRX Quantum}\ }\textbf {\bibinfo {volume} {2}},\ \bibinfo {pages}
  {030335} (\bibinfo {year} {2021})}\BibitemShut {NoStop}%
\bibitem [{\citenamefont {Purves}\ and\ \citenamefont
  {Short}(2021)}]{Purves2021}%
  \BibitemOpen
  \bibfield  {author} {\bibinfo {author} {\bibfnamefont {T.}~\bibnamefont
  {Purves}}\ and\ \bibinfo {author} {\bibfnamefont {A.~J.}\ \bibnamefont
  {Short}},\ }\bibfield  {title} {\bibinfo {title} {{Quantum Theory Cannot
  Violate a Causal Inequality}},\ }\href
  {https://doi.org/10.1103/PhysRevLett.127.110402} {\bibfield  {journal}
  {\bibinfo  {journal} {Phys. Rev. Lett.}\ }\textbf {\bibinfo {volume} {127}},\
  \bibinfo {pages} {110402} (\bibinfo {year} {2021})}\BibitemShut {NoStop}%
\bibitem [{inv()}]{invariant_strategy_note}%
  \BibitemOpen
  \href@noop {} {}\bibinfo {note} {Concretely, $\tilde P^{(\mathrm{opt})}=
  \Tr_F P^{(\mathrm{opt})}$ is permutation invariant in this case.}\BibitemShut
  {Stop}%
\bibitem [{\citenamefont {Bachoc}\ \emph {et~al.}(2012)\citenamefont {Bachoc},
  \citenamefont {Gijswijt}, \citenamefont {Schrijver},\ and\ \citenamefont
  {Vallentin}}]{Bachoc2012}%
  \BibitemOpen
  \bibfield  {author} {\bibinfo {author} {\bibfnamefont {C.}~\bibnamefont
  {Bachoc}}, \bibinfo {author} {\bibfnamefont {D.~C.}\ \bibnamefont
  {Gijswijt}}, \bibinfo {author} {\bibfnamefont {A.}~\bibnamefont
  {Schrijver}},\ and\ \bibinfo {author} {\bibfnamefont {F.}~\bibnamefont
  {Vallentin}},\ }\bibinfo {title} {Invariant semidefinite programs},\ in\
  \href {https://doi.org/10.1007/978-1-4614-0769-0_9} {\emph {\bibinfo
  {booktitle} {Handbook on Semidefinite, Conic and Polynomial Optimization}}},\
  \bibinfo {editor} {edited by\ \bibinfo {editor} {\bibfnamefont {M.~F.}\
  \bibnamefont {Anjos}}\ and\ \bibinfo {editor} {\bibfnamefont {J.~B.}\
  \bibnamefont {Lasserre}}}\ (\bibinfo  {publisher} {Springer US},\ \bibinfo
  {address} {Boston, MA},\ \bibinfo {year} {2012})\ pp.\ \bibinfo {pages}
  {219--269}\BibitemShut {NoStop}%
\bibitem [{seq()}]{seq_bound_note}%
  \BibitemOpen
  \href@noop {} {}\bibinfo {note} {Note that the best existing upper bound on
  the QFI of sequential strategies
  \cite{Demkowicz-Dobrzanski14PRL,PRXQuantum.2.010343} is greater than or equal
  to that of parallel strategies and is thus omitted here. The sequential upper
  bound is also harder to evaluate numerically because it cannot be readily
  formulated as a convex optimization problem.}\BibitemShut {Stop}%
\bibitem [{\citenamefont {Hou}\ \emph {et~al.}(2019)\citenamefont {Hou},
  \citenamefont {Wang}, \citenamefont {Tang}, \citenamefont {Yuan},
  \citenamefont {Xiang}, \citenamefont {Li},\ and\ \citenamefont
  {Guo}}]{Hou2019PRL}%
  \BibitemOpen
  \bibfield  {author} {\bibinfo {author} {\bibfnamefont {Z.}~\bibnamefont
  {Hou}}, \bibinfo {author} {\bibfnamefont {R.-J.}\ \bibnamefont {Wang}},
  \bibinfo {author} {\bibfnamefont {J.-F.}\ \bibnamefont {Tang}}, \bibinfo
  {author} {\bibfnamefont {H.}~\bibnamefont {Yuan}}, \bibinfo {author}
  {\bibfnamefont {G.-Y.}\ \bibnamefont {Xiang}}, \bibinfo {author}
  {\bibfnamefont {C.-F.}\ \bibnamefont {Li}},\ and\ \bibinfo {author}
  {\bibfnamefont {G.-C.}\ \bibnamefont {Guo}},\ }\bibfield  {title} {\bibinfo
  {title} {{Control-Enhanced Sequential Scheme for General Quantum Parameter
  Estimation at the Heisenberg Limit}},\ }\href
  {https://doi.org/10.1103/PhysRevLett.123.040501} {\bibfield  {journal}
  {\bibinfo  {journal} {Phys. Rev. Lett.}\ }\textbf {\bibinfo {volume} {123}},\
  \bibinfo {pages} {040501} (\bibinfo {year} {2019})}\BibitemShut {NoStop}%
\bibitem [{\citenamefont {Hou}\ \emph {et~al.}(2021)\citenamefont {Hou},
  \citenamefont {Jin}, \citenamefont {Chen}, \citenamefont {Tang},
  \citenamefont {Huang}, \citenamefont {Yuan}, \citenamefont {Xiang},
  \citenamefont {Li},\ and\ \citenamefont {Guo}}]{Hou2021PRL}%
  \BibitemOpen
  \bibfield  {author} {\bibinfo {author} {\bibfnamefont {Z.}~\bibnamefont
  {Hou}}, \bibinfo {author} {\bibfnamefont {Y.}~\bibnamefont {Jin}}, \bibinfo
  {author} {\bibfnamefont {H.}~\bibnamefont {Chen}}, \bibinfo {author}
  {\bibfnamefont {J.-F.}\ \bibnamefont {Tang}}, \bibinfo {author}
  {\bibfnamefont {C.-J.}\ \bibnamefont {Huang}}, \bibinfo {author}
  {\bibfnamefont {H.}~\bibnamefont {Yuan}}, \bibinfo {author} {\bibfnamefont
  {G.-Y.}\ \bibnamefont {Xiang}}, \bibinfo {author} {\bibfnamefont {C.-F.}\
  \bibnamefont {Li}},\ and\ \bibinfo {author} {\bibfnamefont {G.-C.}\
  \bibnamefont {Guo}},\ }\bibfield  {title} {\bibinfo {title}
  {{``Super-Heisenberg'' and Heisenberg Scalings Achieved Simultaneously in the
  Estimation of a Rotating Field}},\ }\href
  {https://doi.org/10.1103/PhysRevLett.126.070503} {\bibfield  {journal}
  {\bibinfo  {journal} {Phys. Rev. Lett.}\ }\textbf {\bibinfo {volume} {126}},\
  \bibinfo {pages} {070503} (\bibinfo {year} {2021})}\BibitemShut {NoStop}%
\bibitem [{\citenamefont {Liu}\ \emph {et~al.}(2022)\citenamefont {Liu},
  \citenamefont {Hu}, \citenamefont {Yuan},\ and\ \citenamefont
  {Yang}}]{Code_note}%
  \BibitemOpen
  \bibfield  {author} {\bibinfo {author} {\bibfnamefont {Q.}~\bibnamefont
  {Liu}}, \bibinfo {author} {\bibfnamefont {Z.}~\bibnamefont {Hu}}, \bibinfo
  {author} {\bibfnamefont {H.}~\bibnamefont {Yuan}},\ and\ \bibinfo {author}
  {\bibfnamefont {Y.}~\bibnamefont {Yang}},\ }\href@noop {} {\bibinfo {title}
  {Quantum channel estimation within constraints on strategies}},\ \bibinfo
  {howpublished} {\url{https://github.com/qiushi-liu/strategies_in_metrology}}
  (\bibinfo {year} {2022})\BibitemShut {NoStop}%
\bibitem [{\citenamefont {Kurdzialek}\ \emph {et~al.}(2022)\citenamefont
  {Kurdzialek}, \citenamefont {Gorecki}, \citenamefont {Albarelli},\ and\
  \citenamefont {Demkowicz-Dobrzanski}}]{Kurdzialek22}%
  \BibitemOpen
  \bibfield  {author} {\bibinfo {author} {\bibfnamefont {S.}~\bibnamefont
  {Kurdzialek}}, \bibinfo {author} {\bibfnamefont {W.}~\bibnamefont {Gorecki}},
  \bibinfo {author} {\bibfnamefont {F.}~\bibnamefont {Albarelli}},\ and\
  \bibinfo {author} {\bibfnamefont {R.}~\bibnamefont {Demkowicz-Dobrzanski}},\
  }\href {https://doi.org/10.48550/ARXIV.2212.08106} {\bibinfo {title} {Using
  adaptiveness and causal superpositions against noise in quantum metrology}}
  (\bibinfo {year} {2022}),\ \Eprint {https://arxiv.org/abs/2212.08106}
  {arXiv:2212.08106 [quant-ph]} \BibitemShut {NoStop}%
\bibitem [{\citenamefont {Watrous}(2018)}]{watrous2018theory}%
  \BibitemOpen
  \bibfield  {author} {\bibinfo {author} {\bibfnamefont {J.}~\bibnamefont
  {Watrous}},\ }\href@noop {} {\emph {\bibinfo {title} {The Theory of Quantum
  Information}}}\ (\bibinfo  {publisher} {Cambridge University Press},\
  \bibinfo {address} {Cambridge, England},\ \bibinfo {year} {2018})\BibitemShut
  {NoStop}%
\bibitem [{\citenamefont {Fulton}\ and\ \citenamefont
  {Harris}(2013)}]{fulton2013representation}%
  \BibitemOpen
  \bibfield  {author} {\bibinfo {author} {\bibfnamefont {W.}~\bibnamefont
  {Fulton}}\ and\ \bibinfo {author} {\bibfnamefont {J.}~\bibnamefont
  {Harris}},\ }\href@noop {} {\emph {\bibinfo {title} {Representation Theory: A
  First Course}}},\ Graduate Texts in Mathematics\ (\bibinfo  {publisher}
  {Springer New York},\ \bibinfo {address} {NY},\ \bibinfo {year}
  {2013})\BibitemShut {NoStop}%
\bibitem [{\citenamefont {Schur}(1901)}]{schur1901ueber}%
  \BibitemOpen
  \bibfield  {author} {\bibinfo {author} {\bibfnamefont {I.}~\bibnamefont
  {Schur}},\ }\emph {\bibinfo {title} {Ueber eine Klasse von Matrizen, die sich
  einer gegebenen Matrix zuordnen lassen}},\ \href@noop {} {\bibinfo {type}
  {{Ph.D.} dissertation}},\ \bibinfo  {school}
  {Friedrich-{W}ilhelms-{U}niversit{\"a}t}, \bibinfo {address} {Berlin}
  (\bibinfo {year} {1901})\BibitemShut {NoStop}%
\bibitem [{\citenamefont {Montealegre-Mora}\ \emph {et~al.}(2021)\citenamefont
  {Montealegre-Mora}, \citenamefont {Rosset}, \citenamefont {Bancal},\ and\
  \citenamefont {Gross}}]{Montealegre-Mora21Certifying}%
  \BibitemOpen
  \bibfield  {author} {\bibinfo {author} {\bibfnamefont {F.}~\bibnamefont
  {Montealegre-Mora}}, \bibinfo {author} {\bibfnamefont {D.}~\bibnamefont
  {Rosset}}, \bibinfo {author} {\bibfnamefont {J.-D.}\ \bibnamefont {Bancal}},\
  and\ \bibinfo {author} {\bibfnamefont {D.}~\bibnamefont {Gross}},\ }\href
  {https://doi.org/10.48550/ARXIV.2101.12244} {\bibinfo {title} {Certifying
  numerical decompositions of compact group representations}} (\bibinfo {year}
  {2021}),\ \Eprint {https://arxiv.org/abs/2101.12244} {arXiv:2101.12244
  [math.RT]} \BibitemShut {NoStop}%
\bibitem [{\citenamefont {Rosset}\ \emph {et~al.}(2021)\citenamefont {Rosset},
  \citenamefont {Montealegre-Mora},\ and\ \citenamefont
  {Bancal}}]{Rosset_2021}%
  \BibitemOpen
  \bibfield  {author} {\bibinfo {author} {\bibfnamefont {D.}~\bibnamefont
  {Rosset}}, \bibinfo {author} {\bibfnamefont {F.}~\bibnamefont
  {Montealegre-Mora}},\ and\ \bibinfo {author} {\bibfnamefont {J.-D.}\
  \bibnamefont {Bancal}},\ }\bibfield  {title} {\bibinfo {title} {Replab: A
  computational/numerical approach to representation theory},\ }in\ \href@noop
  {} {\emph {\bibinfo {booktitle} {Quantum Theory and Symmetries}}},\ \bibinfo
  {editor} {edited by\ \bibinfo {editor} {\bibfnamefont {M.~B.}\ \bibnamefont
  {Paranjape}}, \bibinfo {editor} {\bibfnamefont {R.}~\bibnamefont
  {MacKenzie}}, \bibinfo {editor} {\bibfnamefont {Z.}~\bibnamefont {Thomova}},
  \bibinfo {editor} {\bibfnamefont {P.}~\bibnamefont {Winternitz}},\ and\
  \bibinfo {editor} {\bibfnamefont {W.}~\bibnamefont {Witczak-Krempa}}}\
  (\bibinfo  {publisher} {Springer International Publishing},\ \bibinfo
  {address} {Cham},\ \bibinfo {year} {2021})\ pp.\ \bibinfo {pages}
  {643--653}\BibitemShut {NoStop}%
\bibitem [{\citenamefont {Chiribella}\ \emph
  {et~al.}(2008{\natexlab{b}})\citenamefont {Chiribella}, \citenamefont
  {D'Ariano},\ and\ \citenamefont {Perinotti}}]{Chiribella2008EPL}%
  \BibitemOpen
  \bibfield  {author} {\bibinfo {author} {\bibfnamefont {G.}~\bibnamefont
  {Chiribella}}, \bibinfo {author} {\bibfnamefont {G.~M.}\ \bibnamefont
  {D'Ariano}},\ and\ \bibinfo {author} {\bibfnamefont {P.}~\bibnamefont
  {Perinotti}},\ }\bibfield  {title} {\bibinfo {title} {Transforming quantum
  operations: Quantum supermaps},\ }\href
  {https://doi.org/10.1209/0295-5075/83/30004} {\bibfield  {journal} {\bibinfo
  {journal} {Europhys. Lett.}\ }\textbf {\bibinfo {volume} {83}},\ \bibinfo
  {pages} {30004} (\bibinfo {year} {2008}{\natexlab{b}})}\BibitemShut {NoStop}%
\bibitem [{\citenamefont {Horn}\ and\ \citenamefont {Zhang}(2005)}]{Horn2005}%
  \BibitemOpen
  \bibfield  {author} {\bibinfo {author} {\bibfnamefont {R.~A.}\ \bibnamefont
  {Horn}}\ and\ \bibinfo {author} {\bibfnamefont {F.}~\bibnamefont {Zhang}},\
  }\bibinfo {title} {Basic properties of the {S}chur complement},\ in\ \href
  {https://doi.org/10.1007/0-387-24273-2_2} {\emph {\bibinfo {booktitle} {The
  Schur Complement and Its Applications}}},\ \bibinfo {editor} {edited by\
  \bibinfo {editor} {\bibfnamefont {F.}~\bibnamefont {Zhang}}}\ (\bibinfo
  {publisher} {Springer US},\ \bibinfo {address} {Boston, MA},\ \bibinfo {year}
  {2005})\ pp.\ \bibinfo {pages} {17--46}\BibitemShut {NoStop}%
\bibitem [{\citenamefont {Diamond}\ and\ \citenamefont
  {Boyd}(2016)}]{diamond2016cvxpy}%
  \BibitemOpen
  \bibfield  {author} {\bibinfo {author} {\bibfnamefont {S.}~\bibnamefont
  {Diamond}}\ and\ \bibinfo {author} {\bibfnamefont {S.}~\bibnamefont {Boyd}},\
  }\bibfield  {title} {\bibinfo {title} {{CVXPY}: {A} {P}ython-embedded
  modeling language for convex optimization},\ }\href@noop {} {\bibfield
  {journal} {\bibinfo  {journal} {J. Mach. Learn. Res.}\ }\textbf {\bibinfo
  {volume} {17}},\ \bibinfo {pages} {1} (\bibinfo {year} {2016})}\BibitemShut
  {NoStop}%
\bibitem [{\citenamefont {Agrawal}\ \emph {et~al.}(2018)\citenamefont
  {Agrawal}, \citenamefont {Verschueren}, \citenamefont {Diamond},\ and\
  \citenamefont {Boyd}}]{agrawal2018rewriting}%
  \BibitemOpen
  \bibfield  {author} {\bibinfo {author} {\bibfnamefont {A.}~\bibnamefont
  {Agrawal}}, \bibinfo {author} {\bibfnamefont {R.}~\bibnamefont
  {Verschueren}}, \bibinfo {author} {\bibfnamefont {S.}~\bibnamefont
  {Diamond}},\ and\ \bibinfo {author} {\bibfnamefont {S.}~\bibnamefont
  {Boyd}},\ }\bibfield  {title} {\bibinfo {title} {A rewriting system for
  convex optimization problems},\ }\href@noop {} {\bibfield  {journal}
  {\bibinfo  {journal} {J. Control Decis.}\ }\textbf {\bibinfo {volume} {5}},\
  \bibinfo {pages} {42} (\bibinfo {year} {2018})}\BibitemShut {NoStop}%
\bibitem [{\citenamefont {{MOSEK ApS}}(2021)}]{mosek}%
  \BibitemOpen
  \bibfield  {author} {\bibinfo {author} {\bibnamefont {{MOSEK ApS}}},\ }\href
  {https://docs.mosek.com/9.3/pythonapi/index.html} {\emph {\bibinfo {title}
  {The MOSEK Optimizer API for Python manual. Version 9.3.}}} (\bibinfo {year}
  {2021})\BibitemShut {NoStop}%
\bibitem [{\citenamefont {Bruzda}\ \emph {et~al.}(2009)\citenamefont {Bruzda},
  \citenamefont {Cappellini}, \citenamefont {Sommers},\ and\ \citenamefont
  {Życzkowski}}]{BRUZDA2009320}%
  \BibitemOpen
  \bibfield  {author} {\bibinfo {author} {\bibfnamefont {W.}~\bibnamefont
  {Bruzda}}, \bibinfo {author} {\bibfnamefont {V.}~\bibnamefont {Cappellini}},
  \bibinfo {author} {\bibfnamefont {H.-J.}\ \bibnamefont {Sommers}},\ and\
  \bibinfo {author} {\bibfnamefont {K.}~\bibnamefont {Życzkowski}},\
  }\bibfield  {title} {\bibinfo {title} {Random quantum operations},\ }\href
  {https://doi.org/https://doi.org/10.1016/j.physleta.2008.11.043} {\bibfield
  {journal} {\bibinfo  {journal} {Phys. Lett. A}\ }\textbf {\bibinfo {volume}
  {373}},\ \bibinfo {pages} {320} (\bibinfo {year} {2009})}\BibitemShut
  {NoStop}%
\bibitem [{\citenamefont {Johansson}\ \emph {et~al.}(2012)\citenamefont
  {Johansson}, \citenamefont {Nation},\ and\ \citenamefont
  {Nori}}]{JOHANSSON20121760}%
  \BibitemOpen
  \bibfield  {author} {\bibinfo {author} {\bibfnamefont {J.}~\bibnamefont
  {Johansson}}, \bibinfo {author} {\bibfnamefont {P.}~\bibnamefont {Nation}},\
  and\ \bibinfo {author} {\bibfnamefont {F.}~\bibnamefont {Nori}},\ }\bibfield
  {title} {\bibinfo {title} {Qutip: An open-source python framework for the
  dynamics of open quantum systems},\ }\href
  {https://doi.org/https://doi.org/10.1016/j.cpc.2012.02.021} {\bibfield
  {journal} {\bibinfo  {journal} {Comput. Phys. Commun.}\ }\textbf {\bibinfo
  {volume} {183}},\ \bibinfo {pages} {1760} (\bibinfo {year}
  {2012})}\BibitemShut {NoStop}%
\bibitem [{\citenamefont {Johansson}\ \emph {et~al.}(2013)\citenamefont
  {Johansson}, \citenamefont {Nation},\ and\ \citenamefont
  {Nori}}]{JOHANSSON20131234}%
  \BibitemOpen
  \bibfield  {author} {\bibinfo {author} {\bibfnamefont {J.}~\bibnamefont
  {Johansson}}, \bibinfo {author} {\bibfnamefont {P.}~\bibnamefont {Nation}},\
  and\ \bibinfo {author} {\bibfnamefont {F.}~\bibnamefont {Nori}},\ }\bibfield
  {title} {\bibinfo {title} {Qutip 2: A python framework for the dynamics of
  open quantum systems},\ }\href
  {https://doi.org/https://doi.org/10.1016/j.cpc.2012.11.019} {\bibfield
  {journal} {\bibinfo  {journal} {Comput. Phys. Commun.}\ }\textbf {\bibinfo
  {volume} {184}},\ \bibinfo {pages} {1234} (\bibinfo {year}
  {2013})}\BibitemShut {NoStop}%
\bibitem [{\citenamefont {Iten}\ \emph {et~al.}(2016)\citenamefont {Iten},
  \citenamefont {Colbeck}, \citenamefont {Kukuljan}, \citenamefont {Home},\
  and\ \citenamefont {Christandl}}]{Iten2016PRA}%
  \BibitemOpen
  \bibfield  {author} {\bibinfo {author} {\bibfnamefont {R.}~\bibnamefont
  {Iten}}, \bibinfo {author} {\bibfnamefont {R.}~\bibnamefont {Colbeck}},
  \bibinfo {author} {\bibfnamefont {I.}~\bibnamefont {Kukuljan}}, \bibinfo
  {author} {\bibfnamefont {J.}~\bibnamefont {Home}},\ and\ \bibinfo {author}
  {\bibfnamefont {M.}~\bibnamefont {Christandl}},\ }\bibfield  {title}
  {\bibinfo {title} {Quantum circuits for isometries},\ }\href
  {https://doi.org/10.1103/PhysRevA.93.032318} {\bibfield  {journal} {\bibinfo
  {journal} {Phys. Rev. A}\ }\textbf {\bibinfo {volume} {93}},\ \bibinfo
  {pages} {032318} (\bibinfo {year} {2016})}\BibitemShut {NoStop}%
\bibitem [{\citenamefont {\ifmmode \check{Z}\else
  \v{Z}\fi{}nidari\ifmmode~\check{c}\else \v{c}\fi{}}\ \emph
  {et~al.}(2008)\citenamefont {\ifmmode \check{Z}\else
  \v{Z}\fi{}nidari\ifmmode~\check{c}\else \v{c}\fi{}}, \citenamefont {Giraud},\
  and\ \citenamefont {Georgeot}}]{Znidaric08PRA}%
  \BibitemOpen
  \bibfield  {author} {\bibinfo {author} {\bibfnamefont {M.}~\bibnamefont
  {\ifmmode \check{Z}\else \v{Z}\fi{}nidari\ifmmode~\check{c}\else
  \v{c}\fi{}}}, \bibinfo {author} {\bibfnamefont {O.}~\bibnamefont {Giraud}},\
  and\ \bibinfo {author} {\bibfnamefont {B.}~\bibnamefont {Georgeot}},\
  }\bibfield  {title} {\bibinfo {title} {Optimal number of controlled-not gates
  to generate a three-qubit state},\ }\href
  {https://doi.org/10.1103/PhysRevA.77.032320} {\bibfield  {journal} {\bibinfo
  {journal} {Phys. Rev. A}\ }\textbf {\bibinfo {volume} {77}},\ \bibinfo
  {pages} {032320} (\bibinfo {year} {2008})}\BibitemShut {NoStop}%
\bibitem [{\citenamefont {Plesch}\ and\ \citenamefont
  {Brukner}(2011)}]{Plesch11PRA}%
  \BibitemOpen
  \bibfield  {author} {\bibinfo {author} {\bibfnamefont {M.}~\bibnamefont
  {Plesch}}\ and\ \bibinfo {author} {\bibfnamefont {{\v{C}}.}~\bibnamefont
  {Brukner}},\ }\bibfield  {title} {\bibinfo {title} {Quantum-state preparation
  with universal gate decompositions},\ }\href
  {https://doi.org/10.1103/PhysRevA.83.032302} {\bibfield  {journal} {\bibinfo
  {journal} {Phys. Rev. A}\ }\textbf {\bibinfo {volume} {83}},\ \bibinfo
  {pages} {032302} (\bibinfo {year} {2011})}\BibitemShut {NoStop}%
\bibitem [{\citenamefont {Iten}\ \emph {et~al.}(2019)\citenamefont {Iten},
  \citenamefont {Reardon-Smith}, \citenamefont {Malvetti}, \citenamefont
  {Mondada}, \citenamefont {Pauvert}, \citenamefont {Redmond}, \citenamefont
  {Kohli},\ and\ \citenamefont {Colbeck}}]{Iten19Universal}%
  \BibitemOpen
  \bibfield  {author} {\bibinfo {author} {\bibfnamefont {R.}~\bibnamefont
  {Iten}}, \bibinfo {author} {\bibfnamefont {O.}~\bibnamefont {Reardon-Smith}},
  \bibinfo {author} {\bibfnamefont {E.}~\bibnamefont {Malvetti}}, \bibinfo
  {author} {\bibfnamefont {L.}~\bibnamefont {Mondada}}, \bibinfo {author}
  {\bibfnamefont {G.}~\bibnamefont {Pauvert}}, \bibinfo {author} {\bibfnamefont
  {E.}~\bibnamefont {Redmond}}, \bibinfo {author} {\bibfnamefont {R.~S.}\
  \bibnamefont {Kohli}},\ and\ \bibinfo {author} {\bibfnamefont
  {R.}~\bibnamefont {Colbeck}},\ }\href
  {https://doi.org/10.48550/ARXIV.1904.01072} {\bibinfo {title} {Introduction
  to {U}niversal{QC}ompiler}} (\bibinfo {year} {2019}),\ \Eprint
  {https://arxiv.org/abs/1904.01072} {arXiv:1904.01072 [quant-ph]} \BibitemShut
  {NoStop}%
\end{thebibliography}
\end{document}